\newif\iffullpaper
\titlespacing*{\section}{0pt}{1pt}{1pt}
\DeclareMathOperator{\argmin}{argmin}
\newcommand{\overbar}[1]{\mkern 1.5mu\overline{\mkern-1.5mu#1\mkern-1.5mu}\mkern 1.5mu}
\newcommand{\red}[1]{{#1}}
\newcommand{\hm}{\kw{HGMatch}}
\newcommand{\sm}{subhypergraph matching\xspace}
\newcommand{\Sm}{Subhypergraph matching\xspace}
\newcommand{\cfl}{\kw{CFL}}
\newcommand{\ceci}{\kw{CECI}}
\newcommand{\daf}{\kw{DAF}}
\newcommand{\rapid}{\kw{RapidMatch}}
\newcommand{\cflh}{\kw{CFL}-\kw{H}}
\newcommand{\cecih}{\kw{CECI}-\kw{H}}
\newcommand{\dafh}{\kw{DAF}-\kw{H}}
\newcommand{\scan}{\kw{SCAN}}
\newcommand{\expand}{\kw{EXPAND}}
\newcommand{\sink}{\kw{SINK}}
\newcommand{\tscan}{$T_{\scan}$\xspace}
\newcommand{\texpand}{$T_{\expand}$\xspace}
\newcommand{\tsink}{$T_{\sink}$\xspace}
\newcommand{\HC}{\kw{HC}}
\newcommand{\MA}{\kw{MA}}
\newcommand{\CH}{\kw{CH}}
\newcommand{\CP}{\kw{CP}}
\newcommand{\SB}{\kw{SB}}
\newcommand{\HB}{\kw{HB}}
\newcommand{\WT}{\kw{WT}}
\newcommand{\TC}{\kw{TC}}
\newcommand{\SA}{\kw{SA}}
\newcommand{\AR}{\kw{AR}}
\newcommand{\JF}{\kw{JF17K}}
\newtheorem{theorem}{Theorem}[section]
\newtheorem{lemma}[theorem]{Lemma}
\theoremstyle{definition}
\newtheorem{definition}{Definition}[section]
\theoremstyle{remark}
\newtheorem{example}{Example}[section]
\theoremstyle{remark}
\newtheorem{observation}{Observation}[section]
\theoremstyle{remark}
\theoremstyle{definition}
\newtheorem*{remark}{Remark}
\newcommand{\stitle}[1]{\vspace{0.5ex} \noindent{{\bf #1}}}
\newcommand{\sstitle}[1]{\vspace{0.5ex} \noindent{\textit{ #1}}}
\newcommand{\kw}[1]{{\ensuremath {\mathsf{#1}}}\xspace}
\long\def\comment#1{}
\newcommand{\reffig}[1]{Fig.~\ref{fig:#1}}
\newcommand{\refsec}[1]{Section~\ref{sec:#1}}
\newcommand{\reftable}[1]{Table~\ref{tab:#1}}
\newcommand{\refalg}[1]{Algorithm~\ref{alg:#1}}
\newcommand{\refdef}[1]{Definiton~\ref{def:#1}}
\newcommand{\refthm}[1]{Theorem~\ref{thm:#1}}
\newcommand{\reflem}[1]{Lemma~\ref{lem:#1}}
\begin{document}
%

\iffullpaper 
\title{HGMatch: A Match-by-Hyperedge Approach for Subgraph Matching on Hypergraphs \\ (Complete Version)}
\else 
\title{HGMatch: A Match-by-Hyperedge Approach for Subgraph Matching on Hypergraphs}
\fi

\author{Zhengyi Yang$^{\ddagger}$, Wenjie Zhang$^{\ddagger}$, Xuemin Lin$^{\S}$*, Ying Zhang$^{\natural}$, Shunyang Li$^{\ddagger}$%
\vspace{1.6mm}\\
\fontsize{10}{10}
\selectfont\itshape
$^\ddagger$University of New South Wales, $^\S$Shanghai Jiao Tong University, 
$^{\natural}$University of Technology Sydney \\
\footnotesize{*Corresponding Author} \\
\fontsize{9}{9} \selectfont\ttfamily\upshape
$^{\ddagger}$\{zyang,zhangw,sli\}@cse.unsw.edu.au, $^{\S}$xuemin.lin@sjtu.edu.cn, $^{\natural}$ying.zhang@uts.edu.au\\
}


%


\maketitle

\begin{abstract}

Hypergraphs are a generalisation of graphs in which a hyperedge can connect any number of vertices.
It can describe n-ary relationships and high-order information among entities compared to conventional graphs.
In this paper, we study the fundamental problem of subgraph matching on hypergraphs (i.e. \sm).
Existing methods directly extend subgraph matching algorithms to the case of hypergraphs.
However, this approach delays hyperedge verification and underutilises the high-order information in hypergraphs, which leads to large search space and high enumeration costs.
%
Furthermore, with the growing size of hypergraphs, it is becoming hard to compute \sm sequentially. 
Thus, we propose an efficient and parallel \sm system, \hm, to handle \sm in massive hypergraphs.
We propose a novel match-by-hyperedge framework to utilise high-order information in hypergraphs and use set operations for efficient candidate generation.
%
Moreover, we develop an optimised parallel execution engine in \hm based on the dataflow model, which features a task-based scheduler and fine-grained dynamic work stealing to achieve bounded memory execution and better load balancing.
Experimental evaluation on $10$ real-world datasets shows that \hm outperforms the extended version of the state-of-the-art subgraph matching algorithms ({\footnotesize \cfl, \daf, \ceci, and \rapid}) by orders of magnitude when using a single thread, and achieves almost linear scalability when the number of threads increases.
\end{abstract}


%
\IEEEpeerreviewmaketitle

\section{Introduction}\label{sec:introduction}

Hypergraphs are a generalisation of graphs in which a hyperedge can connect any number of vertices. In contrast, an edge in conventional graphs (i.e., pairwise graphs) connects exactly two vertices. Compared to conventional graphs, hypergraphs can describe $n$-ary relationships among entities. Thus, hypergraphs are able to capture high-order interactions among multiple entities that are not directly expressible in conventional graphs.
%
%

One of the most fundamental problems in hypergraphs is \textit{\sm}. Specifically, given a query hypergraph and a data hypergraph, subhypergraph matching aims to find all subhypergraphs of the data hypergraph that are isomorphic to the query hypergraph. For example, given a sample query hypergraph $q$ and a data hypergraph $H$ in \reffig{hmatch-example}. \Sm finds two embeddings, which contains hyperedges $\{e_1,e_3,e_5\}$ and $\{e_2,e_4,e_6\}$ in $H$. 

\begin{figure}[t]
    \centering
    \begin{subfigure}[b]{.36\columnwidth}
        \centering
        \includegraphics[width=\textwidth]{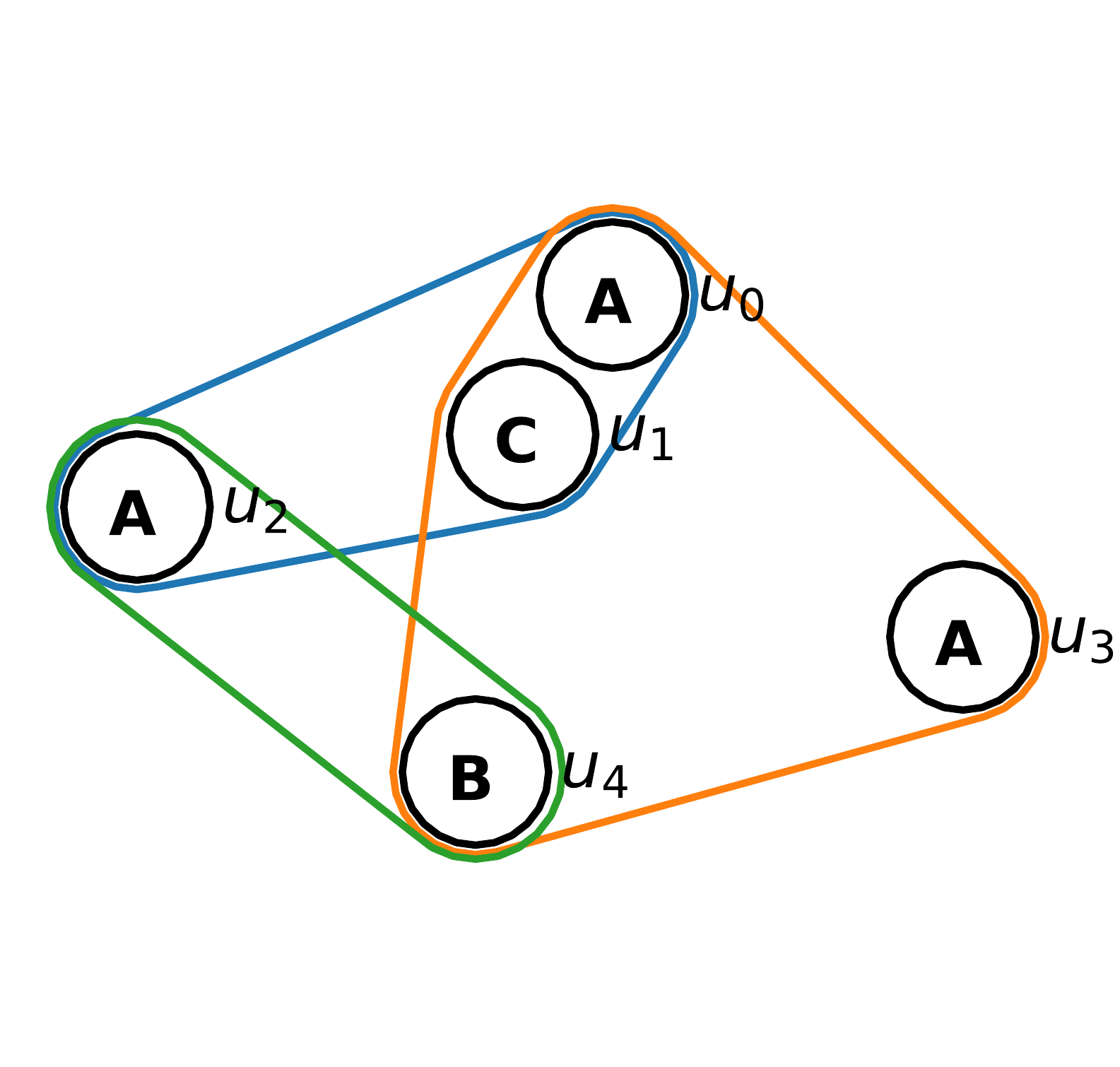}
        \caption{Query hypergraph ($q$)}
        \label{fig:query}
    \end{subfigure} %
    ~
    \begin{subfigure}[b]{.45\columnwidth}
        \centering
        \includegraphics[width=\textwidth]{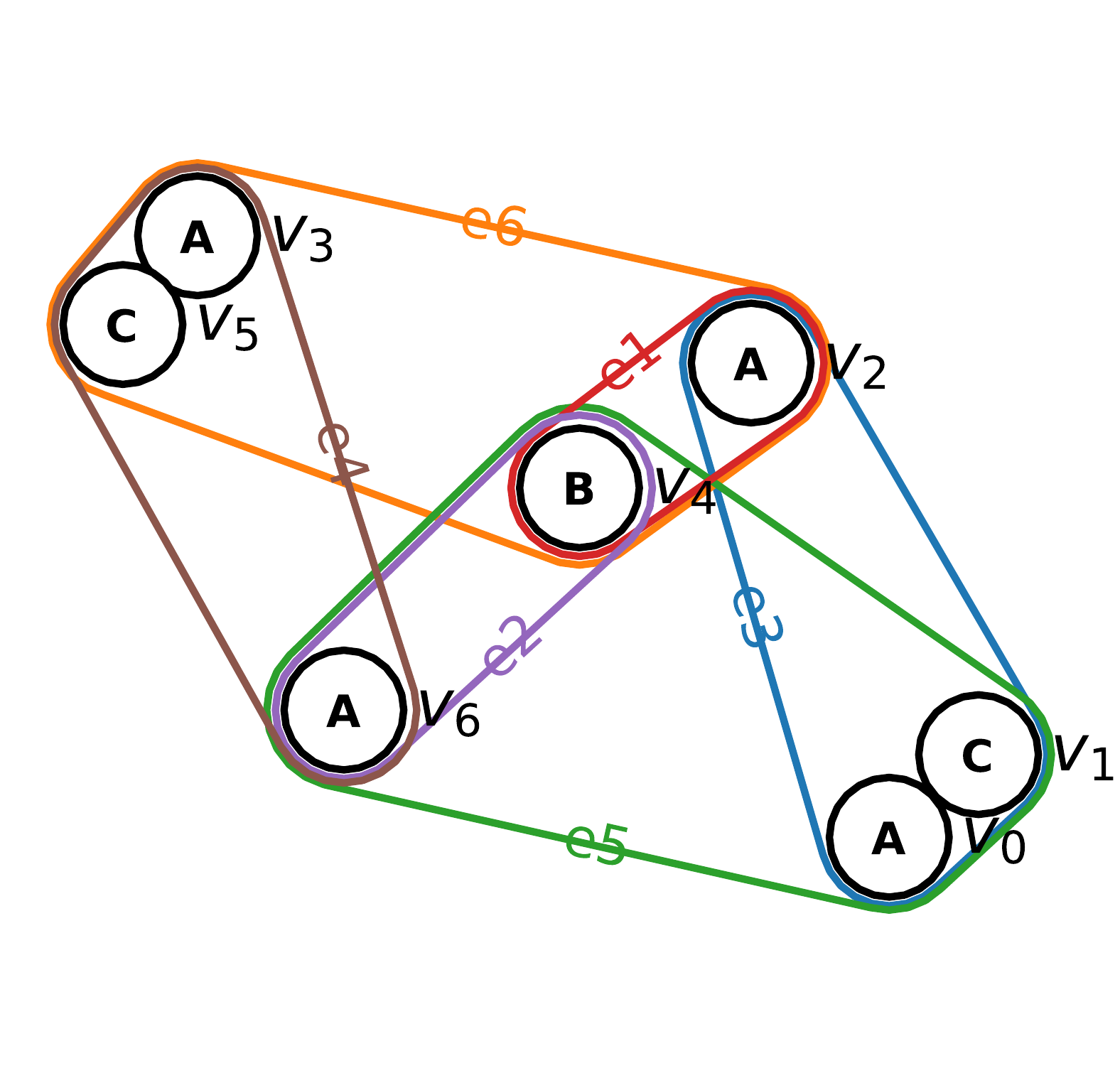}
        \caption{Data hypergraph ($H$)}
        \label{fig:data}
    \end{subfigure}
    \caption{ Example of Subhypergraph Matching. We draw a “rubber band”-like hypergraph, where convex hulls represent hyperedges and are drawn around the vertices they contain. `A', `B' and `C' represent vertex labels. $v_i$ and $e_i$ represent vertex and hyperedge IDs, respectively.}
    \label{fig:hmatch-example}
\end{figure}

\stitle{Applications} \Sm is associated with many applications in the real world. We list four representative examples in different domains. 
Other applications include object detection in computer vision \cite{img-hyper1,img-hyper2,img-hyper3}, complex pattern search in collaborative network \cite{coauthor-hyper,Yu2016}, finding user alignment in social networks \cite{hyper-app-social}, and image retrieval \cite{img-re}.

\sstitle{\underline{Mining Biological Networks.}}
Representing biological networks as graphs is a powerful approach to reveal underlying patterns and critical components from biomolecular data. However, conventional graphs do not natively capture the multi‐way relationships present among genes and proteins in biological systems \cite{bio-hyper-gene2}.
Therefore, hypergraphs are usually used to represent complex high-order relationships in bioinformatics.
For example, protein interactions can be represented as hypergraphs where proteins are vertices and complexes are hyperedges \cite{bio-hyper-ppi1,bio-hyper-ppi2}. Similarly, gene transcriptomic expression data can also be represented as hypergraphs, with each gene modelled as a hyperedge and each condition as a vertex \cite{bio-hyper-gene1,bio-hyper-gene2,bio-hyper-gene3}.
By adopting the hypergraph model, biologists can represent protein/gene complexes of their interests as query hypergraphs and find them in large biological networks to understand their interactions and roles.

\sstitle{\underline{Querying Hypergraph Databases.}}
In recent years, there emerges a number of hypergraph databases, including AtomSpace \cite{AtomSpace}, HyperGraphDB \cite{HyperGraphDB}, and TypeDB \cite{TypeDB}, which rely on \sm as a basic building block.
A hypergraph database can effectively model the multi-way relations among many real-world entities. 
Many hypergraph databases are developed and deployed, especially in the area of artificial general intelligence (AGI) \cite{AGI} (e.g., the OpenCog project \cite{opencog}).
%
%
Specifically, AGI systems require a knowledge representation database to support critical reasoning.
In a typical AGI application, knowledge is stored in the form of hypergraphs \cite{opencog-graph}, where each vertex and hyperedge represents an \textit{Atom} \cite{hodges1997shorter} with a certain type.
A pattern matcher, which performs \sm, is used to search for specific patterns in the hypergraphs.
After specifying some arrangements of atoms (i.e., a query hypergraph), the pattern matcher will find all instances of that hypergraph in the atom space (i.e. a data hypergraph).
The matched results can then be sent to a rule engine \cite{watkin2017introduction,baader1999term} for further reasoning.

\sstitle{\underline{Pattern Learning in NLP.}}
Hypergraphs are also increasingly popular in machine leaning \cite{hyper-app-learn,hyper-app-learn3,hyper-app-learn4,hyper-app-ml} and natural language processing (NLP) \cite{parsing_hyper,hyper-app-text,hyper-app-nlp}.
The authors of \cite{hyper-app-nlp} propose the concept of \textit{semantic hypergraphs} where each word is a vertex, and each valid sentence is a hyperedge. Semantic hypergraphs can be constructed by parsing large corpus using modern machine learning techniques in NLP.
During the process of pattern learning, some sentences are first selected from a given training corpus. It can be drawn at random or by any other criterion adapted to the pattern-learning task at hand. The selected sentences are inferred and transformed into a hypergraph query. \Sm is then performed in the semantic hypergraph to find matched embeddings. Finally, the embeddings are presented to humans for validation of the corresponding learning tasks. The process repeats with a human-refined query hypergraph if no valid embeddings are found.

\sstitle{\underline{\red{Q/A over Hypergraph Knowledge Base.}}}
\red{
It is observed in \cite{Wen2016OnTR} that more than 33\% of the entities participate in non-binary relations in the knowledge base Freebase \cite{Freebase}, and further observed in \cite{ijcai2020p303} that 61\% of the entities participate in non-binary relations. 
Question answering (Q/A) allows users to query real-world questions over the knowledge base. By representing the knowledge base in hypergraphs, it allows us to better express and explore the massive non-binary relations in the knowledge base, where the evaluation of queries can be performed using \sm \cite{qa-rdf}.
We present a case study of in \refsec{case_study}.
}


%

\stitle{Motivations.} 
Subgraph matching in conventional graphs has been extensively studied in the literature.
Existing algorithms of subgraph matching \cite{turbo-iso,cfl,daf,ceci,rapidmatch,jin2021fast,match-survey,graph-ql,vf2,quicksi} primarily works on better matching orders, pruning rules, index structures, and enumeration methods, to improve the efficiency.
However, \sm in hypergraphs has attracted little attention despite its emerging applications, as mentioned above.

\begin{figure}[t]
    \centering
    \includegraphics[width=0.3\textwidth]{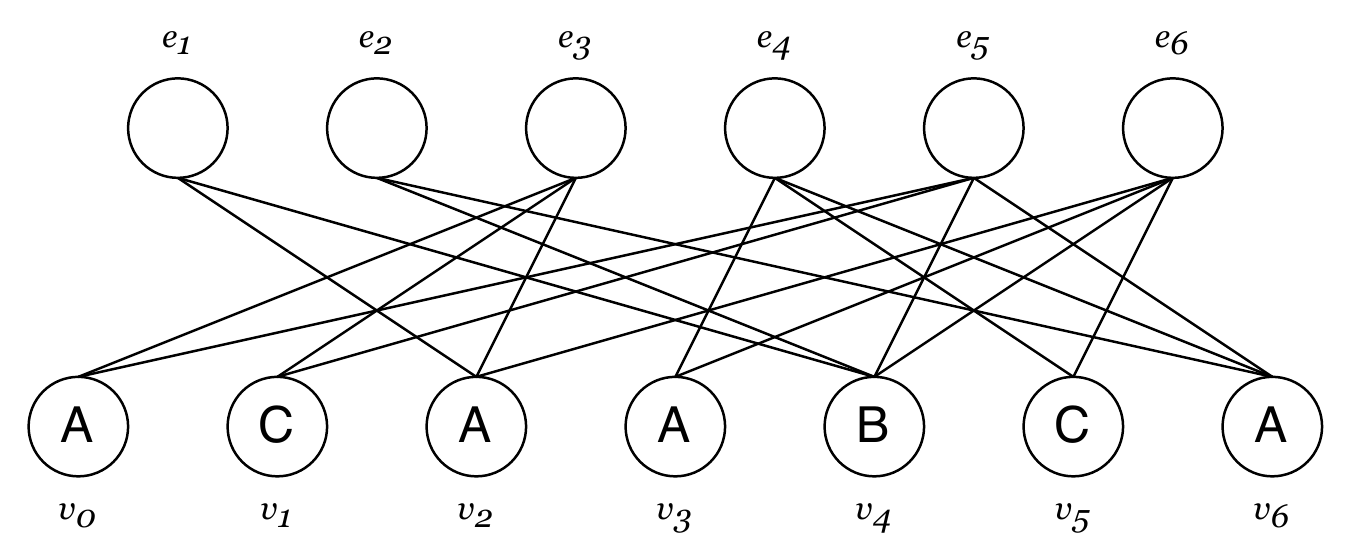}
    \caption{Converted Bipartite Graph (Data Hypergraph in \reffig{data}).}
    \label{fig:bi-graph}
\end{figure}

One straightforward approach of \sm is to convert the hypergraph to a \emph{bipartite graph} by treating the hyperedges as the vertices \cite{bio-hyper-ppi2}. 
An example bipartite graph of the data hypergraph in \reffig{data} is given in \reffig{bi-graph}, where the upper vertices refer to hyperedges in the original hypergraph, the lower vertices refer to vertices in the original hypergraph, and the edges refer to the connectivity of hyperedges.
After converting both the query and data hypergraphs into bipartite graphs, conventional subgraph matching algorithms can be applied to find embeddings of the original hypergraph.
However, this strawman approach will significantly inflate the size of the graphs. For example, a hypergraph with $2$ million vertices and $15$ million hyperedges will result in a bipartite graph with $17$ million vertices and $1$ billion edges \cite{communities-graph}.
Due to the NP-hardness \cite{np-complete} nature of subgraph matching, it is hard to compute embeddings on such inflated graphs \cite{hyperx}.

Another approach is to directly extend existing subgraph matching algorithms to the case of hypergraphs. Among the existing algorithms, \cite{sm05,sm08} extends Ullmann's backtracking algorithm \cite{ullmann}.
\cite{hyper-iso,subhymatch} also follow the same framework with more filtering rules derived from hypergraph features to improve efficiency.
As most state-of-the-art subgraph matching algorithms (e.g., \cite{cfl,daf,ceci}) follows Ullmann's backtracking framework, such an extension can be orthogonally applied to them as well. 
Specifically, they recursively expand partial embedding vertex-by-vertex by mapping a query vertex to a data vertex at each step to enumerate all results following a given matching order and backtrack when necessary.
We denote this framework as the \emph{match-by-vertex} approach.
Hyperedges are used as a verification condition in the match-by-vertex framework, just like the edges in subgraph matching.
%
However, treating hyperedges simply as verification conditions delays the hyperedge verification and underutilises the high-order information in hypergraphs, which can lead to a huge search space and large enumeration cost.

In addition, with the rapid growth of hypergraph data these days, it is becoming difficult to compute \sm on massive hypergraphs using sequential algorithms. 
However, none of the existing \sm algorithms supports parallel execution.
For example, in the real-world hypergraph of Amazon Reviews with more than $4$ million hyperedges (\AR in \reftable{datasets}), none of the existing sequential algorithms in our experiments are able to compute queries of three hyperedges within a one-hour time limit. 
%
%
Traditional backtracking framework in subgraph matching adopts the depth-first search (DFS), which is generally hard to parallelise.
%
Distributed solutions of subgraph matching \cite{twin-twig,seed,wco-join,crystaljoin,multiway-join,star-join}, on the other hand, adopt the breath-first search (BFS) in a cluster of machines for high CPU utilisation.
But this can often lead to high memory consumption, network communications, and economic cost \cite{cost}. 

Motivated by the above reasons, we study the problem of \sm to develop an efficient and parallel solution on a single machine in this paper.



\stitle{Challenges.} 
We summarise key challenges as follows.

\begin{enumerate} [leftmargin=*]
    \item \textit{\ul{How to effectively utilise high-order information in hypergraphs?}}
    Hypergraphs contain n-ary relationships in hyperedges that are typically not presented in conventional graphs. 
    Hence, it is crucial to fully utilise the high-order information in hyperedges during matching to reduce search space and speed up enumeration.
    
    \item \textit{\ul{How to efficiently enumerate all embeddings in parallel?}}
    To improve the performance of \sm on a single machine, it is important to fully utilise the ever-developing hardware (i.e., multi-core) while managing memory consumption well.
    Furthermore, the challenge arises from the power law nature of the real-world graphs \cite{power-law-1,power-law-2} to handle workload disparity among different workers while parallelising.

\end{enumerate}

\stitle{Our Solution and Contributions.} 
To address these challenges, we develop \hm, an efficient and parallel sub\underline{H}yper\underline{G}raph \underline{Match}ing engine on a single machine. Instead of matching the query hypergraph vertex-by-vertex as in the match-by-vertex framework used by existing subgraph matching and \sm algorithms, we propose a \emph{match-by-hyperedge} framework to match the query by hyperedges to fully utilise the n-ary relationships. 
%
%
%
Specifically, we made the following contributions.

\begin{enumerate} [leftmargin=*]
    \item \textit{\underline{A match-by-hyperedge framework.}} 
    We propose to match the query hypergraph by hyperedges instead of vertices. 
    \hm expands a partial embedding by one new hyperedge at a time.
    %
    In this way, \hm is able to fully utilise the high-order information in hypergraphs to reduce search space and avoid redundant computation of enumerating matchings of vertices. 
    We store the data hypergraph in multiple tables with different \emph{hyperedge signatures} (i.e., a multiset\footnote{A multiset (i.e., bag) is a set that allows for multiple instances for each of its elements.} of vertex labels contained in a hyperedge).
    A lightweight inverted hyperedge index is then built for each table to speed up the retrieve process of incident hyperedges of a given vertex. 
    By doing so, \hm is able to generate candidate hyperedges directly using set operations (i.e., difference, union and intersections), which can be implemented very efficiently on modern hardware \cite{simd_gallop,qfilter,simd-intersection,bitmap2}.
    %
    Apart from that, we use set comparison to remove false positives during enumeration, which completely avoids expensive recursive calls in traditional backtracking-based enumeration methods.

    \item \textit{\underline{A highly optimised parallel execution engine.}} 
    Thanks to the above-mentioned design, \hm does not incur any recursive calls or build any auxiliary structures during runtime, which makes it easy to be parallelised.
    %
    %
    We adopt the dataflow model \cite{dataflow,dataflow-def} for parallel execution in \hm, which has been employed in many recent subgraph matching solutions \cite{huge,patmat-exp,wco-join,graphflow-demo}.
    %
    %
    %
    To bound memory consumption while keeping a high degree of parallelism, we design a task-based scheduler in \hm.
    With the scheduler, we prove that \hm achieves a tight memory bound of $O(\overbar{a_q}\times|E(q)|^2\times|E(H)|)$ for \sm, where $\overbar{a_q}$ is the average arity (i.e., hyperedge size) of query, $|E(q)|$ and $|E(H)|$ are the number of query and data hyperedges, respectively.
    Furthermore, the dynamic work-stealing mechanism \cite{work_stealing1,work_stealing2,cilk} is employed for fine-grained load balancing.
    
    \item \textit{\underline{In-depth experiments using real-world datasets.}}
    We conducted extensive experiments on $10$ real-world datasets. Results show the efficiency and scalability of \hm.
    Comparing with the extended version of three state-of-the-art subgraph matching algorithms \cfl \cite{cfl}, \daf \cite{daf}, \ceci \cite{ceci}, and \rapid \cite{rapidmatch}, \hm achieves an average speedup of more than $5$ orders of magnitude.
    When using multi threads, \hm achieves almost linear scalability when increasing the number of threads with near-perfect load balancing.
    Besides, \hm is the only algorithm that is able to complete all queries within the time limit.
    
\end{enumerate}

\stitle{Paper Organization.} The rest of this paper is organized as follows. \refsec{related_work} discusses related work. \refsec{background} introduces problem definition and background. In \refsec{overview}, we present the workflow and hypergraph storage of \hm. We introduce our match-by-hypergraph framework in \refsec{matching} and the design of our parallel execution engine in \refsec{execution}, respectively. Experimental evaluation and case study are presented in \refsec{experiment}, followed by conclusion in \refsec{conclusion}.

\section{Related Work}\label{sec:related_work}

\stitle{Sequential Subgraph Matching.} 
In sequential subgraph matching, the study was initiated by Ullmann’s backtracking algorithm \cite{ullmann}, which recursively matches query vertices to data vertices following a given matching order and backtracks when necessary. The state-of-the-art algorithms \cite{turbo-iso,quicksi,cfl,daf,ceci,graph-ql,vf2} mostly follow the backtracking framework with different matching orders, pruning rules, auxiliary data structures, and enumeration methods. 
In particular, Turbo$_{ISO}$ \cite{turbo-iso} proposes to compress the query graph by merging equivalent query vertices with a \textit{CR} auxiliary data structure to speed up the computation. CFL \cite{cfl} proposes the core-forest-leaf decomposition to reduce redundant Cartesian products and introduces a more compact auxiliary structure \textit{CPI} to solve the exponential size of \textit{CR}. DAF \cite{daf} designs a new auxiliary structure \textit{CS} based on the directed acyclic graph (DAG) of the query graph and uses pruning by failing sets to increase the pruning power.
Note that \hm does not build any auxiliary structure during runtime, leading to low memory cost and easy parallelism.
A more recent solution, RapidMatch \cite{rapidmatch} integrates the backtracking-base exploration method with worst-case optimal join.
Surveys and experiments of representative algorithms have been conducted in \cite{comparison,match-survey}.

\stitle{Parallel Subgraph Matching.} 
PGX.ISO \cite{PGX.ISO} runs subgraph matching in breath-first search (BFS) to carry out parallel execution. As a result, it requires materialising all intermediate results at each step, which leads to exponential memory consumption. Moreover, load balancing is achieved by copying intermediate results into single global storage, which leads to further cost of copying and synchronization.
PSM \cite{PSM} proposes a generic framework for parallelising recursive backtracking-based subgraph matching algorithms. It represents the problem as a tree search in the state space and different matching algorithms as different orders in the search.
%
CECI \cite{ceci} uses embedding clusters as the auxiliary structure to better accommodate parallel processing with improved load-balancing.
EmptyHeaded \cite{empty-headed} and Graphflow \cite{graphflow} employ the multi-way join model to compute subgraph matching using join operations in parallel.
%
%
%
Due to the increasing size of graphs, distributed algorithms have also been widely studied \cite{psgl,twin-twig,seed,wco-join,patmat-exp,huge,crystaljoin,multiway-join,star-join,BENU,RADS} in recent years, which typically focus on unlabeled graphs.

\stitle{Subhypergraph Matching} 
Compared with subgraph matching, the research on subhypergraph matching algorithms is rather limited.
\cite{sm05,sm08} extend the framework of Ullmann’s backtracking algorithm to the case of hypergraphs. 
It utilizes hyperedges only for verification.
The rule for adding a new vertex is to verify whether there are corresponding hyperedges between it and the matched vertices. 
\cite{Yu2016} utilises the index-filter-verification framework in unlabeled hypergraphs to support similarity \sm.
\cite{subhymatch} works on \sm in unlabeled hypergraphs.
Since most hypergraphs in real-world applications are labelled, we only focus on exact \sm in labelled hypergraphs in this paper.
\cite{hyper-iso} proposes an incident hyperedge structure (IHS) filter for candidate vertex filtering to reduce the number of candidates as we will introduce in the next section.
In addition, none of the existing \sm algorithms supports parallel execution, which significantly limits their performance in large hypergraphs.

\stitle{Hypergraph Databases.} 
A hypergraph database is essential for building AGI \cite{AGI} systems and highly complex, large-scale knowledge representation (KR) systems.
AtomSpace \cite{AtomSpace} is an in-memory KR database for the Open Cognition (OpenCog) project \cite{opencog}. It adopts a generalised hypergraph (i.e., metagraph) model \cite{metagraph} and features a query engine as well as a rule-driven inferencing engine to perform reasoning. 
HypergraphDB \cite{HyperGraphDB} is an embedded and transactional graph database based on hypergraphs. 
However, the support of hypergraph pattern queries (i.e., \sm) is still on-progress.
TypeDB \cite{TypeDB} (formerly known as GRAKN \cite{grakn}) is a strongly-typed database with a rich and logical type system. It also embraces the hypergraph data model, which allows the users to model their domain based on logical and object-oriented principles. 


\section{Background}\label{sec:background}

In this section, we present the formal problem definition and how to extend generic subgraph matching framework to \sm.

\subsection{Problem Definition}

We focus on undirected, connected and vertex-labelled\footnote{Our techniques can be easily applied to edge-labelled hypergraphs as well by adding additional constraints of hyperedge labels.} simple hypergraphs\ in this paper.

\begin{definition}
\textbf{(Hypergraph).} A hypergraph $H$ is defined as a tuple $H=(V,E,l,\Sigma)$ where $V$ is a finite set of vertices and $E \subseteq \mathcal{P}(V)\setminus \{\emptyset\}$ is a set of non-empty subsets of $V$ called hyperedges, where $\mathcal{P}(V)$ is the power set of $V$. $\Sigma$ is the set of labels, and $l$ is a label function that assigns each vertex $v$ a label in $\Sigma$, denoted as $l_G(v)$ or $l(v)$ when the context is clear. 
\end{definition}

The number of vertices and hyperedges in $H$ is denoted as $|V(H)|$ and $|E(H)|$, respectively. 
If a vertex $v\in V(H)$ belongs to a hyperedge $e\in E(H)$ then we say that $v$ and $e$ are \textit{incident}. The collection of hyperedges incident to a vertex $v$ is denoted as $he_H(v)$ or $he(v)$ when the context is clear.
The \textit{degree} of a vertex $v$ in $H$, denoted as $d_H(v)$ or $d(v)$ when the context is clear, is the number of hyperedges that are incident to $v$, i.e., $|he_H(v)|$. 
The \textit{arity} of a hyperedge $e$, denoted by $a(e)$, is the number of vertices in $e$, i.e., the number of vertices that are incident to $e$. 
We denote $he^a(v)$ as the set of incident hyperedges with arity $a$.
The average arity of $H$ is denoted as $\overbar{a_H}= \frac{\Sigma_{e\in E(H)}{a(e)}}{|E(H)|}$, and the maximum arity is denoted as $a_{max}$.
%
Two vertices are called \textit{adjacent} if some edge contains both of them. 
Two hyperedges $e_1$ and $e_2$ are called \textit{adjacent} if $e_1\cap e_2 \neq \emptyset$. 
We use $adj_H(u)$ or simply $adj(u)$ to represent all adjacent vertices a vertex $u$. Similarly, $adj_H(e)$ or $adj(e)$ represents all adjacent edges of an edge $e$.

\begin{definition}
\textbf{(Subhypergraph).} A subhypergraph $H'$ of $H=(V,E,l,\Sigma)$ is a hypergraph $H'=(V',E',l,\Sigma)$ where $V'\subseteq V$ and $E'\subseteq E$.
\end{definition}

\begin{definition}
\textbf{(Subhypergraph Isomorphism).} Given a query hypergraph $q$ and a data hypergraph $H$, $q$ is subhypergraph isomorphism to $H$ if and only if there is an \textit{injective} mapping $f:V(q)\rightarrow V(H)$ such that, $\forall u \in V(q), l_q(u)=l_H(f(u))$, and $\forall e_q = \{u_1,u_2,\dots,u_k\}\in E(q), \exists e_H=\{f(u_1),f(u_2),\dots,f(u_k)\} \in E(H)$.
\end{definition}

We refer to each isomorphic subhypergraph as a \textit{subhypergraph isomorphism embedding} of $q$ in $H$. 
In our match-by-hyperedge framework, we simply represent the query hyperedges as $(e_{q_1},e_{q_2},\dots ,e_{q_n} )$.
The matched embedding is therefore denoted as $m=(e_{H_1},e_{H_2},\dots ,e_{H_n})$, where $e_{H_i} = \{f(u):u\in e_{q_i}\}$ for $1\leq i \leq n$.
We use the notion of $f(e_{q_i})=e_{H_i}$ in our presentation to denote matched hyperedges, and $H_m$ to denote the subgraph in $H$ constructed by all hyperedges in $m$.
We call a subhypergraph $q'$ of the query hypergraph $q$ a \textit{partial query}, and an embedding of $q'$ a \textit{partial embedding}.

\begin{example}
For example, consider the query hypergraph $q$ and data hypergraph $H$ in \reffig{hmatch-example}. By representing $q$ as $(\{u_2,u_4\},\allowbreak \{u_0,u_1,u_2\}, \allowbreak \{u_0,u_1,u_3,u_4\})$, there are two \sm embeddings of $q$ in $H$, which can be denoted as $(e_1, e_3, e_5)$ and $(e_2, e_4, e_6)$, respectively. 
Given a partial query with one single hyperedge $q'=(\{u_2,u_4\})$, its matched partial embeddings are $(e_1)$ and $(e_2)$.
\end{example}



\stitle{Problem Statement.} Given a query hypergraph $q$ and a data hypergraph $H$, we study the task of \textit{\sm} to efficiently find all subhypergraph isomorphism embeddings of $q$ in $H$ in parallel on a single machine.

\subsection{Extending Subgraph Matching to Hypergraphs} \label{sec:baseline}

\begin{algorithm}[t]
    \footnotesize
    \setstretch{0.8}
    \SetKwFunction{FnEnumerate}{Enumerate}
    \SetKwFunction{FnGenerateCandidates}{GenerateCandidates}
    
    \SetKwProg{Fn}{Procedure}{:}{}
    
    \KwIn{A query graph $q$ and a data graph $G$}
    \KwOut{All subgraph embeddings from $q$ to $G$}
    
    $\mathcal{C},\mathcal{A}\leftarrow$ generate candidate vertex sets for each query vertex and any other auxiliary data structure \;
    
    $\varphi \leftarrow$ generate a matching order\;
    
    \tcc{Recursively match query vertices to data vertices using backtracking.}
    \FnEnumerate{$q$, $G$, $\mathcal{C}$, $\mathcal{A}$, $\varphi$, $\{\}$, $1$} \;

    \Fn{\FnEnumerate{$q$, $G$, $\mathcal{C}$, $\mathcal{A}$, $\varphi$, $M$, $i$}}{
        \If{$i=|\varphi|+1$}{
            output the embedding $M$ \;
            \Return{} \;
        }
        
        $u\leftarrow$ select a query vertex to match given $\varphi$ and $M$ \;
        
        \ForEach{$v\in$ \FnGenerateCandidates{$\mathcal{C}$, $\mathcal{A}$, $M$} }{
            \If{$v$ is a valid mapping of $u$}{
                Add the mapping $(u,v)$ to $M$ \;
                \FnEnumerate{$q$, $G$, $\mathcal{C}$, $\mathcal{A}$, $\varphi$, $M$, $1+1$} \;
                Remove the mapping $(u,v)$ to $M$ \;
            }
        
        }
        
    }

  \caption{Generic Subgraph Matching Framework}
  \label{alg:subgraph_matching}
\end{algorithm}

As discussed, converting hypergraphs into bipartite graphs can dramatically increase the graph size and incur large overhead \cite{hyperx}. Therefore, a better approach for \sm is to directly extend one of the existing subgraph matching algorithms as in existing works \cite{sm05,sm08,hyper-iso}. 
In this subsection, we briefly present a \emph{generic} framework of extending a subgraph matching algorithm to \sm as our baseline.
%
As most subgraph matching algorithms follow the backtracking framework, we select to extend the backtracking-based subgraph matching framework \cite{match-survey,comparison} to \sm. Thus, our extension can be orthogonally applied to most existing subgraph matching algorithms, including \cite{quicksi,graph-ql,cfl,ceci,daf,vf2}, etc. 

\refalg{subgraph_matching} illustrates the generic subgraph matching framework \cite{comparison,match-survey}. 
In this generic framework, the first step is to generate a candidate vertex set for each query vertex and builds any auxiliary data structures (Line~1). Then, it computes a matching order of the query vertices (Line~2). Finally, it recursively enumerates all results by sequentially mapping each query vertex to data vertices and backtracking when necessary (Line~3). To validate that two vertices $u \in V(q)$ and $v \in V(G)$ can be successfully matching (Line~10), the following constraint is applied in subgraph matching.

\begin{theorem}{(Subgraph Matching Constraint).}
Given $u \in V(q)$ and $v \in V(G)$ are two vertices in the query and data graph, the assignment $f(u)=v$ is valid if and only if for all previous vertices $f(u')=v'$ in the partial embedding such that $(u,u')\in E(q)$, there is an edge $(v,v') \in E(G)$. 
\end{theorem}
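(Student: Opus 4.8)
The plan is to read this characterization as the correctness invariant maintained by the backtracking loop in \refalg{subgraph_matching}: at each step, the partial mapping $M$ restricted to the set $S \subseteq V(q)$ of already-matched query vertices is a \emph{partial embedding}, i.e., an injective, label-preserving map under which every query edge with both endpoints in $S$ is sent to an edge of $G$. The claim then pins down exactly which local test preserves this invariant when $S$ is enlarged to $S' = S \cup \{u\}$ by setting $f(u) = v$. I would prove both directions by unfolding the definition of subgraph isomorphism restricted to the matched set $S'$, treating the injectivity condition $v \notin f(S)$ and the label condition $l_q(u) = l_G(v)$ as part of the ``valid mapping'' check on Line~10 (they are inherited verbatim from the definition and require no argument), so that the only nontrivial content is edge preservation.

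For the forward direction, suppose $f(u) = v$ is valid, i.e., $M' = M \cup \{(u,v)\}$ is a partial embedding over $S'$. The query edges lying entirely in $S'$ but not entirely in $S$ are exactly those of the form $(u, u')$ with $u' \in S$; since $M'$ preserves all of them, each is mapped to an edge $(v, f(u')) \in E(G)$, which is precisely the stated condition. For the converse, assume that for every already-matched $u'$ with $(u,u') \in E(q)$ we have $(v, f(u')) \in E(G)$. I would then verify the three defining properties of $M'$ in turn: injectivity follows from injectivity of $M$ together with $v \notin f(S)$; label preservation follows from that of $M$ together with $l_q(u) = l_G(v)$; and for edge preservation, an arbitrary query edge with both endpoints in $S'$ either has both endpoints in $S$ and is handled by the invariant for $M$, or has $u$ as one endpoint and is handled by the hypothesis. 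Hence $M'$ is a partial embedding, so the assignment is valid. A short induction on the length of the matching order, with the empty mapping as trivial base case, then lifts this step lemma to the statement that the framework enumerates exactly the subgraph isomorphism embeddings.

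The one delicate point I would state explicitly is that subgraph isomorphism requires edge preservation in a single direction only (query edges must map to data edges, but a data edge between two images need not come from a query edge), which is exactly why the local test refers only to the query-side neighbourhood of $u$ and imposes no condition on the ``extra'' edges of $G$ incident to $v$. A secondary point is that the characterization is meaningful only relative to a matching order $\varphi$ that introduces one vertex at a time, so that ``all previous vertices'' is well-defined and the set of query edges to check at step $i$ is precisely the cut between $u$ and $S$; I would make this dependence on $\varphi$ explicit when invoking the claim. Finally, the same argument transfers to \sm once ``edge $(u,u') \in E(q)$'' is replaced by ``hyperedge $e_q \ni u$ with $e_q \subseteq S'$'' and ``edge $(v,v') \in E(G)$'' by ``$f(e_q) \in E(H)$'', which is exactly the place where the match-by-vertex framework defers hyperedge verification, motivating the approach developed later in the paper.
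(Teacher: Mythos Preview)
Your proposal is correct and, in fact, goes well beyond what the paper offers: the paper states this theorem without proof, treating it as the standard validity check used in the backtracking framework of \refalg{subgraph_matching} (it is recalled from the subgraph matching literature, e.g.\ \cite{ullmann,match-survey}, rather than established anew). So there is no ``paper's own proof'' to compare against.

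That said, your write-up is the right argument and is organised cleanly. The only cosmetic point is that you fold injectivity and label agreement into the Line~10 check as side conditions, whereas the theorem as stated in the paper is silent about them; this is harmless, since the paper's generic framework clearly intends those to be part of ``$v$ is a valid mapping of $u$'' as well, but you might flag explicitly that the theorem's ``if and only if'' is to be read modulo those two conditions. Your closing remark about how the same step lemma generalises to the hypergraph constraint is exactly how the paper proceeds to the next theorem, so that connection is apt.
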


To extend the framework to the case of hypergraphs, the above constraint in Line~10 needs to be generalised in the sense that not only compatibility with respect to (w.r.t.) all edges but w.r.t. all \textit{hyperedges} is checked. Therefore, the following new constraint is applied. 
Clearly, such an extension is straightforward to implement. 

\begin{theorem}{(Subhypergraph Matching Constraint).}
Given $u \in V(q)$ and $v \in V(G)$ are two vertices in the query and data hypergraph, the assignment $f(u)=v$ is valid if and only if for all previous vertices $f(u')=v'$ in the partial embedding such that $\{ u' : u'\in M \}+\{u\}\in E(q)$, there is an edge $\{ v' : v'\in M \}+\{v\} \in E(H)$. 
\end{theorem}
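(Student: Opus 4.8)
The plan is to derive the constraint directly from the definition of subhypergraph isomorphism, mirroring the reasoning behind the (ordinary) Subgraph Matching Constraint but with hyperedges in place of edges. First I would fix notation: let $M$ be the current partial embedding, viewed as an injective, label-preserving, hyperedge-preserving map $f$ from the set $U_M\subseteq V(q)$ of already-matched query vertices into $V(H)$, and let $q'$ be the partial query induced on $U_M$, i.e.\ $V(q')=U_M$ and $E(q')=\{e\in E(q): e\subseteq U_M\}$. Extending $M$ by the pair $(u,v)$ yields a candidate map $f'$ on $U_M\cup\{u\}$, and I would define $f(u)=v$ to be \emph{valid} precisely when $f'$ is again a subhypergraph isomorphism embedding of the partial query $q''$ induced on $U_M\cup\{u\}$. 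Label preservation ($l_q(u)=l_H(v)$) and injectivity ($v\notin f(U_M)$) are already enforced by candidate generation (Line~9 of \refalg{subgraph_matching}), so the content of the claim is exactly that $f'$ preserves hyperedges if and only if the displayed condition holds.

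The key observation I would establish next is that $E(q'')\setminus E(q')=\{e\in E(q): u\in e \text{ and } e\subseteq U_M\cup\{u\}\}$: the only query hyperedges that become "complete'' when $u$ is added are precisely the hyperedges incident to $u$ whose remaining vertices all lie in $U_M$ — equivalently, the hyperedges of the form $U' + \{u\}\in E(q)$ with $U'\subseteq U_M$. For the $(\Rightarrow)$ direction, if $f(u)=v$ is valid then $f'$ preserves every hyperedge of $E(q'')$, hence in particular each $U' + \{u\}$ maps to $f(U') + \{v\}=\{v':v'\in M\} + \{v\}\in E(H)$, which is the stated condition. For the $(\Leftarrow)$ direction, $f'$ restricted to $U_M$ equals $f$, which already preserves every hyperedge in $E(q')$; the displayed condition supplies exactly the required image hyperedges for the new members $E(q'')\setminus E(q')$; and since $E(q'')=E(q')\cup(E(q'')\setminus E(q'))$ this shows $f'$ preserves all of $E(q'')$, so $f(u)=v$ is valid. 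Correctness of the extended enumeration procedure as a whole then follows by a routine induction on the position $i$ in the matching order $\varphi$, invoking this constraint at each step.

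The part I would be most careful about is the bookkeeping separating the "new'' obligations from the "inherited'' ones. One must argue that no previously satisfied hyperedge constraint can be broken by adding $(u,v)$ — immediate, since $f'$ extends $f$ and data hyperedges are never removed — and, conversely, that every newly completed query hyperedge necessarily contains $u$, which is what makes testing only the hyperedges through $u$ \emph{sufficient} and not merely necessary. A secondary point worth stating explicitly is the role of injectivity of $f$: it guarantees that $\{v':v'\in M\} + \{v\}$ has the same cardinality (arity) as $U' + \{u\}$, so that a data hyperedge of the correct size can in fact realise the match; this is exactly why candidate generation must also exclude already-used data vertices. Beyond these two points the argument is purely a matter of unfolding definitions, consistent with the text's remark that such an extension is straightforward to implement.
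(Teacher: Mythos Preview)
The paper does not actually prove this statement: it is presented as the hypergraph analogue of the ordinary subgraph matching constraint, with only the remark that ``such an extension is straightforward to implement,'' and no \texttt{proof} environment follows the theorem. Your proposal is correct and supplies precisely the details the paper elides---the decomposition $E(q'')=E(q')\cup(E(q'')\setminus E(q'))$, the observation that the newly completed hyperedges are exactly those through $u$ with all other vertices already matched, and the two directions of the equivalence---so there is nothing substantive to compare; your write-up simply makes rigorous what the paper asserts as obvious.
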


\stitle{IHS Filter.}
To further improve the candidate generation (i.e., \kw{GenerateCandidates} function in Line~9) in the generic matching framework for hypergraphs, we add and implement the incident hyperedge structure (\textit{IHS}) filter proposed in existing work \cite{hyper-iso} to our extension when generating candidate vertex set in our implementation.
In \textit{IHS} filter, specifically, a data vertex $v$ is included in the candidate set of the query vertex $u$ if the following four conditions are met:

\begin{itemize}
    \item \textit{Degree and label}: $l(u)=l(v)$ and $d(u)\leq d(v)$
    \item \textit{Number of adjacent nodes}: $|adj(u)|\leq |adj(v)|$
    \item \textit{Arity containment}: $\forall a, |he^a(u)|\leq |he^a(v)|$
    \item \textit{Hyperedge labels}: $\exists e_1,e_2, \forall \sigma, |e_1(\sigma)| = |e_2(\sigma)|$, where $e_1\in he^a(u)$, $e_2\in he^a(v)$ and $e(\sigma)$ is a set of vertices having label $\sigma$ in the hyperedge $e$
\end{itemize}

\begin{remark}
It is worth noting that \cite{hyper-iso} introduces the \textit{IHS} filter only on the subgraph matching algorithm Turbo$_{ISO}$ \cite{turbo-iso}, whereas in our generic extension the filter can be applied for all backtracking-based subgraph matching algorithm, including more recent algorithms such as CFL \cite{cfl}, DAF \cite{daf} and CECI \cite{ceci}. These algorithms with the \textit{IHS} filter can yield better performance than the original method proposed in \cite{hyper-iso}.
So in our experiments, we use the extended versions of CFL, DAF and CECI as our baseline methods.
\end{remark}

\section{\hm Overview}\label{sec:overview}

In this section, we introduce the basic workflow of \hm followed by the data hypergraph storage mechanism in \hm.

\subsection{Overall Workflow}

\begin{figure}[t]
    \centering
    \centerline{\includegraphics[width=.9\columnwidth]{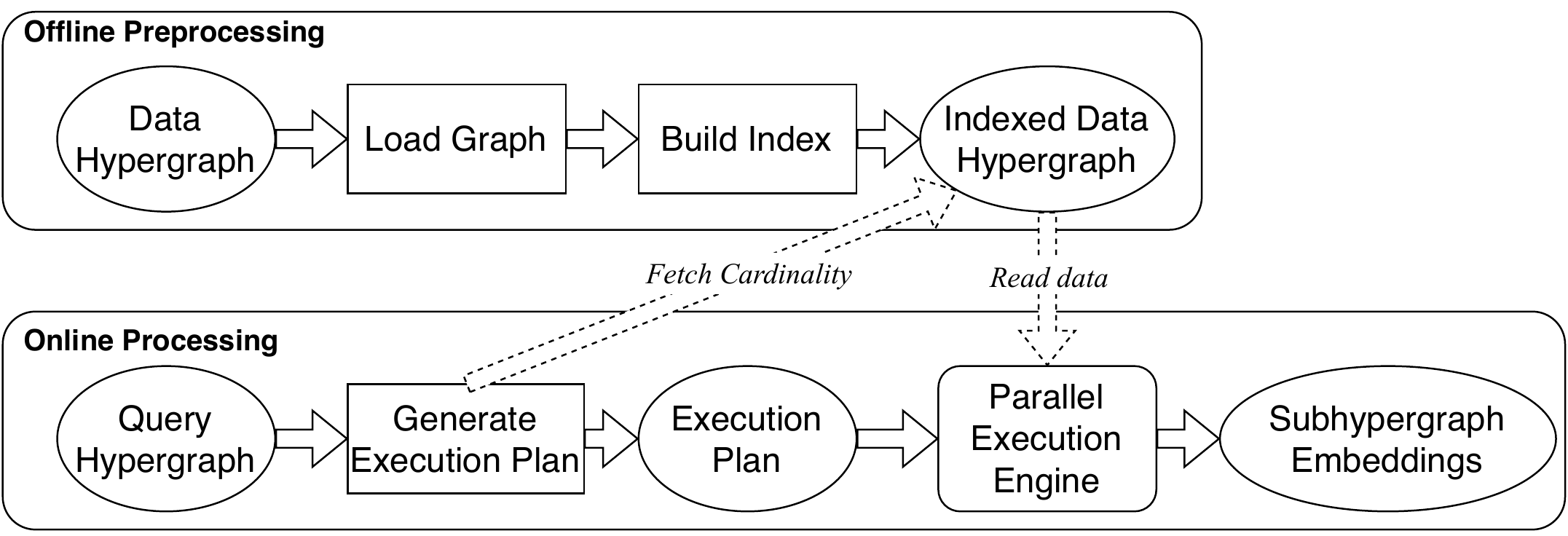}}
    \caption{ \hm Framework Overview. The solid arrow represents the workflow and the dotted arrow shows the interaction between data and processing steps.}
    \label{fig:overview}
\end{figure}

The workflow overview of \hm is illustrated in \reffig{overview}. Specifically, two main stages are \textit{offline data hypergraph preprocessing} and \textit{online query processing}. 

In the offline data hypergraph preprocessing stage, the first step is to load the data hypergraph from the source (e.g., text files) and construct the data hypergraph structure (\refsec{data_structure}). Once the data hypergraph data structure is constructed, \hm builds a lightweight inverted hyperedge index to boost the retrieve process of all incident hyperedges of a given vertex (\refsec{inverted_index}). At the end of prepossessing, an indexed data hypergraph is created. Note that \hm does not build any auxiliary data at runtime, the indexed data hypergraph is created only once offline and it is lightweight, as will be discussed later. 

In the online query processing stage, \hm receives a query hypergraph as its input. Then, the query hypergraph is sent to the plan generator to generate an execution plan. The plan generator fetches cardinality information for the indexed data hypergraph to select a better matching order. The generated execution plan is then input into \hm's parallel execution engine. The execution engine accesses the indexed data hypergraph and executes the given plan to compute all subhypergraph embeddings in parallel.


\subsection{Data Hypergraph Storage} \label{sec:data_structure}

In \hm, we store the data hypergraph in multiple \emph{hyperedge tables}, where each hyperedge table has a unique hyperedge signature. We define the concept of hyperedge signature as follows.

\begin{definition}
\textbf{(Hyperedge Signature).} The signature of a hyperedge $e$, denoted as $\mathcal{S}(e)$, is a \textit{multiset} of all vertex labels contained in $e$, i.e., $\mathcal{S}(e) = multiset\{l(v): v\in e\}$.
\end{definition}

We denote $he(v,s)$ as the set of incident hyperedges with signature $s$. 
\hm stores data hyperedges with different hyperedge signatures in separated hyperedge tables denoted as \textit{partitions}. As a result, to search the candidate hyperedges of a query hyperedge $e_q$, \hm only needs to scan the partition with the signature $\mathcal{S}(e_q)$, rather than scanning the whole hypergraph.

\begin{table}[t]
    \footnotesize
    \centering
    \begin{subtable}{.38\columnwidth}
    \centering
    \caption{Partition $1$}
    \scalebox{0.9}{
    \begin{tabular}{|cc|}
    \hline
    \multicolumn{2}{|c|}{$\mathcal{S}(e) = \{A,B\}$}                         \\ \hline
    \multicolumn{1}{|c|}{\multirow{2}{*}{$E$}} & $e_1 = \{v_2,v_4\}$         \\ \cline{2-2} 
    \multicolumn{1}{|c|}{}                     & $e_2 = \{v_4,v_6\}$         \\ \hline
    \multicolumn{1}{|c|}{\multirow{3}{*}{$I$}} & $v_2 \rightarrow [e_1]$     \\ \cline{2-2} 
    \multicolumn{1}{|c|}{}                     & $v_4 \rightarrow [e_1,e_2]$ \\ \cline{2-2} 
    \multicolumn{1}{|c|}{}                     & $v_6 \rightarrow [e_2]$     \\ \hline
    \end{tabular}
    }
    \end{subtable}
    \begin{subtable}{.53\columnwidth}
    \centering
    \caption{Partition $2$}
    \scalebox{0.9}{
    \begin{tabular}{|cc|}
    \hline
    \multicolumn{2}{|c|}{$\mathcal{S}(e) = \{A,A,C\}$}                           \\ \hline
    \multicolumn{1}{|c|}{\multirow{2}{*}{$E$}} & $e_3 = \{v_0,v_1,v_2\}$         \\ \cline{2-2} 
    \multicolumn{1}{|c|}{}                     & $e_4 = \{v_3,v_5,v_6\}$         \\ \hline
    \multicolumn{1}{|c|}{\multirow{2}{*}{$I$}} & $v_0,v_1,v_2 \rightarrow [e_3]$ \\ \cline{2-2} 
    \multicolumn{1}{|c|}{}                     & $v_3,v_5,v_6 \rightarrow [e_4]$ \\ \hline
    \end{tabular}
    }
    \end{subtable}
    \\
    \begin{subtable}{\columnwidth}
    \centering
    \caption{Partition $3$}
    \scalebox{0.9}{
    \begin{tabular}{|cc|}
    \hline
    \multicolumn{2}{|c|}{$\mathcal{S}(e) = \{A,A,B,C\}$}                         \\ \hline
    \multicolumn{1}{|c|}{\multirow{2}{*}{$E$}} & $e_5 = \{v_0,v_1,v_4,v_6\}$     \\ \cline{2-2} 
    \multicolumn{1}{|c|}{}                     & $e_6= \{v_2,v_3,v_4,v_5\}$      \\ \hline
    \multicolumn{1}{|c|}{\multirow{3}{*}{$I$}} & $v_0,v_1,v_6 \rightarrow [e_5]$ \\ \cline{2-2} 
    \multicolumn{1}{|c|}{}                     & $v_4 \rightarrow [e_5,e_6]$     \\ \cline{2-2} 
    \multicolumn{1}{|c|}{}                     & $v_2,v_3,v_5 \rightarrow [e_6]$ \\ \hline
    \end{tabular}
    }
    \end{subtable}
    \caption{ Example Data Layout of the Hypergraph in \reffig{data}. The header of each table represents the signature of all its hyperedges. $E$ represents hyperedges and $I$ represents inverted hyperedge index.}
    \label{tab:edge_lists}
\end{table}

\begin{example}
The partitioned hyperedge tables of the data graph $H$ in \reffig{data} are shown in \reftable{edge_lists}.
For the given data hypergraph, \hm constructs three partitions having signatures $\{A,B\}$, $\{A,A,C\}$ and $\{A,A,B,C\}$, respectively. 
\end{example}

\stitle{Size Analysis.} 
The proposed hypergraph data structure in \hm brings only a very small overhead of an additional signature header for each partition, which is no larger than the size of all hyperedges (i.e., all hyperedges have unique signatures in the worst case). 
Thus, the total size of storing all hyperedges in \hm is $O(\overbar{a_H}\times |E(H)|)$.

\subsection{Inverted Hyperedge Index} \label{sec:inverted_index}

In subgraph matching in conventional graphs, it is essential to access all connected edges (i.e., neighbours) of a given vertex. In hypergraphs, similarly, it is often demanded to get all incident hyperedges of a given vertex. %
Given a hyperedge table, it requires a linear scan to complete such an operation which can be time-consuming for large hypergraphs. To further speed up this process of finding all incident hyperedges (with a certain signature) of a given vertex, we adopt the common technique of inverted index \cite{bitmap1,bitmap2} to build a lightweight \textit{inverted hyperedge index} for each hyperedge table in \hm. 

\begin{example}
The inverted hyperedge index of the data graph $H$ in \reffig{data} is also shown in \reftable{edge_lists}. It maps each vertex in a hyperedge list to a \textit{posting list} of hyperedge IDs of all its incident hyperedges in the hyperedge list (in ascending order). 
\end{example}

\stitle{Size Analysis.} 
The inverted hyperedge index in \hm is also lightweight. 
For each hyperedge, its hyperedge ID will appear in the posting list of all the vertices it contains.
Therefore, each hyperedge $e$ takes additionally $O(a(e))$ space.
The total size of the inverted edge index is also $O(\overbar{a_H}\times |E(H)|)$.

\section{Match-by-Hyperedge Framework}\label{sec:matching}

In \hm, we propose a \textit{match-by-hyperedge} framework for efficient \sm enumeration. By using hyperedges as the minimal matching sub-structure, \hm is able to fully utilise the higher order information in hypergraphs to generate candidates and avoid redundant computation. 

The framework is illustrated in \refalg{hgmatch}. Given the query hypergraph $q$ and the data hypergraph $G$, we first compute a \textit{matching order} of the query hyperedges (Line~1), which is defined in \refdef{matching_order}.

\begin{definition} \label{def:matching_order}
\textbf{(Matching Order).} 
A matching order in \hm, denoted as $\varphi$, is a permutation of $E(q)$. $\varphi[i]$ is the $i$th hyperedge in $\varphi$ and $\varphi[i:j]$ is the set of hyperedges from index $i$ to $j$ ($1\leq i\leq j \leq |\varphi|$).
\end{definition}

After that, we initialize the set of embeddings $M$ as the hyperedges that match the first query hyperedge in $\varphi$, namely $\varphi[0]$ (Line~2-3). Then, for the remaining query hyperedges in $\varphi$, \hm iteratively expands each partial embedding by one matched hyperedges in parallel until all hyperedges are matched (Line~4-12). When expanding a partial embedding $m$, we first generate a candidate set $C(e_q)$ for the query hyperedge $e_q$ that is currently being matched (Line~7). Then, a validation is performed to filter out false positives (Line~9).

Note that we present the algorithmic framework of \hm (i.e., \refalg{hgmatch}) in the form of a bread-first search (BFS) for the ease of presentation, whereas the actual parallel enumeration is scheduled using \hm's task-based scheduler (\refsec{scheduler}).

\begin{algorithm}[t]
    \footnotesize
    \setstretch{0.8}
    \SetKw{parallel}{parallel}
    
    \SetKwFunction{FnCompOrder}{ComputeMatchingOrder}
    \SetKwFunction{FnGenCand}{GenerateHyperedgeCandidates}
    \SetKwFunction{FnVerify}{IsValidEmbedding}
    
    \KwIn{A query hypergraph $q$ and a data hypergraph $H$}
    \KwOut{All embeddings from $q$ to $H$}
    
    \tcc{Compute a matching order of query hyperedges.}
    $\varphi \leftarrow$ \FnCompOrder{$q$, $H$} \;
    
    \tcc{Enumerate embeddings of the 1st query hyperedge.}
    $e_1\leftarrow $ the 1st query hyperedge in $\varphi$ \;
    $M\leftarrow$ all matched hyperedges of $e_1$ in $H$ \;
    
    \tcc{Iteratively match the remaining query hyperedges.}
    \ForEach{$e_q \in (\varphi - e_1)$}{
    
        $M'\leftarrow \{\}$ \;
        
        \parallel \ForEach{$m\in M$}{
            $C(e_q)\leftarrow$ \FnGenCand{$q$, $H$, $\varphi$, $e_q$, $m$} \;
            
            \ForEach{$c\in C(e_q)$}{
                $m'\leftarrow m+c$ \;
                \If{\FnVerify{$q'$, $m'$} where $q'$ is the partial query}{
                    $M'\leftarrow M'+{m'}$ \;
                }
            }
        }
        $M\leftarrow M'$ \;
    }
    
    \Return{$M$} \;
    
  \caption{\hm Framework}
  \label{alg:hgmatch}
\end{algorithm}

\subsection{Matching Order} \label{sec:matching_order}

\hm can work with any connected matching order.
We compute the matching order of query hyperedges using the cardinality information fetched from the metadata of hyperedge tables to match \emph{infrequent} and \emph{highly connected} hyperedge as early as possible. 

\begin{definition}
\textbf{(Hyperedge Cardinality).}
The cardinality of a query hyperedge $e_q$ in a data hypergraph $H$, denoted as $Card(e_q,H)$, is $|\{e: e\in E(H), \mathcal{S}(e) = \mathcal{S}(e_q) \}|$.
\end{definition}

\begin{algorithm}[t]
    \footnotesize
    \setstretch{0.8}
    
    \KwIn{A query hypergraph $q$ and a data hypergraph $H$}
    \KwOut{A matching order $\varphi$}
    
    \tcc{Find the hyper with smallest cardinality as a starting hyperedge.}
    $e_1\leftarrow \argmin_{e} Cardy(e, H)$, where $e\in E(q)$ \;
    
    $\varphi\leftarrow (e_1)$ \;
    
    \tcc{Iteratively add connected hyperedges}
    \While{$|\varphi| \neq |E(q)|$}{
        $e_q\leftarrow \argmin_e \frac{Card(e, H)}{V_{\varphi}\cap e}$, where $e \in E(q)-\varphi$ and $V_{\varphi}\cap e \neq \emptyset$ \;

        $\varphi\leftarrow \varphi + {e_q}$ \;
    }
    
    \Return{$\varphi$} \;
    
  \caption{\kw{ComputeMatchingOrder}}
  \label{alg:matching_ord}
\end{algorithm}

The algorithm for computing a matching order is given in \refalg{matching_ord}.
It starts with the query hyperedge $e_1$ that has the minimal cardinality in the data graph (Line~1-2).
Then, we iterate over all remaining edges in the query graph and add the query hyperedge with small cardinality and a highly connected number of vertices with the existing partial query (Line~4).

\sstitle{\red{Complexity.}} 
\red{
Since $Card(e,H)$ can be directly accessed from the hyperedge tables (i.e. the number of rows in a table) in $O(1)$, the time complexity of \refalg{matching_ord} is $O(|E(q)|^2)$.
}

\subsection{Candidates Generation} \label{sec:cand_gen}

In this subsection, we introduce how \hm generates candidates of a given query hyperedge. We first present some observations used in \hm to prune candidates. Then we discuss how \hm generates these candidates efficiently using set operations.

Suppose that we are computing the candidates data hyperedge of a query hyperedge $e_q = \varphi[i]$, for a partial embedding $m$, where the current partial query is $q'$ and the partial query after matching $e_q$ is $q$. We generate and prune the candidates based on the following observations.

\begin{observation} \label{observation_1}
\textit{(Hyperedge Signature).} 
The matched hyperedge must have the same hyperedge signature as the query signature, i.e., 
$\mathcal{S}(f(e_q)) = \mathcal{S}(e_q)$.
\end{observation}

\begin{observation} \label{observation_2}
\textit{(Hyperedge Adjacency).} 
For each adjacent hyperedge $e$ of $e_q$ in $E(q)$, $f(e_q)$ must be adjacent to $f(e)$, i.e.,
$\forall e\in E(q), e\in adj(e_q), \exists v\in f(e), v\in f(e_q)$.
\end{observation}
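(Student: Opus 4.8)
The plan is to obtain Observation~\ref{observation_2} as an immediate consequence of the definition of subhypergraph isomorphism, viewed as a necessary condition that \hm exploits for pruning. First I would unfold the hypothesis: saying that $e \in adj(e_q)$ in $E(q)$ means, by the definition of adjacent hyperedges, that $e \cap e_q \neq \emptyset$ in the query hypergraph $q$. Hence I may fix a query vertex $u$ with $u \in e \cap e_q$.

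Next I would invoke how matched hyperedges are defined in the match-by-hyperedge framework: the embedding under consideration is backed by an injective vertex map $f : V(q) \to V(H)$, and for every query hyperedge $e_{q_i}$ we have $f(e_{q_i}) = \{f(u') : u' \in e_{q_i}\}$. Applying this to $e_q$ and to its query-neighbour $e$, the single vertex $u$ yields $f(u) \in f(e_q)$ and $f(u) \in f(e)$ simultaneously. Taking $v := f(u)$ then gives a vertex $v \in f(e)$ with $v \in f(e_q)$, which is exactly the claim; equivalently $f(e_q) \cap f(e) \neq \emptyset$, i.e. the two image hyperedges are adjacent in $H$. Note that injectivity of $f$ is not actually needed for this direction — only that $f$ sends the vertex set of a query hyperedge onto the vertex set of its matched data hyperedge.

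The one point deserving care — and the closest thing to an obstacle — is that during enumeration \hm only holds a \emph{partial} embedding $m$, so I should make sure the symbol $f$ is well defined on $e \cup e_q$ before arguing as above. This is precisely what the validation step \kw{IsValidEmbedding} of \refalg{hgmatch} guarantees: when a candidate $c$ is appended to $m$ for $e_q$, the vertices of $c$ lying on the overlap with an already-matched hyperedge $e$ are forced to agree with their images under $m$, so $m + c$ still arises from a consistent partial vertex map. I would therefore state the observation for complete embeddings, where $f$ is globally defined by the isomorphism, and remark that it holds verbatim for every validated partial embedding; this is what licenses \hm to discard any candidate hyperedge $c$ for $e_q$ that fails to be adjacent to the image $f(e)$ of some already-matched query-neighbour $e$ of $e_q$.
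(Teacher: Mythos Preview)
The paper states Observation~\ref{observation_2} without proof --- it is listed as a self-evident pruning condition alongside Observations~\ref{observation_1}, \ref{observation_3}, and \ref{observation_4}, and no argument is supplied. Your justification is correct and is essentially the only natural one: adjacency of $e$ and $e_q$ in $q$ yields a shared query vertex $u$, and then $v := f(u)$ lies in both $f(e)$ and $f(e_q)$ by the definition of matched hyperedges. Your extra paragraph about partial embeddings and the role of \kw{IsValidEmbedding} is more care than the paper itself takes, but it is accurate and does not conflict with anything in the text.
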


\begin{observation} \label{observation_3}
\textit{(Hyperedge Non-Adjacency).} 
For each non-adjacent hyperedge $e$ of $e_q$ in $E(q)$, $f(e_q)$ must not be adjacent to $f(e)$, i.e.,
$\forall e\in E(q), e\not\in adj(e_q), \not\exists v\in f(e), v\in f(e_q)$.
\end{observation}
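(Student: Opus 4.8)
The plan is to derive Observation~\ref{observation_3} — and, by the same template, Observations~\ref{observation_1} and~\ref{observation_2} — directly from the two defining clauses of a subhypergraph isomorphism: that the underlying mapping $f$ is label preserving ($l_q(u)=l_H(f(u))$ for every matched query vertex $u$) and that $f$ is \emph{injective}. First I would fix any (partial) embedding $m$ of the (partial) query that contains both $e_q$ and the hyperedge $e$ in question, and let $f$ be its underlying injective mapping, so that $f(e_q)=\{f(u):u\in e_q\}$ and $f(e)=\{f(u):u\in e\}$ are the matched data hyperedges and $f$ is injective on the union of all matched query vertices, in particular on $e\cup e_q$.

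For Observation~\ref{observation_3} I would then argue by contradiction. Assume $e$ and $e_q$ are non-adjacent in the query, i.e.\ $e\cap e_q=\emptyset$, but their images are adjacent in the data hypergraph, i.e.\ there is a data vertex $w\in f(e)\cap f(e_q)$. Unfolding the two set images, there exist $u\in e$ and $u'\in e_q$ with $f(u)=w=f(u')$; injectivity of $f$ then forces $u=u'$, so this common vertex lies in $e\cap e_q$, contradicting $e\cap e_q=\emptyset$. Hence no such $w$ exists, i.e.\ $f(e_q)$ is not adjacent to $f(e)$, which is precisely $\not\exists v\in f(e), v\in f(e_q)$.

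The single point that needs care — and the closest thing to an obstacle — is checking that $f$ is indeed defined and injective on all of $e\cup e_q$; this holds in the setting where the observation is used, because there both $e$ and $e_q$ have already been matched in $m$ (equivalently, they belong to the partial query of which $m$ is an embedding), so the injectivity guaranteed by the definition of subhypergraph isomorphism applies verbatim. Observation~\ref{observation_2} then follows symmetrically: if $u\in e\cap e_q$ then $f(u)\in f(e)\cap f(e_q)$, so the images are adjacent. Observation~\ref{observation_1} follows from label preservation applied entry-by-entry to the multiset: $\mathcal{S}(f(e_q))=multiset\{l_H(f(u)):u\in e_q\}$, which by $l_H(f(u))=l_q(u)$ equals $multiset\{l_q(u):u\in e_q\}=\mathcal{S}(e_q)$.
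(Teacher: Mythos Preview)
Your argument is correct. The paper itself offers no proof of Observation~\ref{observation_3} (nor of Observations~\ref{observation_1} and~\ref{observation_2}); these are stated as self-evident necessary conditions on any valid match and are used directly as pruning rules in Algorithm~3. Your derivation from injectivity of $f$ is exactly the intended justification and makes explicit what the paper leaves implicit.

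One small contextual remark: in the paper's setup $e_q$ is the hyperedge \emph{currently being matched}, so strictly speaking $f(e_q)$ is not yet fixed in the partial embedding $m$; the observation is a constraint that any \emph{valid} choice of $f(e_q)$ must satisfy. Your phrasing ``both $e$ and $e_q$ have already been matched in $m$'' is therefore slightly off relative to how the paper uses the observation, but the fix is trivial---you are really arguing about the extended embedding $m' = m + f(e_q)$, and your proof goes through verbatim once you say so. This does not affect correctness.
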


\begin{observation} \label{observation_4}
\textit{(Labels and Degree of Incident Vertices).} 
Suppose $e\in adj(e_q)$, if $u\in e$ and $u\in e_q$, then $\exists v\in f(e), v\in f(e_q),l_H(v)=l_q(u), d_{H_m}(v)=d_{q'}(u)$.
\end{observation}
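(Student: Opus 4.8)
The plan is to read the statement as the assertion that the displayed condition is \emph{necessary} for a data hyperedge $c$ to be a legitimate image of $e_q$, so that any $c$ failing it may be discarded during candidate generation. Accordingly, I would fix a valid partial embedding $m$ of $q'$, given by the injective label-preserving map $f$, and suppose $c\in E(H)$ extends $m$ to a valid partial embedding $m'=m+c$ of $q=q'+e_q$ via a map $f'$ that agrees with $f$ on $V(q')$ and sends $e_q$ to $c$ (such an $f'$ exists exactly when $m'$ survives the verification step, which in particular forces $\mathcal{S}(c)=\mathcal{S}(e_q)$ as in Observation~\ref{observation_1}). Given an already matched hyperedge $e\in adj(e_q)$, i.e.\ $e\in E(q')$, and a shared vertex $u\in e\cap e_q$, I claim the witness is $v:=f'(u)$; since $u\in e\subseteq V(q')$ we have $v=f'(u)=f(u)$, so $v$ is determined by $m$ alone and does not depend on $c$.

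First I would clear the three ``local'' conjuncts by unwinding definitions: $u\in e$ gives $v=f(u)\in f(e)$; $u\in e_q$ together with $f'(e_q)=c$ gives $v=f'(u)\in c=f(e_q)$; and since $f$ is label-preserving on $V(q')$, $l_H(v)=l_q(u)$. (This already subsumes Observation~\ref{observation_2}, as it exhibits a common vertex of $f(e)$ and $f(e_q)$.)

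The only conjunct that needs real work is the degree equality $d_{H_m}(v)=d_{q'}(u)$, and the crux is the auxiliary fact that the restriction of $f$ to $V(q')$ is a hypergraph isomorphism of $q'$ onto $H_m$. To establish this I would: (i) note that, by the definition of $H_m$, its hyperedge set is exactly $\{f(e'):e'\in E(q')\}$ and its vertex set is $\bigcup_{e'\in E(q')} f(e') = f(V(q'))$, so $f$ is onto on both vertices and hyperedges; (ii) recall $f$ is injective on $V(q')$ by the definition of an embedding; and (iii) deduce that $f$ is injective on hyperedges as well, since $f(e_1)=f(e_2)$ for $e_1,e_2\in E(q')$ forces $e_1=e_2$ (using vertex-injectivity and $|f(e_i)|=|e_i|$). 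Then $f$ is an incidence-preserving bijection between $q'$ and $H_m$, hence restricts to a bijection between the hyperedges of $q'$ containing $u$ and the hyperedges of $H_m$ containing $f(u)=v$; counting the two sides gives $d_{q'}(u)=d_{H_m}(v)$. Assembling the four facts proves the observation, and contrapositively licenses using it to prune candidates.

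I expect the only mildly delicate point to be step (iii) --- the passage from vertex-injectivity of an embedding to hyperedge-injectivity, together with the clean identification of $E(H_m)$ with the $f$-image of $E(q')$; everything else is definition chasing. A small thing worth stating carefully is that $q'$, $H_m$, $d_{q'}$ and $d_{H_m}$ all exclude the freshly matched hyperedge $e_q$ (resp.\ its image $c$), which is precisely what makes the degree condition checkable \emph{before} committing to $c$.
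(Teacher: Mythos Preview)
Your proposal is correct. The paper, however, does not supply any proof of Observation~\ref{observation_4}: it is stated as a self-evident ``observation'' alongside Observations~\ref{observation_1}--\ref{observation_3} and is immediately used to justify the filter in Line~5 of \refalg{cand_gen}. So there is nothing in the paper to compare against beyond the bare statement.

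Your argument is more careful than anything the paper offers. In particular, the point you flag as ``mildly delicate'' --- that $f$ restricts to a hypergraph isomorphism $q'\to H_m$, so degrees transfer exactly --- is genuinely the heart of why the \emph{equality} $d_{H_m}(v)=d_{q'}(u)$ (rather than a mere inequality) holds, and the paper never spells this out. Your remark that $q'$, $H_m$ and the associated degrees exclude the fresh hyperedge $e_q$ and its image is also exactly the reason the observation is usable as a \emph{precommitment} filter; again the paper leaves this implicit. One small wording nit: your assumption that $e\in E(q')$ is not literally part of the observation's hypothesis, but it is how the observation is actually invoked (the loop at Line~3 of \refalg{cand_gen} iterates only over matched adjacent hyperedges), so your reading is the intended one.
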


Based on the four observations above, the procedure of candidate generation in \hm is given in \refalg{cand_gen}.
The algorithm firstly obtains a set of vertices in the partial embedding $m$ that must not be incident to the current hyperedge being matched (Line~1). At Line~2, we initialise an empty set of sets of potential hyperedge candidates.
Then, for each hyperedge $e$ in the current partial query that is adjacent to $e_q$, we iterate each query vertex $u$ in $e$ that is also incident to $e_q$ (Line~3-4).
For each vertex $u$, we found a set $V_{incdt}$ of all vertices in $f(e)$ that can be possibly matched to $u$ using Observation \ref{observation_2}, \ref{observation_3}. and \ref{observation_4} (Line~5).
Given $V_{incdt}$, the candidate hyperedges must be incident to at least one vertex in $V_{incdt}$, having the same hyperedge signature as $e_q$ (Observation \ref{observation_1}).
Therefore, we add the union of hyperedges incident to $v\in V_{incdt}$ with the hyperedge signature of $\mathcal{S}(e_q)$ as an element to $C'$ (Line~6).
Lastly, the final candidate set is computed by taking the intersection of all elements (i.e., each element is a set of hyperedges) in $C'$ (Line~7). This is because $f(e_q)$ must be incident to each vertex $u$ in Line~4.



Note that \hm employs set difference, union and intersection in candidate generation, which can be implemented very efficiently on modern hardware \cite{simd_gallop,qfilter,simd-intersection,bitmap2}.
With the help of \hm's inverted hyperedge index, $he(v,\allowbreak \mathcal{S}(e_q))$ can be quickly fetched in a constant time from the index (Line~6),  followed by a direct set intersection (Line~7).
%
%
%
%



\begin{algorithm}[t]
    \footnotesize
    \setstretch{0.8}
    
    \KwIn{A query hypergraph $q$, a data hypergraph $H$, a matching order $\varphi$, a query hyperedge $e_q$ , a partial embedding $m$}
    \KwOut{All candidates hyperedges of $e_q$ in $H$}
    
    \tcc{All vertices in $m$ that are definitely non-incident to $f(e_q)$}
    $V_{n\_incdt} = \{v: v\in f(e), f(e)\in m, e\in E(q), e\not\in adj(q)\}$ \;
    
    \tcc{Init an empty set of sets of hyperedges.}
    $C' = \{\}$ 
    
    \tcc{Find a set of candidates for each vertex that the matched data hyperedge may be incident to}
    \ForEach{$e\in adj(e_q), f(e)\in m$}{
        \ForEach{$u\in e,u\in e_q$}{
            $V_{incdt}\leftarrow \{v: v\in f(e) - V_{n\_incdt},l_H(v)=l_q(u),d_{H_m}(v)=d_{q'}(u)\}$, where $q'$ is the partial query\;
            \tcc{append a set of candidate hyperedges to $C'$}
            $C'\leftarrow C' + \{\bigcup_{v\in V_{incdt}}he(v,\mathcal{S}(e_q))$\} \;
        }
    }
    
    \tcc{Intersecting all sets of candidates to obtain the final candidates}
    $C(e_q)\leftarrow \bigcap_{E_{i}\in C'} E_{i}$ \;
    
    \Return{$C(e_q)$} \;
    
  \caption{\kw{GenerateHyperedgeCandidates}}
  \label{alg:cand_gen}
\end{algorithm}



\sstitle{\red{Complexity.}} 
\red{
In the worst case, assuming that $f(e_q)$ is incident to all previously matched vertices, where each vertex can be contained in $O(E(H))$ hyperedges in the data hypergraph. The time complexity of \refalg{cand_gen} is $O(|V(q)|\times |E(H)|)$.
}

\begin{example}
We give an example for the query and data hypergraph given in \reffig{hmatch-example}.
Suppose that the matching order is $\varphi = (\{u_2,u_4\},\allowbreak \{u_0,u_1,u_2\}, \allowbreak \{u_0,u_1,u_3,u_4\})$ and $m=(e_1=\{v_2,v_4\},e_3=\{v_0,v_1,v_2\})$. To compute $C(\{u_0,u_1,u_3,u_4\})$, the candidate hyperedge must be incident to $v_0,v_1\in e_3$ and $v_4\in e_1$. Thus, by accessing the inverted hyperedge index in the partition with signature $s=\mathcal{S}(\{u_0,u_1,u_3,u_4\})$, namely partition $3$ in \reftable{edge_lists}, we compute the candidates as $he(v_0,s)\cap he(v_1,s)\cap he(v_4,s) = \{e_5\}$.

\end{example}

\begin{remark}
More filtering rules can be added to \hm. However, we found in our experiments that \refalg{cand_gen} yields an extremely low false positive rate, as demonstrated in our experiments. 
Thus, we do not include more filtering rules to trade off between filtering time and false positive rate.
\end{remark}

\subsection{Embedding Validation} \label{sec:emb_vali}

The candidates generation in \hm may produce false positives. Hence we perform an embedding validation to remove false positives. Suppose the embedding to be verified is $m'$ and the corresponding partial query is $q'$, and we denote the last hyperedge in $m'$ and the corresponding match hyperedge in $q'$ as $e_{m'}$ and $e_{q'}$, respectively. We first use the following observations to validate.

\begin{observation} \label{val_rule_1}
\textit{(Number of Vertices). $H_{m'}$ and $q'$ must have the same number of vertices, i.e., $|V(H_{m'})|=|V(q')|$.}
\end{observation}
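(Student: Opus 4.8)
The plan is to prove the necessity direction directly from the definition of subhypergraph isomorphism: if $m'$ corresponds to a genuine embedding of the partial query $q'$, then it is witnessed by an \emph{injective} vertex mapping $f:V(q')\to V(H)$, and the size equality drops out of injectivity together with the fact that both vertex sets are unions of the (image) hyperedges.

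First I would fix notation, writing the partial query as $q' = (e_{q_1},\dots,e_{q_i})$ and the embedding as $m' = (e_{H_1},\dots,e_{H_i})$ with $f(e_{q_j}) = e_{H_j}$ for each $j$. From the definitions, $V(q') = \bigcup_{j=1}^{i} e_{q_j}$, and since $H_{m'}$ is the subhypergraph of $H$ induced by the hyperedges of $m'$, also $V(H_{m'}) = \bigcup_{j=1}^{i} e_{H_j}$. The key observation is then $V(H_{m'}) = \bigcup_{j} f(e_{q_j}) = f\!\left(\bigcup_{j} e_{q_j}\right) = f(V(q'))$, where the middle equality is just that a function commutes with unions. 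Finally, since $f$ is injective on $V(q')$, $|f(V(q'))| = |V(q')|$, which yields $|V(H_{m'})| = |V(q')|$. I would also record the one-line contrapositive that turns this into a usable validation rule: because $V(H_{m'})$ is always an image of $V(q')$ under $f$, we always have $|V(H_{m'})| \le |V(q')|$, so a strict inequality $|V(H_{m'})| < |V(q')|$ certifies that $m'$ is a false positive and can be discarded.

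There is no real obstacle here; the only point worth being careful about in the write-up is the status of the mapping $f$ when the rule is applied. Candidate generation (\refalg{cand_gen}) only guarantees that the newly appended hyperedge has the correct signature and the required incidences with $m$, so it does not by itself pin down a vertex-level correspondence. The content of the observation is precisely that no consistent injective extension can exist once the vertex counts disagree — two distinct query vertices would have to collapse onto the same data vertex — so I would state the result explicitly as a necessary condition under the hypothesis that $m'$ extends to a valid embedding, rather than as a claim about the partially-specified mapping maintained during enumeration.
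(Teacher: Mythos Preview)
Your argument is correct: the injectivity of $f$ together with $V(H_{m'})=f(V(q'))$ immediately gives the equality, and your contrapositive remark is exactly how the rule is used. The paper does not actually prove this observation---it simply declares it ``trivial to validate''---so you have written out in full what the paper leaves implicit; there is nothing to compare approaches against.
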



Obviously, Observation \ref{val_rule_1} is trivial to validate. 
Moreover, to ensure the candidate hyperedge is valid, we can explicitly compute a mapping of vertices between the query and data hyperedge stated in the following lemma.

\begin{lemma}{(Mapping of Vertices).} \label{lem:vertex_mapping}
There exist a \textit{bijective} mapping $f':e_{q'}\leftarrow e_{m'}$ such that, $\forall u\in e_{q'}, l_q(u)=l_H(f'(u))$, and $\forall u\in e, \forall e\in E(q'), f'(u)\in f(e)$.
\end{lemma}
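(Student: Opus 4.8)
The plan is to establish the existence of the bijective vertex mapping $f'$ by combining the structural constraints already enforced during candidate generation (\refalg{cand_gen}) with the number-of-vertices equality from Observation~\ref{val_rule_1}. First I would recall the setup: $e_{m'}$ is the newly added data hyperedge matching the query hyperedge $e_{q'}$, and by Observation~\ref{observation_1} the two hyperedges share the same signature, so $\mathcal{S}(e_{m'}) = \mathcal{S}(e_{q'})$, which means they contain exactly the same multiset of vertex labels and in particular $a(e_{q'}) = a(e_{m'})$. The task is to match vertices \emph{consistently} with the already-fixed partial mapping $f$ on the vertices of $e_{q'}$ that also appear in earlier query hyperedges.

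The key steps, in order: (1) Partition $e_{q'}$ into the set $U_{old}$ of vertices already mapped by $f$ (those incident to some earlier-matched hyperedge $e \in E(q')$ with $f(e) \in m$) and the set $U_{new}$ of vertices appearing for the first time in $e_{q'}$. (2) For each $u \in U_{old}$, define $f'(u) := f(u)$; I must argue this lands inside $e_{m'}$. This is exactly what Line~5--6 of \refalg{cand_gen} guarantees: $e_{m'} \in C(e_q)$ was obtained by intersecting, over every adjacent earlier hyperedge $e$ and every shared vertex $u \in e \cap e_q$, the sets $\bigcup_{v \in V_{incdt}} he(v, \mathcal{S}(e_q))$; hence $e_{m'}$ is incident to at least one vertex $v \in V_{incdt}$ for that $u$, and $V_{incdt}$ contains only vertices $v$ with $l_H(v) = l_q(u)$ and $d_{H_m}(v) = d_{q'}(u)$. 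Since $f$ already maps $u$ to a definite data vertex and that vertex must be the one in $f(e) \cap e_{m'}$ with the matching label and degree, $f(u) \in e_{m'}$, giving label-consistency for free. (3) For vertices in $U_{new}$: after removing $\{f'(u) : u \in U_{old}\}$ from $e_{m'}$, the remaining data vertices and the remaining query vertices $U_{new}$ have, by the signature equality of step~(1) together with the label-consistency of step~(2), the same multiset of labels; so I can extend $f'$ to a label-preserving bijection on $U_{new}$ by matching vertices of equal label arbitrarily. (4) Check that $f'$ is a bijection $e_{q'} \to e_{m'}$: injectivity on $U_{old}$ follows from injectivity of the global $f$; the $U_{new}$ part is a bijection by construction; the two images are disjoint by construction; and surjectivity follows because $|e_{q'}| = |e_{m'}|$ (equal arities) and $f'$ is injective onto a subset of $e_{m'}$ of equal size. (5) Verify the adjacency condition "$\forall u \in e, \forall e \in E(q'), f'(u) \in f(e)$": for $u \in U_{old}$ this is the statement that $f'(u) = f(u)$ is incident to each matched hyperedge $f(e)$ it should be, which holds because $f$ is a valid partial embedding and $f'$ agrees with $f$ there; for $u \in U_{new}$ the only hyperedge of $E(q')$ containing $u$ is $e_{q'}$ itself, and $f'(u) \in e_{m'} = f(e_{q'})$ by construction.

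The main obstacle is step~(2)/(3): showing that the partial mapping $f$, restricted to the shared vertices $U_{old}$, is simultaneously \emph{consistent} across all the earlier hyperedges adjacent to $e_{q'}$ — i.e. that a single vertex $u$ shared by two earlier hyperedges gets sent by $f$ to the \emph{same} data vertex lying in $e_{m'}$. This is where the degree constraint $d_{H_m}(v) = d_{q'}(u)$ in \refalg{cand_gen} does real work: it prevents $V_{incdt}$ from admitting a "wrong" vertex of the right label, and combined with Observation~\ref{val_rule_1} (which rules out two distinct query vertices collapsing onto one data vertex inside the new hyperedge) it pins down the mapping uniquely on $U_{old}$. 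I would isolate this as the crux of the argument and handle the rest as routine bookkeeping.
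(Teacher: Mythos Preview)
Your proposal rests on a misreading of the lemma's role. In the paper, \reflem{vertex_mapping} is not a theorem that follows from candidate generation plus Observation~\ref{val_rule_1}; it is the \emph{definition} of what it means for the newly added hyperedge $e_{m'}$ to be a valid extension. The surrounding text makes this explicit: ``to ensure the candidate hyperedge is valid, we can explicitly compute a mapping of vertices\dots stated in the following lemma,'' and immediately afterwards the paper discusses \emph{how to check} this condition (first by naive backtracking, then by the vertex-profile criterion of \refthm{validation}). The paper offers no proof of the lemma because none is intended; indeed, \reffig{validate-example} is precisely a counterexample where a candidate survives generation and the vertex-count check yet \emph{fails} the mapping condition. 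So what you are attempting to prove is false in general.

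There is also a structural error inside your argument. In step~(2) you write ``define $f'(u) := f(u)$'' for $u \in U_{old}$, but in the match-by-hyperedge framework $f$ is a map on \emph{hyperedges}, not on vertices: the partial embedding $m$ records only which data hyperedge each query hyperedge maps to, and no vertex-level assignment is maintained during matching. That is the whole point of the framework. Consequently there is no well-defined value $f(u)$ to appeal to, and your claim that ``$f$ already maps $u$ to a definite data vertex'' is unfounded. The constraints in \refalg{cand_gen} (Lines~5--6) guarantee only that $e_{m'}$ is incident to \emph{some} vertex of the right label and degree in each adjacent matched hyperedge, not that these witnesses are mutually consistent across hyperedges or with any fixed vertex bijection. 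Pinning down a consistent vertex mapping is exactly the work that \refthm{validation} does via vertex profiles, and it cannot be shortcut by the filters alone.
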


A naive way to check \reflem{vertex_mapping} is to compute the vertex mapping using backtracking. However, it introduces many overheads and can be time-consuming similar to the match-by-vertex approach. 
Instead, we introduce the new concept of \textit{vertex profile} for each vertex in the newly added hyperedge and compare the multisets of vertex profiles for all vertices in the partial query and embedding to verify they are valid.

\begin{definition}
\textit{(Vertex Profile.)} A vertex profile of a vertex $v$ in $H_{m'}$ if a tuple $\mathcal{P}(v)=(l(v),he_{H_{m'}}(v))$.
\end{definition}

A vertex profile represents the label and all incident hyperedges of a vertex. For a vertex $u$ in a partial query $q'$ and a partial embedding $m'$, we use the notion of $\mathcal{P}_{q'\rightarrow m'}(u) = (l_{q'}(u), \{f(e):e\in he_{q'}(u)\})$ to denote the profile of its label and the set of all matched hyperedges in the partial embedding of its incident hyperedges. We overload the term $\mathcal{P}(u)$ to denote $\mathcal{P}_{q'\rightarrow m'}(u)$ when context is clear.
Using the definition of vertex profile, we have the following theorem.

\begin{theorem} \label{thm:validation}
\textit{(Equivalence of Vertex Profile.)} 
Given a partial query $q'$, a corresponding partial embedding $m'$, with the last hyperedge being match denoted as $e_{m'}$ and $e_{q'}$.
The partial embedding is valid if and only if the two \textit{multisets} $\{\mathcal{P}_{q'\rightarrow m'}(u) : u\in e_{q'} \}$ and $\{\mathcal{P}(v) : u\in e_{m'} \}$ are equal.
\end{theorem}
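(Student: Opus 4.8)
The plan is to prove the two directions of the biconditional separately, building on \reflem{vertex_mapping}, which already guarantees (when the embedding is valid) the existence of a bijection $f':e_{q'}\to e_{m'}$ that is label-preserving and that respects the matched incident hyperedges. The key observation tying this to vertex profiles is that such an $f'$ is exactly a profile-preserving bijection: $\mathcal{P}(v)$ records the label of $v$ together with the set of matched hyperedges incident to $v$ in $H_{m'}$, so the constraint ``$\forall e\in he_{q'}(u), f'(u)\in f(e)$'' plus injectivity/surjectivity of $f'$ on the newly added hyperedge says precisely that $f'$ carries $\mathcal{P}_{q'\to m'}(u)$ to $\mathcal{P}(f'(u))$. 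Hence a profile-preserving bijection on the vertices of the last hyperedge exists if and only if the two multisets $\{\mathcal{P}_{q'\to m'}(u):u\in e_{q'}\}$ and $\{\mathcal{P}(v):v\in e_{m'}\}$ coincide.

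First I would prove the forward direction: assume $m'$ is a valid partial embedding. By \reflem{vertex_mapping} there is a bijection $f':e_{q'}\to e_{m'}$ with $l_q(u)=l_H(f'(u))$ and $f'(u)\in f(e)$ for every $e\in he_{q'}(u)$. I would then argue that for each $u\in e_{q'}$ we actually have $\{f(e):e\in he_{q'}(u)\}=he_{H_{m'}}(f'(u))$: the inclusion ``$\subseteq$'' is immediate from $f'(u)\in f(e)$; for ``$\supseteq$'', any matched hyperedge $f(e)\in m'$ incident to $f'(u)$ must, by the validity of the global vertex mapping $f$ restricted to $H_{m'}$, be the image of some query hyperedge incident to $u$. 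Therefore $\mathcal{P}_{q'\to m'}(u)=\mathcal{P}(f'(u))$ for all $u$, and since $f'$ is a bijection the two multisets are equal. For the backward direction, suppose the two multisets are equal. Then there is a bijection $g:e_{q'}\to e_{m'}$ with $\mathcal{P}_{q'\to m'}(u)=\mathcal{P}(g(u))$; I would define the candidate vertex mapping on the new vertices by $g$ (consistent with the already-fixed mapping on previously matched vertices, since a vertex shared between $e_{q'}$ and an earlier hyperedge has its image forced by the incident matched hyperedges recorded in its profile — this consistency point needs a short argument). Equality of profiles gives label preservation and, for every query hyperedge $e'\in E(q')$ with $u\in e'$, that $f(e')$ is incident to $g(u)$; running this over all vertices of $e'$ shows $f(e')\subseteq\{g(u):u\in e'\}$, and by Observation \ref{val_rule_1} together with a counting/arity argument ($|e'|=|f(e')|$ since signatures match) this inclusion is an equality, so the reconstructed mapping is a genuine subhypergraph isomorphism on $q'$, i.e.\ $m'$ is valid.

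The main obstacle I anticipate is the \emph{consistency} step in the backward direction: the bijection $g$ extracted from multiset equality is a priori only a matching on the vertex set of the last hyperedge, and one must check it agrees with the mapping $f$ already committed on vertices that $e_{q'}$ shares with earlier hyperedges of $q'$. The way to handle this is to note that such a shared vertex $u$ has, among the entries of its profile $\mathcal{P}_{q'\to m'}(u)$, the matched images of those earlier incident hyperedges, which already pin down a unique data vertex (the common vertex of those matched data hyperedges); since $g$ must send $u$ to a vertex with the same profile, and profiles within a single hyperedge are distinguishable enough to localize the vertex (here I would invoke that two vertices in $e_{m'}$ with identical incident-hyperedge sets in $H_{m'}$ are interchangeable, so any such choice is equally valid), the extension is well-defined. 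I would also make explicit the mild subtlety that vertex profiles need not be pairwise distinct within a hyperedge, so one genuinely works with multiset equality rather than set equality, and the bijection $g$ is chosen rather than canonical — but every choice yields a valid embedding, which is all that is required. The remaining steps (label checks, the arity/counting equality $f(e')=\{g(u):u\in e'\}$) are routine given Observations \ref{observation_1}--\ref{observation_4} and \ref{val_rule_1}.
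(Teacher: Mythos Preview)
Your approach is essentially the same as the paper's: both hinge on \reflem{vertex_mapping} and the observation that a profile-preserving bijection on the vertices of the last hyperedge is exactly what the lemma demands. The paper's own proof is considerably terser than yours --- it argues only the backward implication, in two sentences, by noting that multiset equality lets one map each query vertex to a data vertex with the same profile and thereby obtain a bijection satisfying \reflem{vertex_mapping}; it does not spell out the forward direction, nor does it address the consistency issue you flag (that the bijection $g$ extracted from multiset equality must agree with the already-committed mapping on vertices shared with earlier hyperedges). Your interchangeability argument for vertices with identical profiles and the arity/counting step fill in details the paper simply leaves implicit, so your proposal is a strict elaboration of the paper's sketch rather than a different route.
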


\begin{proof}
When $\{\mathcal{P}_{q'\rightarrow m'}(u) : u\in e_{q'} \}$ and $\{\mathcal{P}(v) : u\in e_{m'} \}$ are equal, there must exist a bijective mapping that satisfies \reflem{vertex_mapping}. This can be done by mapping each query vertex to a data vertex with the same vertex profile. Since the two multisets are equal, we can definitely construct at least one such bijective mapping.
\end{proof}

Employing the concept of vertex profile, \hm avoids backtracking searches. 
The equivalence of two multisets can be easily compared using hash-based implementation. 
%
%
Combining Observation \ref{val_rule_1} and \refthm{validation}, \hm derives its validation process as illustrated in \refalg{emb_vali}.

\begin{algorithm}[t]
    \footnotesize
    \setstretch{0.8}
    
    \KwIn{A partial query $q'$, a partial embedding $m'$ }
    \KwOut{If $m'$ is valid w.r.t. $q'$}
    
    \tcc{Verify the numbers of vertices are equal.}
    \If{$|V(q')|!=|V(m')|$}{
        \Return False \; 
    }
    
    \tcc{Construct the multiset for $q'$.}
    $S_{q'} = multiset()$ \;
    \ForEach{$u\in e_{q'}$}{
        $S_{q'}\leftarrow S_{q'} + \mathcal{P}_{q'\rightarrow m'}(u)$ \;
    }
    
    \tcc{Construct the multiset for $m'$.}
    $S_{m'} = multiset()$ \;
    \ForEach{$v\in e_{m'}$}{
        $S_{m'}\leftarrow S_{m'} + \mathcal{P}(v)$ \;
    }
    
    \tcc{Return whether two multisets are equal or not.}
    \Return{$S_{q'}=S_{m'}$} \;
    
  \caption{\kw{IsValidEmbedding}}
  \label{alg:emb_vali}
\end{algorithm}


\begin{figure}[h]
    \centering
    \begin{subfigure}[b]{.38\columnwidth}
        \centering
        \includegraphics[width=0.95\textwidth]{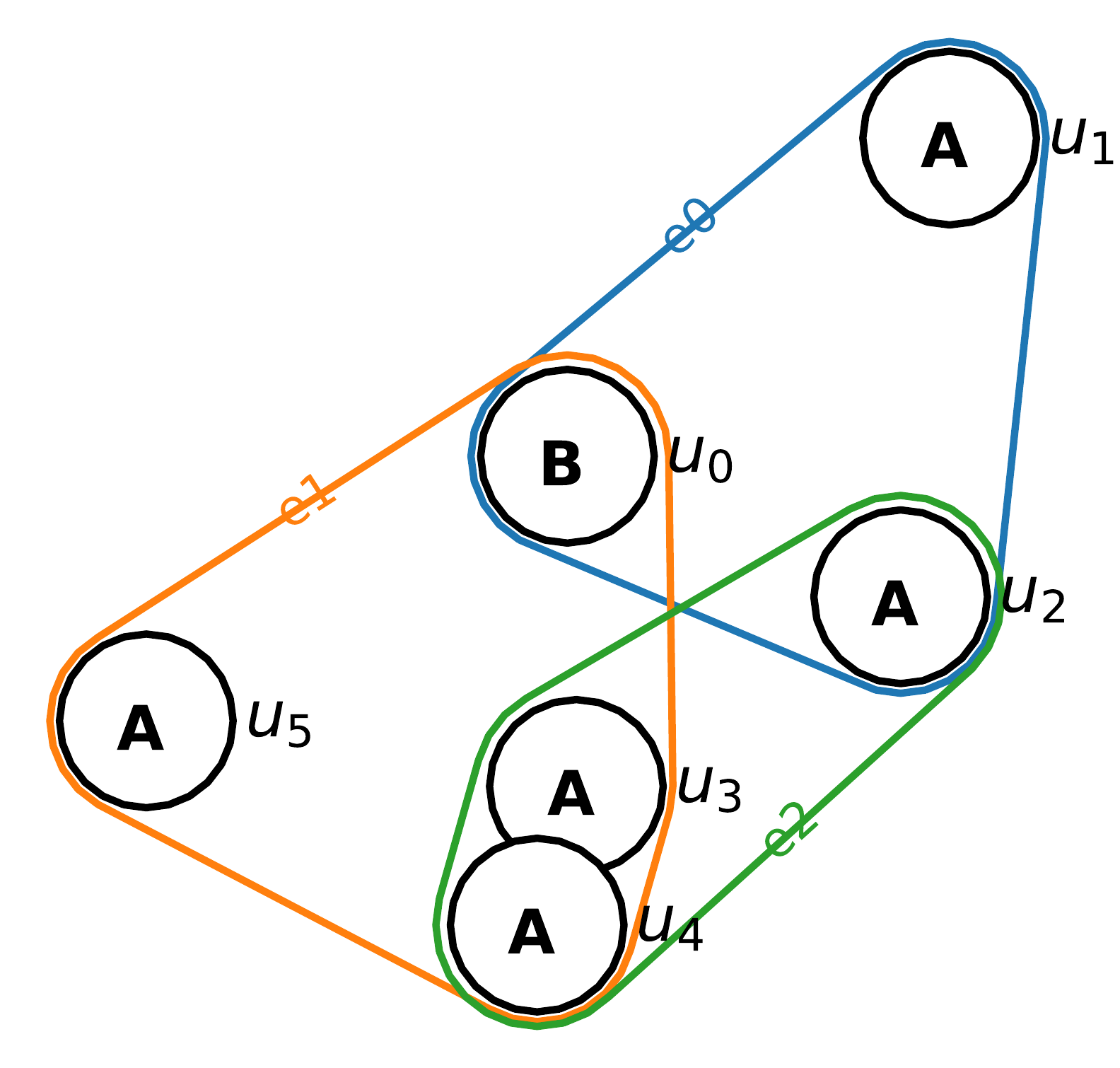}
        \caption{Partial Query ($q'$)}
        \label{fig:partial_query}
    \end{subfigure} %
    ~
    \begin{subfigure}[b]{.4\columnwidth}
        \centering
        \includegraphics[width=\textwidth]{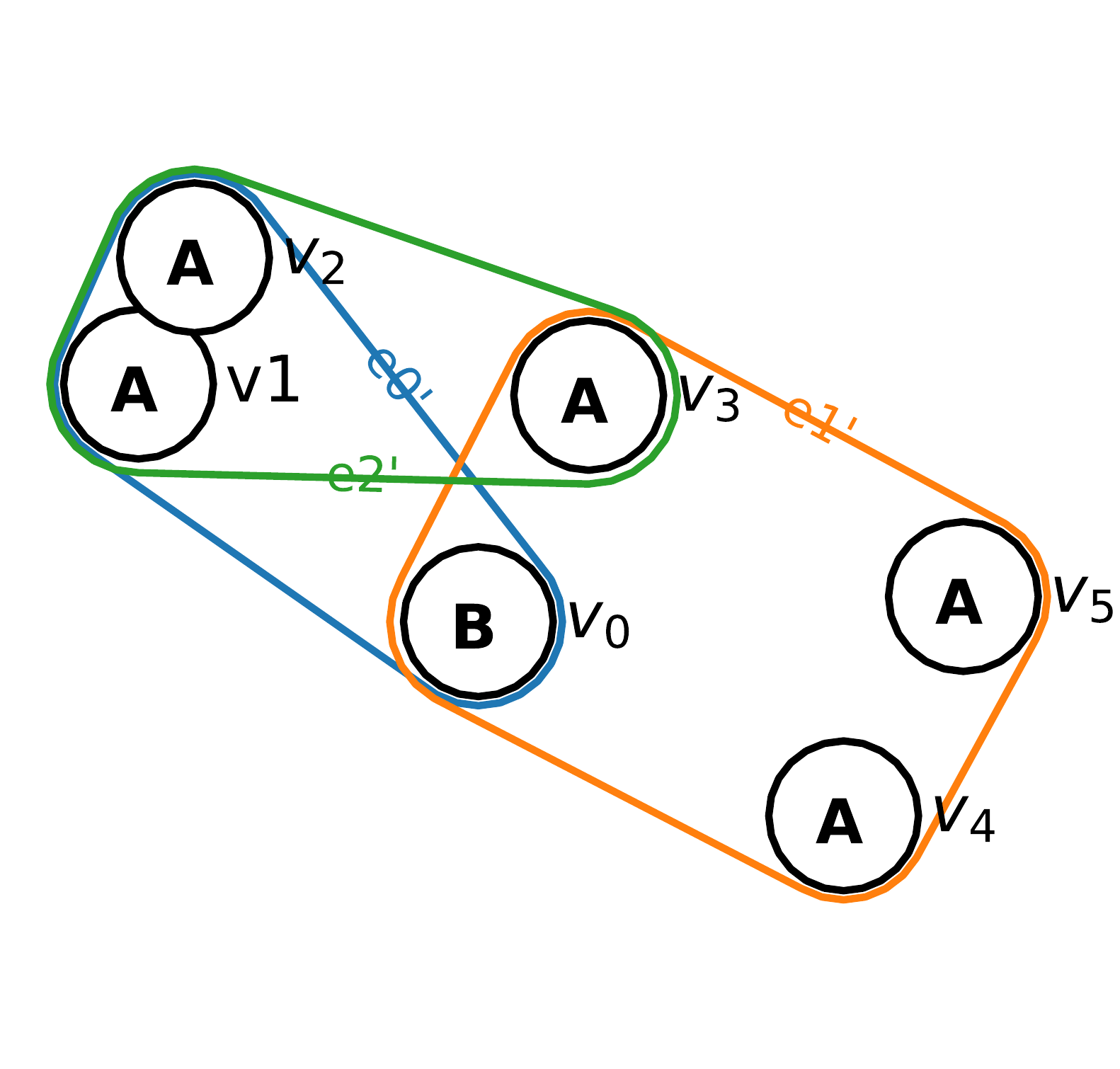}
        \caption{Partial embedding ($m'$)}
        \label{fig:partial_embedding}
    \end{subfigure}
    \caption{ Example of Embedding Validation.}
    \label{fig:validate-example}
\end{figure}

\begin{example}
An example of embedding validation is illustrated in \reffig{validate-example}, where we have a partial query $q'$ and a candidate partial embedding $m'$.
Suppose the matching order of $q'$ is $(e_0,e_1,e_2)$.
Currently we have a partial embedding $(e_0',e_1')$ of $(e_0,e_1)$ and want to verify the mapping $f(e_2)=e_2'$ is valid.
For $e_3\in q'$, we compute the vertex profile of all its vertices, where $\mathcal{P}(u_2) = (A,\{e_0',e_2'\})$,
$\mathcal{P}(u_3) = (A,\{e_1',e_2'\})$, and $\mathcal{P}(u_4) = (A,\{e_1',e_2'\})$.
Similarly, for $e_3'\in m'$, we compute $\mathcal{P}(v_1) = (A,\{e_0',e_2'\})$,
$\mathcal{P}(v_2) = (A,\{e_0',e_2'\})$, and $\mathcal{P}(v_3) = (A,\{e_1',e_2'\})$.
We notice that $\{\mathcal{P}(u_2), \mathcal{P}(u_3), \mathcal{P}(u_4)\}\neq \{\mathcal{P}(v_1), \mathcal{P}(v_2), \mathcal{P}(v_3)\}$.
Hence, $m'$ is not a valid embedding of $q'$.
\end{example}

\sstitle{\red{Complexity.}} 
\red{
The time complexity of \refalg{emb_vali} is $O(\overbar{a_q}\times |E(q)|)$, as it iterates over every vertex in every query hyperedge.
}

%

\section{Parallel Execution}\label{sec:execution}

In this section, we describe the details of \hm's parallel execution engine. On receiving the execution plan of a query hypergraph, \hm initialises a \textit{thread pool} of $p$ threads to run the plan in parallel.

\subsection{Dataflow Model}

\begin{figure}[t]
    \centering
    \begin{subfigure}[b]{.315\columnwidth}
        \centering
        \includegraphics[width=0.9\linewidth]{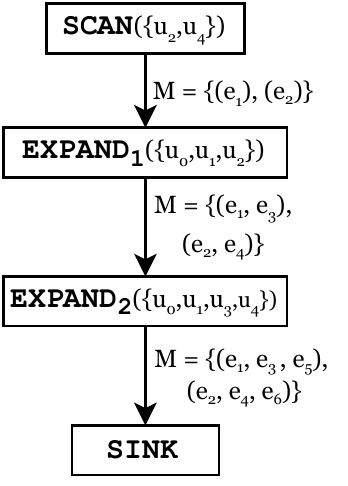}
        \caption{Dataflow Graph}
        \label{fig:dataflow}
    \end{subfigure}
    \begin{subfigure}[b]{.54\columnwidth}
        \centering
        \includegraphics[width=0.9\linewidth]{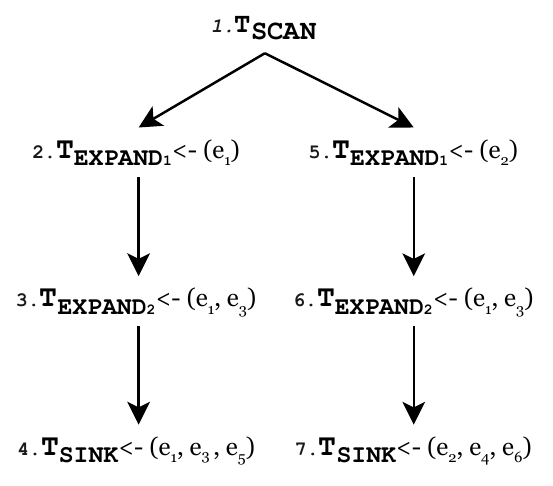}
        \caption{Task Tree with Schedule Order}
        \label{fig:task_tree}
    \end{subfigure}
    \caption{ Example Dataflow Graph and Task Tree with Schedule Order of \reffig{hmatch-example}.}
\end{figure}

We adopt the popular dataflow model \cite{dataflow,graphflow,patmat-exp,wco-join,huge,flink} for \hm, where computation is abstracted as a dataflow graph \cite{dataflow-def}. 
A dataflow graph is a directed acyclic graph (DAG), in which each vertex is an operator, and the directed edges represent data flows. An operator consists of a predefined computing instruction, a certain input and output channels. We introduce three primitive operators \scan, \expand and \sink in \hm as follows.

\sstitle{\scan.} \scan($e_q$) is always the first operator in \hm.
It accepts a query hyperedge $e_q$ as its parameter, iterates over a hyperedge partition of the data hypergraph $H$, and outputs all data hyperedges with signature $s=\mathcal{S}(e_q)$.

\sstitle{\expand.} \expand($e_q$) accepts a query hyperedge $e_q$ as its parameter and partial embeddings as its input. It expands each of its input partial embedding by one more hyperedge $e_q$ (\refsec{cand_gen} and \refsec{emb_vali}), and output new embeddings.

\sstitle{\sink.} \sink is always the last operator in \hm.
It accepts partial embeddings as its input and consumes the results of \sm, via either counting or printing all embeddings.

A dataflow graph in \hm is always a \textit{direct path} from a \scan operator to a \sink operator with a number of \expand operators in between.

\begin{example}
An example dataflow graph for the query and data hypergraph in \reffig{hmatch-example} is shown in \reffig{dataflow}, where operators are drawn in rectangles and data flows (i.e., intermediate results) are illustrated next to arrows.
\end{example}

\begin{remark}
The dataflow design of \hm makes it highly customizable and allows it to be easily extended with other functionalities of hypergraph databases. 
These can be done by introducing new dataflow operators.
Examples include adding extra aggregation and property filtering to the dataflow graph.
We leave this as an interesting future work.
\end{remark}

\subsection{Task-based Scheduler} \label{sec:scheduler}

A straightforward and commonly used method for parallel execution is to use breath-first search (BFS), which is easy to implement and can achieve high CPU utilisation. 
However, due to the exponential number of intermediate results of \sm, BFS-based scheduling usually consumes a large size of memory and easily results in an out-of-memory error.
Depth-first search (DFS), on the other hand, is often adopted for sequential algorithms.
DFS is generally hard to parallelise and can lead to low CPU utilisation and load imbalance in the parallel environment.
To fulfil the needs of high parallelism degree and low memory consumption at the same time, we design a task-based scheduler in \hm to execute the dataflow graph in parallel.

\begin{definition}
\textbf{(Task).}
A task is a minimal unit to schedule in \hm. It takes either a partition of the data hyperedges or a (partial) embedding as input, executes predefined computation logic, and spawns zero or more new tasks.
\end{definition}

There are three types of tasks in \hm with respect to the three dataflow operators (i.e., \scan, \expand and \sink). We denote them as \tscan, \texpand and \tsink, respectively.
A \tscan task takes a partition of the data hyperedges, and spawns a \texpand task for each of the hyperedge in the partition. A \texpand task takes a partial embedding and spawns a \texpand or \tsink task for each expanded partial embedding. A \tsink task takes a partial embedding and counts/outputs the embedding. Once a dataflow graph is constructed, \hm executes the dataflow graph using tasks.

\begin{remark}
Tasks in \hm are \textit{lightweight}. A task contains only a partial embedding and a pointer to the function defining its execution logic (as defined in its corresponding dataflow operator). Hence, the overhead of spawning tasks is very small.
\end{remark}

We then discuss how tasks are scheduled in \hm. We first assume the thread pool has only one thread to ease our presentation. To control memory consumption, \hm employs a \textit{last-in first-out} (LIFO) strategy when scheduling the tasks, in which each thread has a task queue.
Specifically, newly spawned tasks will be put to the head of the task queue, and the most recent task is scheduled first (i.e., from the head). The use of LIFO scheduling order avoids materialising all the intermediate results in the memory at one time to efficiently reduce memory consumption. 

\begin{example}
A tree of all tasks spawned by the operators in \reffig{dataflow} is given in \reffig{task_tree}. The scheduling order of each task is individuated by the number before it. As shown in the figure, the scheduling employs a LIFO order.
\end{example}

With the scheduler, \hm achieves a bounded memory.

\begin{theorem}
\hm schedules a \sm job with the memory bound of $O(\overbar{a_q}\times|E(q)|^2\times|E(H)|)$.
\end{theorem}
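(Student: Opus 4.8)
The plan is to bound the peak memory of the LIFO task-based scheduler by bounding (i) the maximum number of tasks that can reside in the task queue at any point in time, and (ii) the maximum size of any single task. For (ii), recall that a task is described as \emph{lightweight}: it holds one (partial) embedding plus a function pointer. A partial embedding $m$ over a matching order of length at most $|E(q)|$ consists of at most $|E(q)|$ data hyperedges, and each data hyperedge stores at most $\overbar{a_q}$ vertices on average (the matched hyperedges mirror the query hyperedges' arities). Hence a single task occupies $O(\overbar{a_q}\times|E(q)|)$ space, plus $O(1)$ for the function pointer. This gives the $\overbar{a_q}\times|E(q)|$ factor in the claimed bound.

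For (i), the key is that the scheduler uses a depth-first (LIFO) discipline over the task tree of Fig.~\ref{fig:task_tree}. With a LIFO queue, at any instant the queue contains, for each level of the currently-explored root-to-leaf path, the \emph{sibling tasks still waiting to be processed} at that level — i.e., the tasks spawned by the ancestor on the path but not yet popped. The task tree has depth at most $|E(q)|$ (one level of \texpand per query hyperedge after the first, plus the \tscan root and the \tsink leaves). At each level, a single \texpand task spawns one child per candidate hyperedge produced by \refalg{cand_gen}, and the candidate set $C(e_q)$ has size $O(|E(H)|)$ (each vertex lies in at most $O(|E(H)|)$ incident hyperedges, and $C(e_q)$ is an intersection/union of such lists, hence bounded by $|E(H)|$). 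Therefore along any root-to-leaf path there are at most $O(|E(q)|)$ levels, each contributing at most $O(|E(H)|)$ queued sibling tasks, so the queue never holds more than $O(|E(q)|\times|E(H)|)$ tasks at once. Multiplying by the per-task size $O(\overbar{a_q}\times|E(q)|)$ yields the peak memory bound $O(\overbar{a_q}\times|E(q)|^2\times|E(H)|)$.

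Finally, I would handle the multi-threaded case by noting that this does not change the asymptotics: with $p$ threads each maintaining its own LIFO queue, the peak memory is $p$ times the single-thread bound, and $p$ is a constant (bounded by the fixed number of hardware cores), so it is absorbed into the big-$O$. I would also observe that no auxiliary data structures are built at runtime (as stressed repeatedly in Sections~\ref{sec:overview}--\ref{sec:matching}), so the only runtime memory is the indexed data hypergraph (fixed, and not counted as scheduling overhead) plus the task queues analysed above; the transient set-operation scratch space inside \refalg{cand_gen} is itself $O(|E(H)|)$ and thus dominated.

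The main obstacle I expect is making the counting in step (i) rigorous: precisely characterising \emph{which} tasks coexist in the LIFO queue. One must argue carefully that when a \texpand task is popped and begins spawning its $O(|E(H)|)$ children, the children of its ancestors that are still pending are exactly the not-yet-popped siblings along the current path, and that no other tasks linger — this relies on the invariant that a LIFO scheduler fully drains a subtree before returning to an ancestor's remaining siblings. Pinning down this invariant, and confirming that the spawning of a batch of $O(|E(H)|)$ children is itself not a memory blow-up (they are enqueued, not materialised as expanded results simultaneously — each child is just a task holding the \emph{parent} embedding plus the single new candidate hyperedge, constructed lazily), is the delicate part; the size and depth bounds are then routine.
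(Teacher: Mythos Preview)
Your proposal is correct and follows essentially the same approach as the paper: bound the per-task size by $O(\overbar{a_q}\times|E(q)|)$, bound the number of queued tasks under LIFO by $O(|E(q)|\times|E(H)|)$ (depth $|E(q)|$, fan-out $|C(e_q)|=O(|E(H)|)$ per level), and absorb the constant number of threads $p$. Your treatment is in fact more careful than the paper's---you explicitly articulate the LIFO invariant about pending siblings along the current path and account for the transient set-operation scratch space, both of which the paper leaves implicit.
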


\begin{proof}
A dataflow graph in \hm has $|E(q)|$ operators. As \hm schedules in a LIFO order, in each step, each operator can spawn $|C(e_q)|$ tasks (i.e., $O(|E(H)|)$) to the task queue, where each task takes a partial embedding with at most $|E(q)|$ hyperedges.
Representing the partial embedding requires a space of $O(\overbar{a_q}\times|E(q)|)$ to maintain all vertex IDs. 
Therefore, a task queue takes $O(\overbar{a_q}\times|E(q)|^2\times|E(H))$ space. Since there are $p$ task queues in a \hm thread pool where $p$ is a pre-configured constant. The overall memory bound is $O(\overbar{a_q}\times|E(q)|^2\times|E(H)|)$.
\end{proof}


\subsection{Load Balancing} \label{sec:load}

Graph computation is usually irregular due to the power-law characteristics in real-world graphs \cite{power-law-1,power-law-2}. A simple solution to distribute the load among multiple threads is to assign each thread with an equal share of the firstly matched hyperedges. 
This \textit{static} and \textit{coarse-grained} technique can still suffer from load skew due to the unpredictable number of embeddings generated by one hyperedge \cite{huge}.  
In \hm. we adopt the \textit{fine-grained} work-stealing technique \cite{work_stealing1,work_stealing2,cilk,RI} to \textit{dynamically} balance the load.

In \hm's dynamic  work stealing, each thread in the thread pool maintains a \textit{deque} as its task queue. Since tasks are scheduled in LIFO order new tasks will be spawned into the head of the task queue. Each task execution will pop the most recent task from the head of the task queue.
Once a thread has completed its own job by emptying its task queue, it will randomly pick one of the other threads with a non-empty task queue and steal half of the tasks from the tail to balance the load. 
We implemented a non-blocking deque \cite{Chase-Lev-deque} to reduce the overhead of lock operations.
The stealing operates on the minimal scheduling unit in \hm, namely task, to achieve near-perfect load balancing in fine grain.



\section{Experiments}\label{sec:experiment}

\subsection{Experimental Setup}

\hm is implemented in Rust.
%
All experiments are conducted on a server with two $20$-core Xeon E5-2698 v4 CPUs ($40$ threads each) and $512$GB memory.

\stitle{Baselines.}
We compare \hm with the state-of-the-art subgraph matching algorithms \cfl \cite{cfl}, \daf \cite{daf} and \ceci \cite{ceci}.
We adopt the C++ implementations in a recent experimental study of subgraph matching\footnote{\url{https://github.com/RapidsAtHKUST/SubgraphMatching}} \cite{match-survey}, and extend them to the case of hypergraphs as described in \refsec{baseline} with additional \textit{IHS} filter.
Note that this implementation utilises single instruction multiple data (SIMD) instructions \cite{simd_gallop} to speed up set intersections.
We did not implement SIMD set intersections in \hm.
%
The modified algorithms are denoted as \cflh, \dafh and \cecih, respectively. 
We do not include \cite{hyper-iso} for the reason discussed at the end of \refsec{baseline}.
Also, we do not compare \cite{sm05,sm08} since the subgraph matching algorithm they extend, namely the Ullmann's algorithm \cite{ullmann}, has been largely outperformed by \cfl, \ceci and \daf in the literature.
We also compare \rapid \footnote{\url{https://github.com/RapidsAtHKUST/RapidMatch}}\cite{rapidmatch}. But since \rapid uses join-based techniques which cannot be fitted in our generic backtracking framework, we directly convert the query and data hypergraph to bipartite graphs in \rapid.

\stitle{Datasets.}
We use $10$ real-world data hypergraphs with labelled vertices in our experiment downloaded from \cite{datasets}. 
They are house committees (\HC), MathOverflow answers (\MA), contact high school (\CH), contact primary school (\CP), senate bills (\SB), house bills (\HB), Walmart trips (\WT), Trivago clicks (\TC), StackOverflow answers (\SA), and Amazon reviews (\AR).
We preprocess the datasets to remove all repeated hyperedges and all repeated vertices in one hyperedge. 
The statistics of the datasets are shown in \reftable{datasets}.

\begin{table}[t]
\footnotesize
\centering
\caption{Table of Datasets }
\label{tab:datasets}
\scalebox{0.9}{
\begin{tabular}{|c|r|r|r|r|r|r|}
\hline
Dataset &
  \multicolumn{1}{c|}{$|V|$} &
  \multicolumn{1}{c|}{$|E|$} &
  \multicolumn{1}{c|}{$|\Sigma|$} &
  \multicolumn{1}{c|}{$a_{max}$} &
  \multicolumn{1}{c|}{$\overbar{a}$} 
  & \multicolumn{1}{c|}{$|Index|$} 
  \\ \hline \hline
\HC & 1,290      & 331       & 2      & 81     & 34.8   & 178KB \\ \hline
\MA & 73,851     & 5,444     & 1,456  & 1,784  & 24.2   & 2.1MB \\ \hline
\CH & 327        & 7,818     & 9      & 5      & 2.3   & 109KB  \\ \hline
\CP & 242        & 12,704    & 11     & 5      & 2.4   & 190KB  \\ \hline
\SB & 294        & 20,584    & 2      & 99     & 8.0   & 2.1MB  \\ \hline
\HB & 1,494      & 52,960    & 2      & 399    & 20.5   & 15.5MB \\ \hline
\WT & 88,860     & 65,507    & 11     & 25     & 6.6   & 6.8MB  \\ \hline
\TC & 172,738    & 212,483   & 160    & 85     & 4.1   & 7.8MB  \\ \hline
\SA & 15,211,989 & 1,103,193 & 56,502 & 61,315 & 23.7   & 419.7MB\\ \hline
\AR & 2,268,264  & 4,239,108 & 29     & 9,350  & 17.1   & 998.6MB \\ \hline
\end{tabular}
}
\end{table}

\stitle{Queries.}
We use randomly sampled subhypergraphs from the data hypergraphs as our queries. Therefore, for each query hypergraph, there must exist at least one embedding in the corresponding data hypergraph.
Specifically, we perform a random walk in the data hypergraph to generate subhypergraphs with the given number of hyperedges whose number of vertices is in the range of $[|V|_{min},|V|_{max}]$. The settings of our queries are presented in \reftable{queries}. We generate $20$ random queries for each setting.
\iffullpaper
\red{The random queries vary from low to high selectivity, we draw the distributions of the number of embeddings for each query setting in box plots in \reffig{num_embeddings}.}

\begin{figure*}[ht]
    \centering
    \begin{subfigure}[b]{0.25\textwidth}
        \centering
        \includegraphics[width=0.95\linewidth]{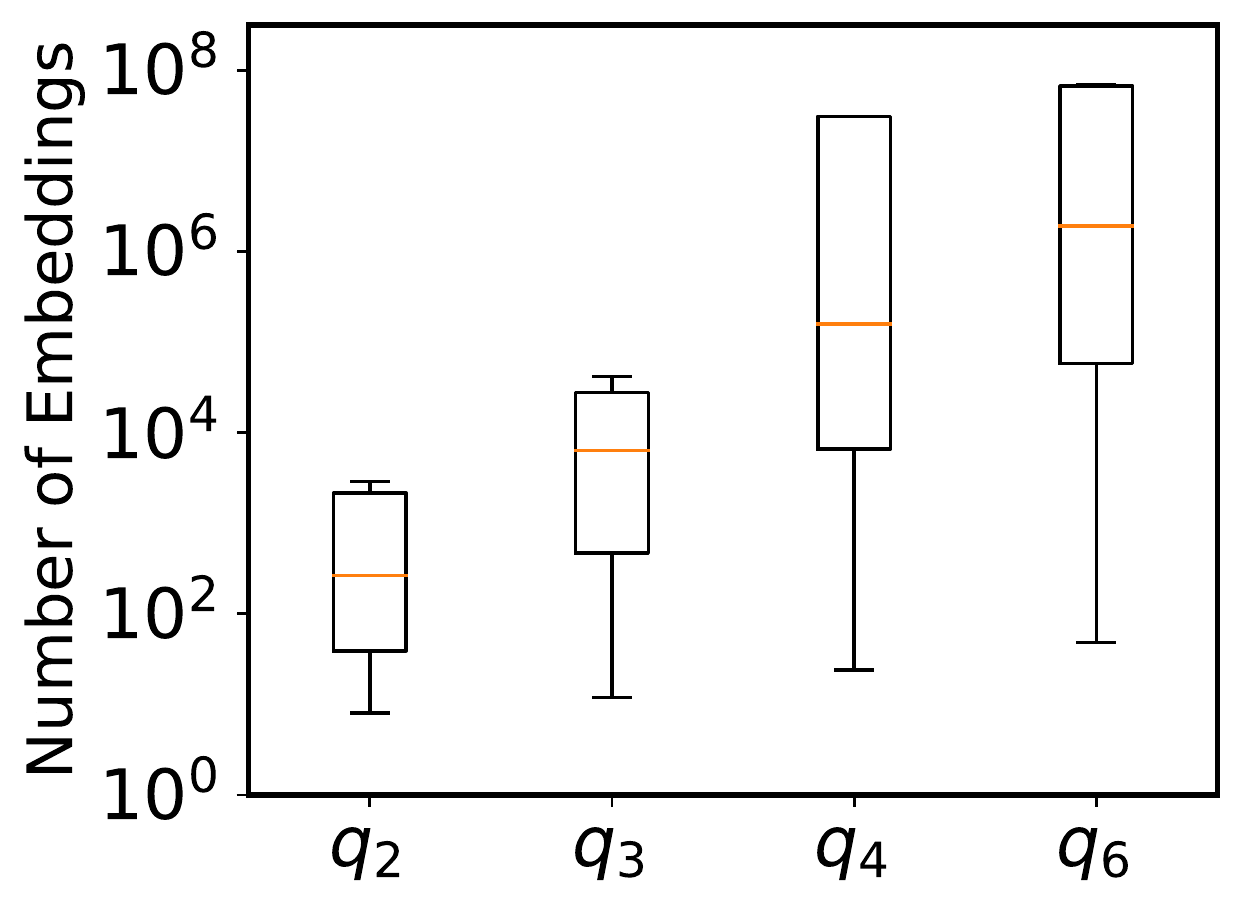}
        \caption{\HC}
    \end{subfigure}%
    ~
    \begin{subfigure}[b]{0.25\textwidth}
        \centering
        \includegraphics[width=0.95\linewidth]{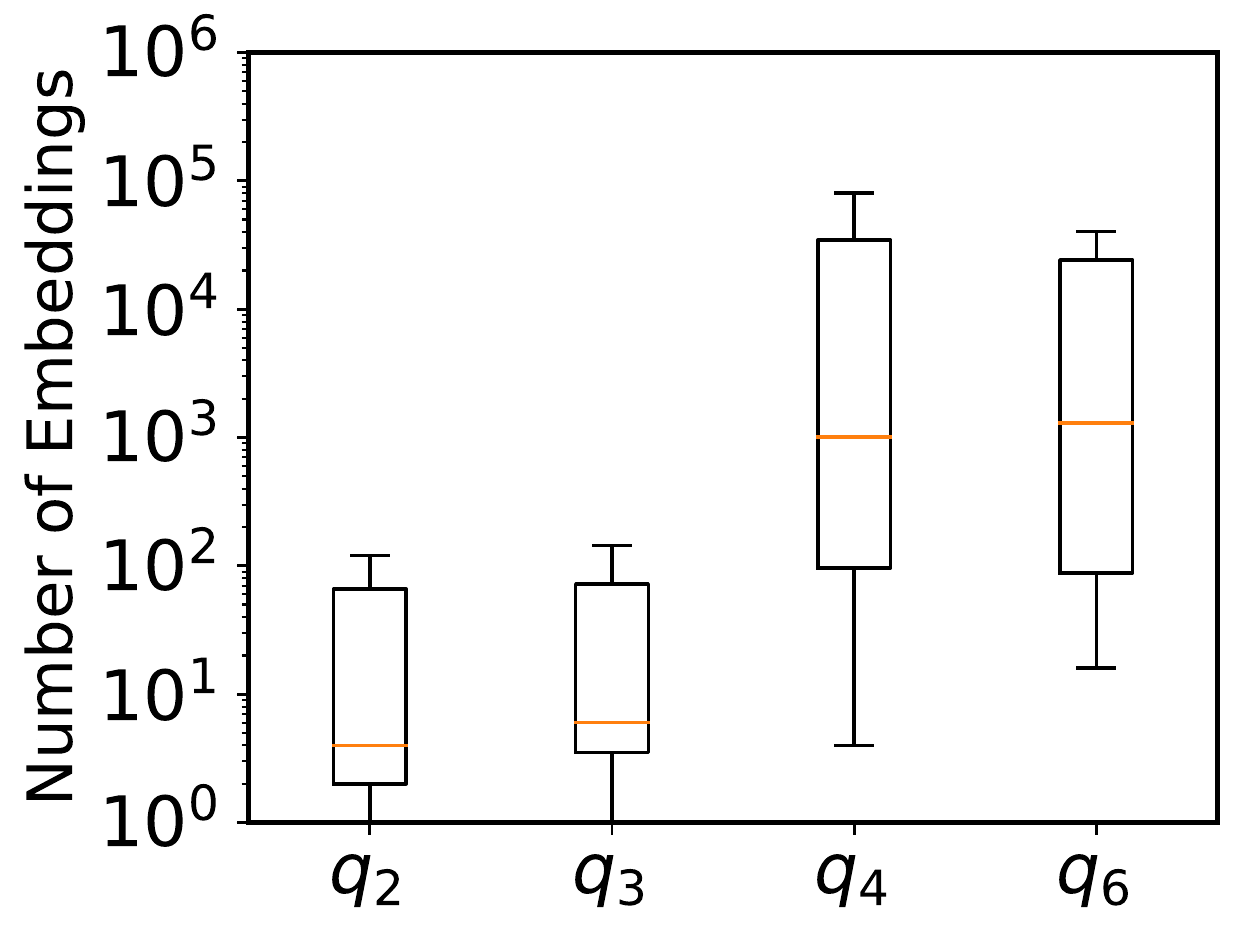}
        \caption{\MA}
    \end{subfigure}%
    ~
    \begin{subfigure}[b]{0.25\textwidth}
        \centering
        \includegraphics[width=0.95\linewidth]{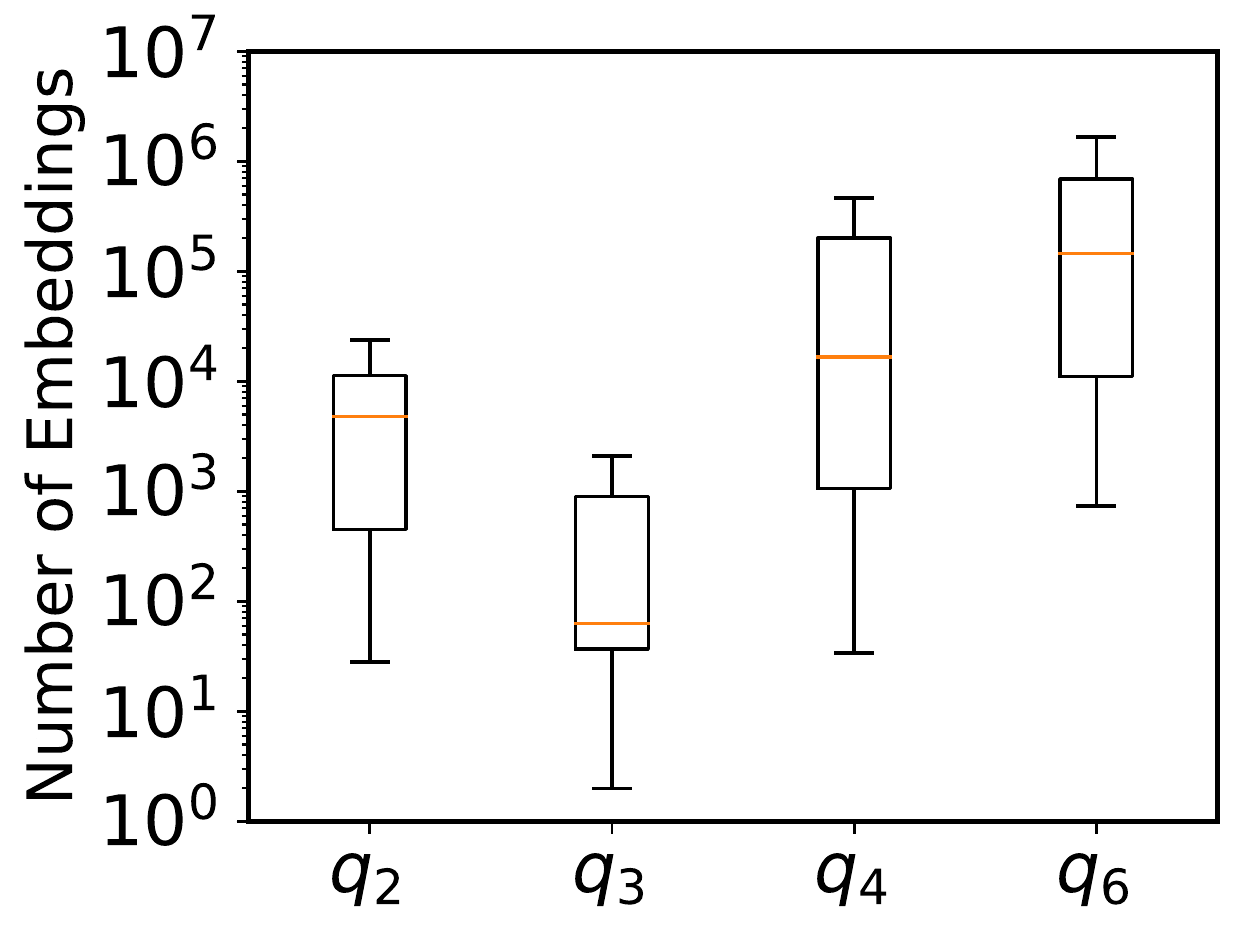}
        \caption{\CH}
    \end{subfigure}%
    \\
    \begin{subfigure}[b]{0.25\textwidth}
        \centering
        \includegraphics[width=0.95\linewidth]{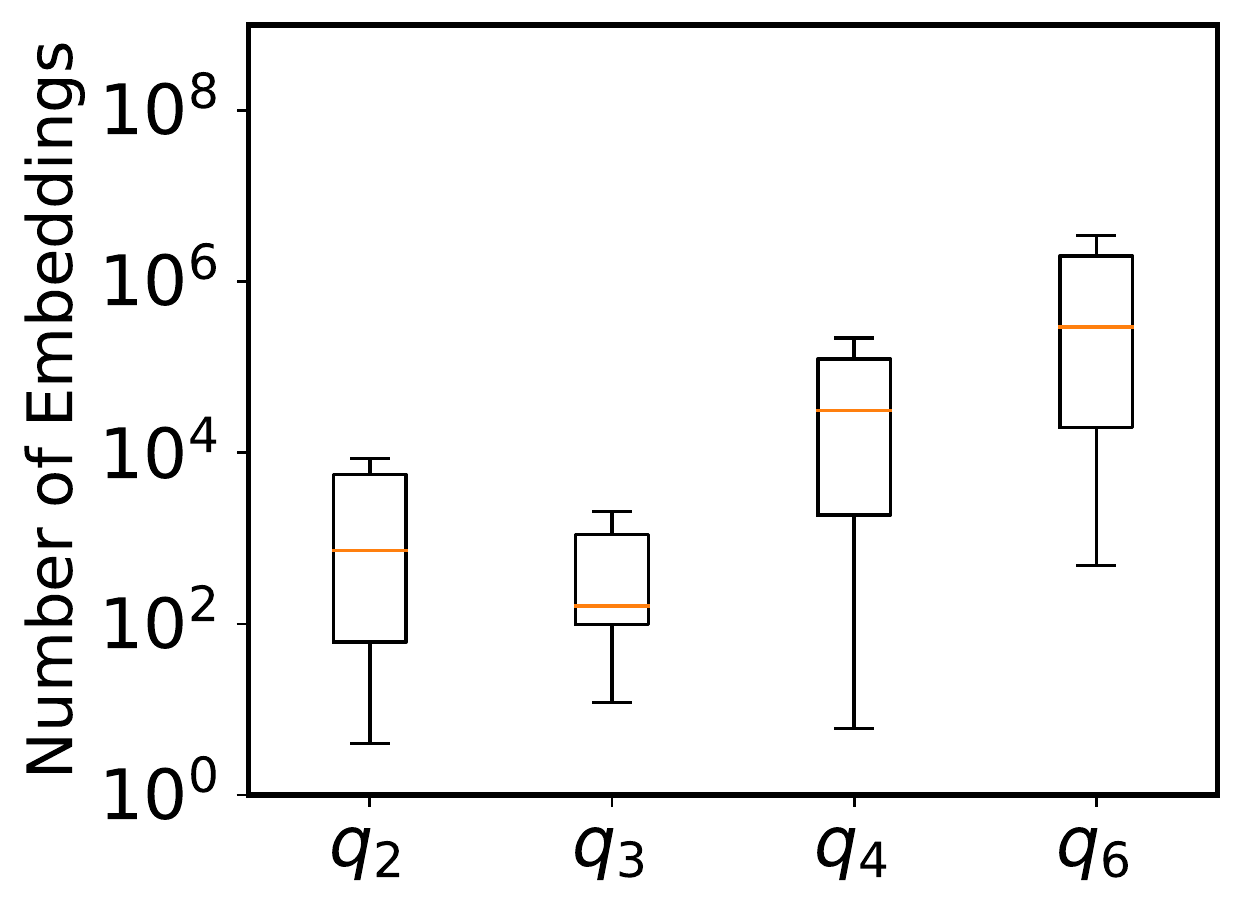}
        \caption{\CP}
    \end{subfigure}%
    ~
    \begin{subfigure}[b]{0.25\textwidth}
        \centering
        \includegraphics[width=0.95\linewidth]{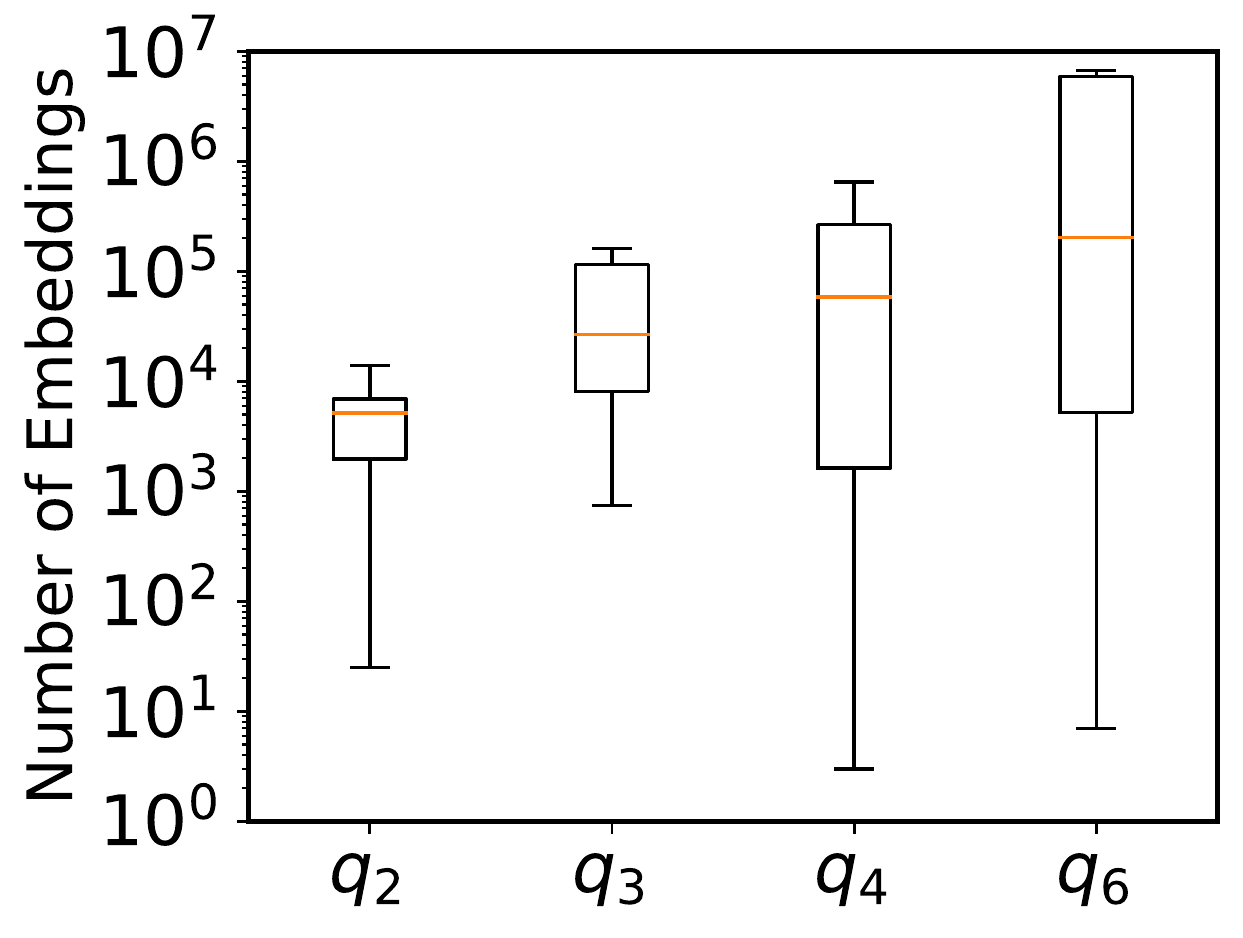}
        \caption{\SB}
    \end{subfigure}%
    ~
    \begin{subfigure}[b]{0.25\textwidth}
        \centering
        \includegraphics[width=0.95\linewidth]{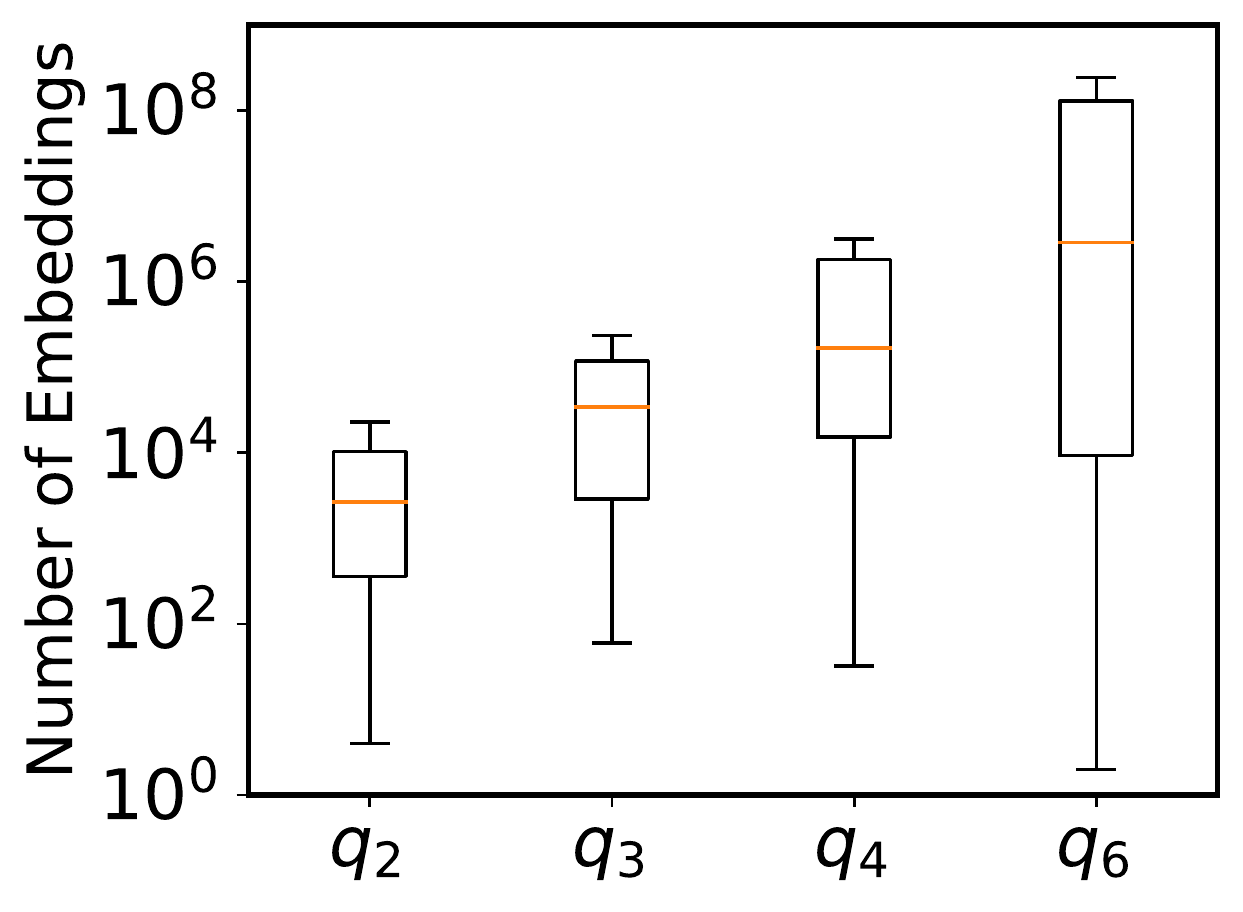}
        \caption{\HB}
    \end{subfigure}%
    \\
    \begin{subfigure}[b]{0.25\textwidth}
        \centering
        \includegraphics[width=0.95\linewidth]{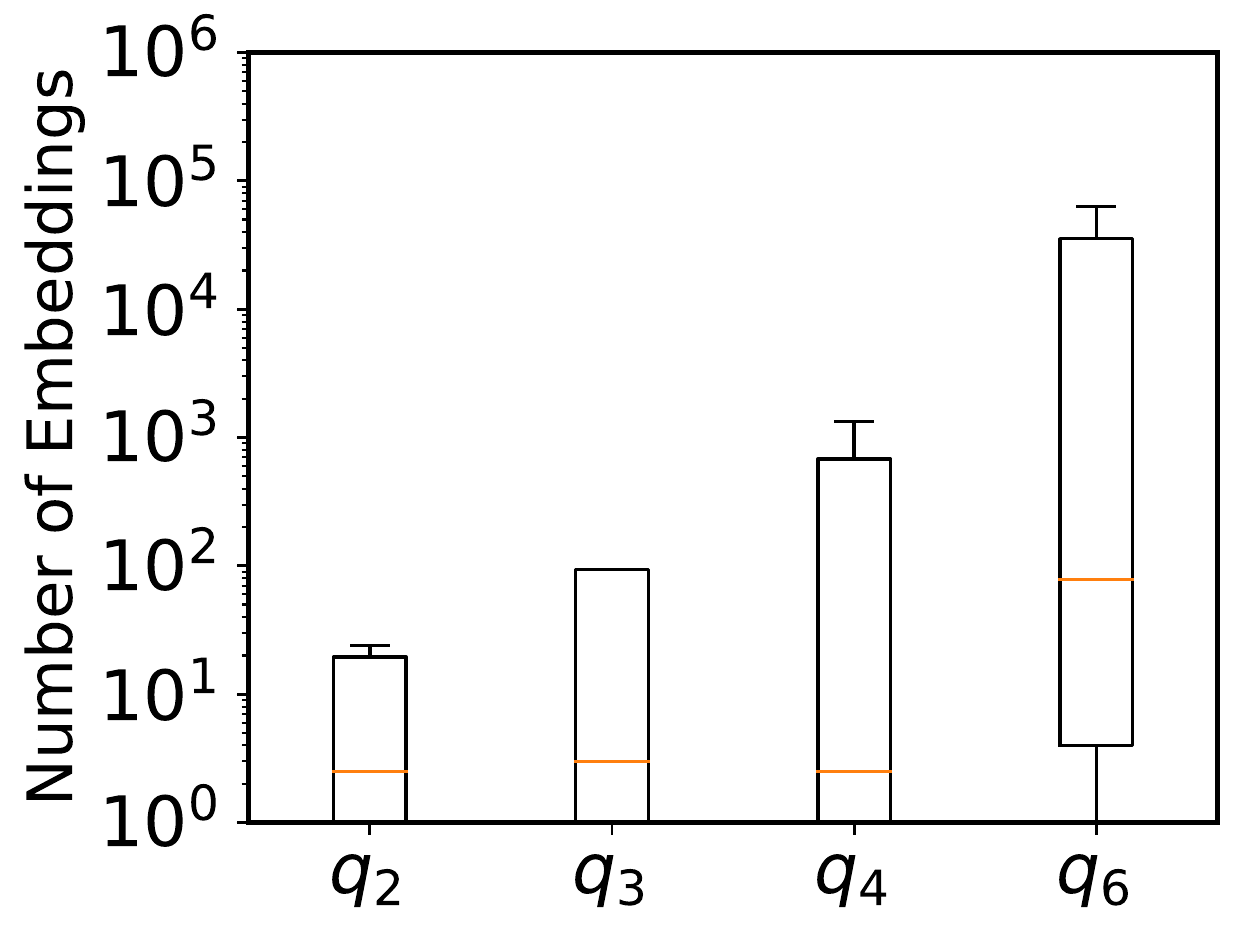}
        \caption{\WT}
    \end{subfigure}%
    ~
    \begin{subfigure}[b]{0.25\textwidth}
        \centering
        \includegraphics[width=0.95\linewidth]{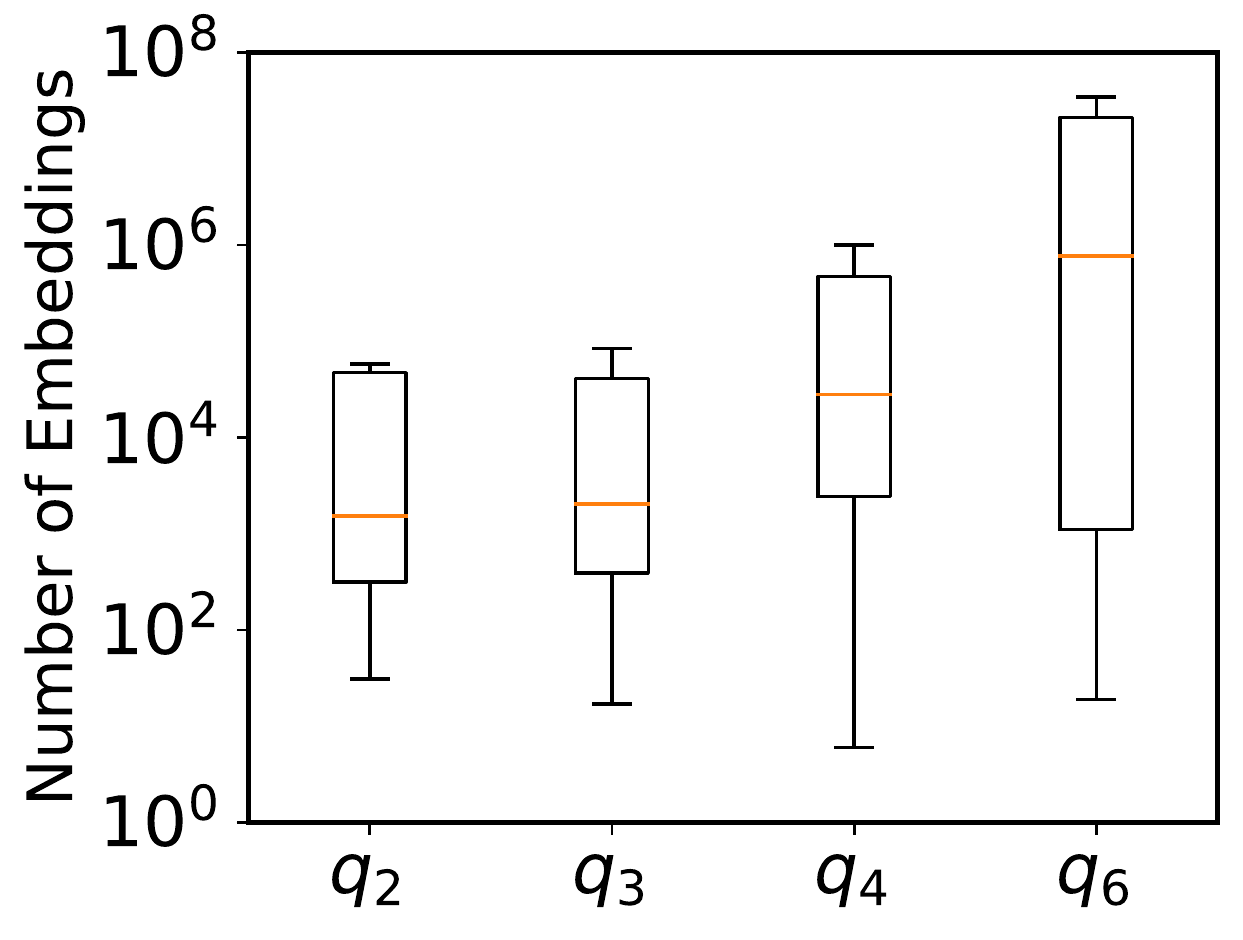}
        \caption{\TC}
    \end{subfigure}%
    ~
    \begin{subfigure}[b]{0.25\textwidth}
        \centering
        \includegraphics[width=0.95\linewidth]{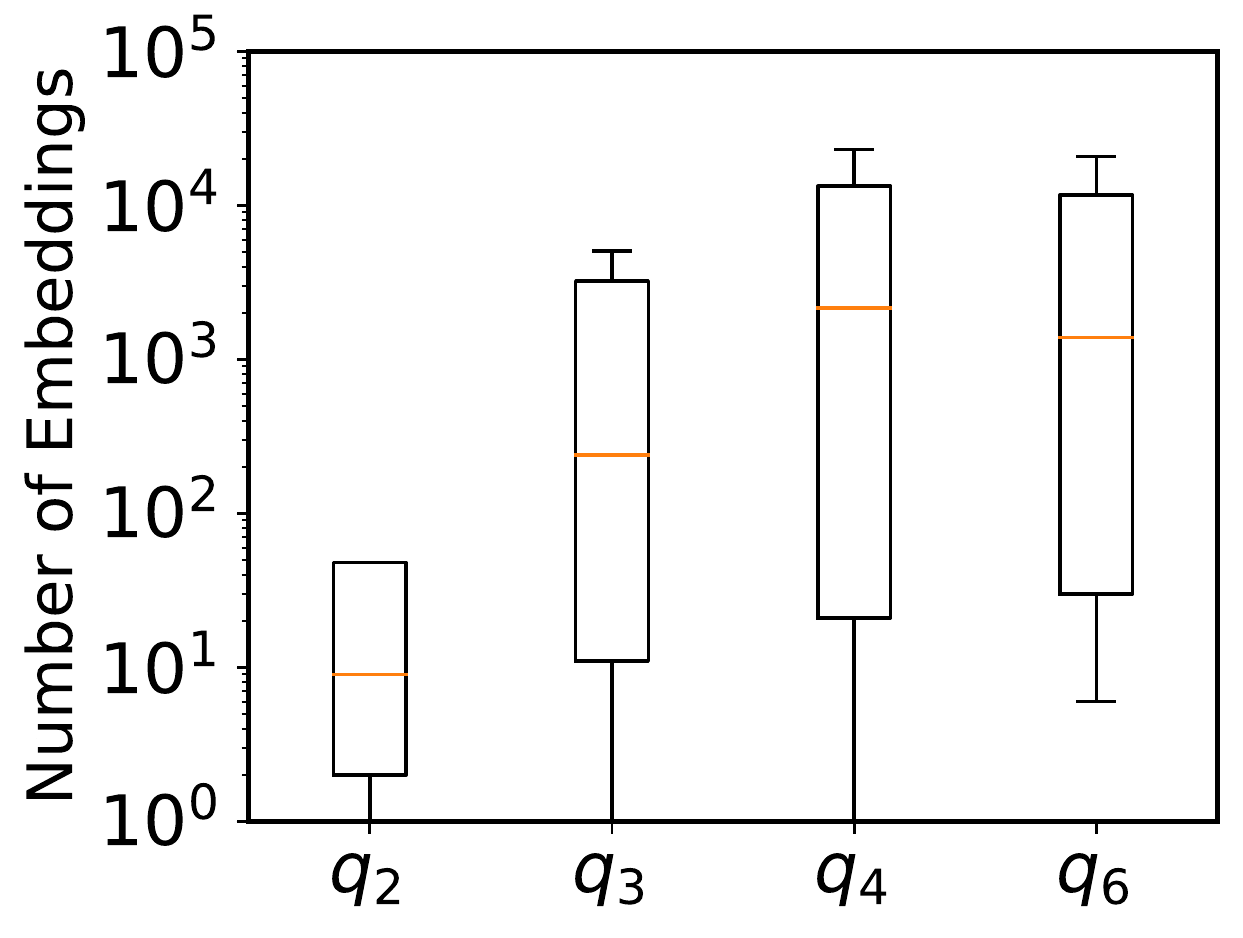}
        \caption{\SA}
    \end{subfigure}%
    \caption{\red{Number of Embeddings Distributions}}
    \label{fig:num_embeddings}
\end{figure*}
\else
\red{The random queries vary from low to high selectivity, detailed statistics of all queries are presented in the full version of this paper \cite{fullpaper}.}
\fi

\stitle{Metrics.} We measure the average elapsed time of each query type. Each query is executed three times for more precise measurement. 
\red{We count the number of embeddings for all compared methods instead of outputting them to eliminate I/O costs.}
Since \cflh, \dafh and \cecih fail on almost all queries on our largest dataset, \AR, we only use it for parallel evaluation of \hm (\refsec{parallel_eval}). For other datasets, we use them in single-thread evaluation (\refsec{single_eval}).
For single-thread comparisons (\refsec{single_eval}), we set a timeout of $1$ hour for all queries. The running time of out-of-time queries will be counted as $3600$ seconds when computing the average.

\begin{table*}[ht]
    \footnotesize
    \centering
    \begin{minipage}{.3\textwidth}
        \centering
        \caption{Table of Query Settings}
        \label{tab:queries}
        \scalebox{0.9}{
        \begin{tabular}{|c|c|c|c|c}
        \hline
        Query & $|E|$ & $|V|_{min}$ & $|V|_{max}$ \\ \hline \hline
        $q_2$ & 2     & 5           & 15          \\ \hline
        $q_3$ & 3     & 10          & 20          \\ \hline
        $q_4$ & 4     & 10          & 30          \\ \hline
        $q_6$ & 6     & 15          & 35          \\ \hline
        \end{tabular}
        }
    \end{minipage} %
    ~
    \begin{minipage}{.6\textwidth}
        \centering
        \caption{Query Completion Ratio (Single-thread)}
        \label{tab:complete}
        \scalebox{0.9}{
        \begin{tabular}{|c|cccccccccc|}
        \hline
        Algorithm &
          \multicolumn{1}{c|}{\HC} &
          \multicolumn{1}{c|}{\MA} &
          \multicolumn{1}{c|}{\CH} &
          \multicolumn{1}{c|}{\CP} &
          \multicolumn{1}{c|}{\SB} &
          \multicolumn{1}{c|}{\HB} &
          \multicolumn{1}{c|}{\WT} &
          \multicolumn{1}{c|}{\TC} &
          \multicolumn{1}{c|}{\SA} &
          Total \\ \hline
        \cflh &
          \multicolumn{4}{c|}{\multirow{4}{*}{\textbf{100\%}}} &
          \multicolumn{1}{c|}{56\%} &
          \multicolumn{1}{c|}{44\%} &
          \multicolumn{1}{c|}{76\%} &
          \multicolumn{1}{c|}{90\%} &
          \multicolumn{1}{c|}{99\%} &
          85\% \\ \cline{1-1} \cline{6-11} 
        \dafh &
          \multicolumn{4}{c|}{} &
          \multicolumn{1}{c|}{49\%} &
          \multicolumn{1}{c|}{43\%} &
          \multicolumn{1}{c|}{75\%} &
          \multicolumn{1}{c|}{90\%} &
          \multicolumn{1}{c|}{99\%} &
          84\% \\ \cline{1-1} \cline{6-11} 
        \cecih &
          \multicolumn{4}{c|}{} &
          \multicolumn{1}{c|}{50\%} &
          \multicolumn{1}{c|}{43\%} &
          \multicolumn{1}{c|}{75\%} &
          \multicolumn{1}{c|}{90\%} &
          \multicolumn{1}{c|}{99\%} &
          84\% \\ \cline{1-1} \cline{6-11} 
        \rapid &
          \multicolumn{4}{c|}{} &
          \multicolumn{1}{c|}{45\%} &
          \multicolumn{1}{c|}{44\%} &
          \multicolumn{1}{c|}{75\%} &
          \multicolumn{1}{c|}{86\%} &
          \multicolumn{1}{c|}{99\%} &
          83\% \\ \hline
        \textbf{\hm} &
          \multicolumn{10}{c|}{\textbf{100\%}} \\ \hline
        \end{tabular}
        }
    \end{minipage}
\end{table*}

\subsection{Single-thread Comparisons} \label{sec:single_eval}

In this subsection, we evaluate \hm in a single-thread environments
We use all the datasets except \AR as data hypergraphs as discussed before. 

\begin{figure}[t]
    \centering
    \includegraphics[width=0.8\linewidth]{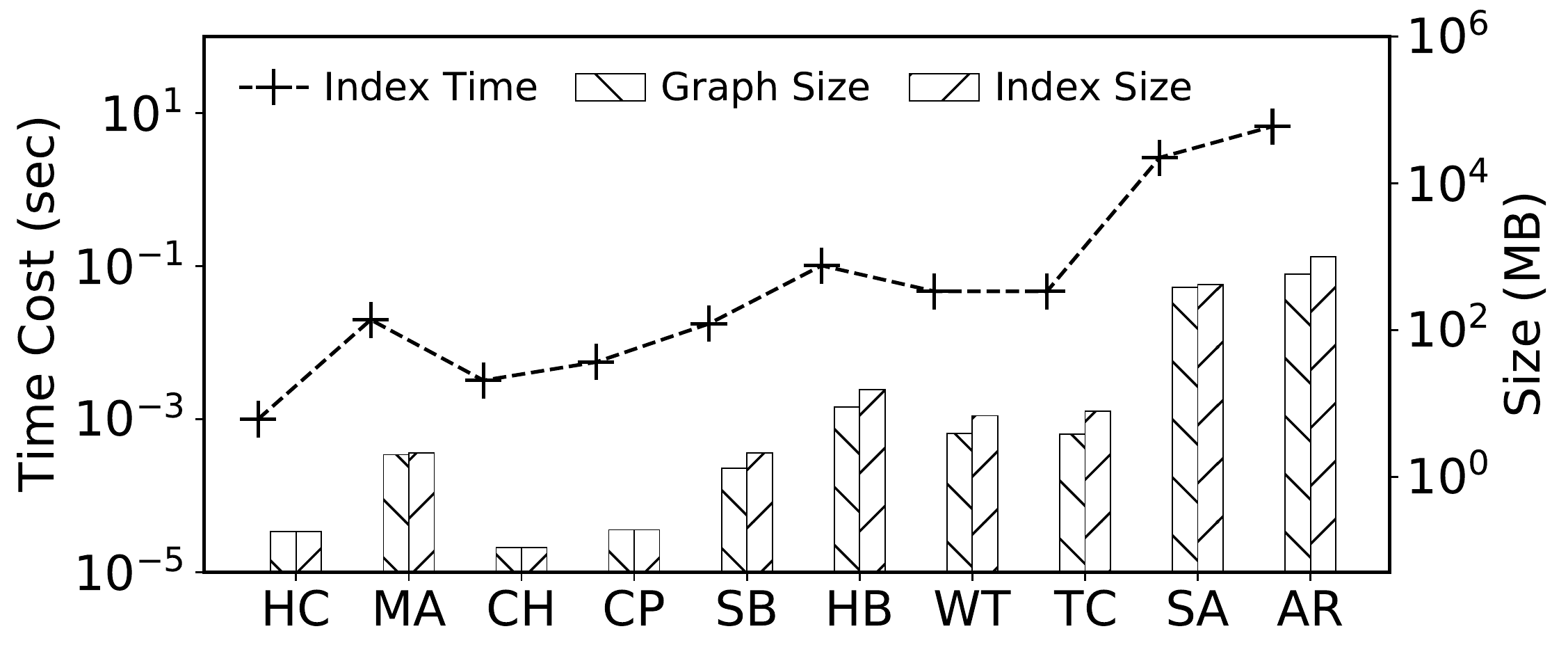}  
    \caption{\red{Building Time and Size of Index.}}
    \label{fig:index}
\end{figure}

\stitle{\red{Exp-1: Index Building.}}
\red{
We first evaluate the proposed inverted hyperedge index. \reffig{index} shows the time of building the index, the size of the graph and the size of the index (in MB). 
The time of index building is extremely fast. It takes only around $6.7$ seconds to build the index even for the largest dataset \AR. As for index size, we observe that the index size is similar to the original graph size, which confirms our size analysis in \refsec{inverted_index}.
}

\stitle{Exp-2: Overall Comparisons.} 
We then compare \hm with \cflh, \cecih, \dafh, and \rapid to verify the efficiency of our matching by hyperedges framework (\refsec{matching}).
The results are shown in \reffig{all_cases}.
As demonstrated, \hm significantly outperforms \cflh, \dafh, \cecih, and \rapid in all cases, with an average speedup of $5\times10^4$, $1\times10^5$ and $7\times10^5$, and $1\times10^6$ respectively.
Especially for data hypergraphs with high average arity $\overbar{a}$ including \HC, \MA, \HB, and \SA, \hm outperforms \cflh, \dafh and \cecih by up to $6$ orders of magnitude and \rapid by up to $7$ orders of magnitude.
This is because \hm can fully use the high-order information in hypergraphs to filter out unpromising hyperedges and reduce redundant computation.
%

In addition, \hm is the only algorithm that completes all queries within the time limit. The query completion rate 
is shown in \reftable{complete}. All algorithms run successfully for smaller datasets (i.e., \HC, \MA, \CH, and \CP). However, as the size of data hypergraphs grows, \rapid, \cflh, \dafh, and \cecih start to fail on some queries. 
This is because of the huge search space they have to explore.
%
%

\begin{figure*}[ht]
    \centering
    \begin{subfigure}[b]{\textwidth}
    \centering
        \includegraphics[height = 0.15in]{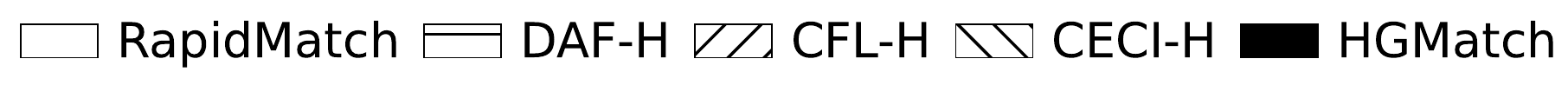}
    \end{subfigure}%
    \\
    \begin{subfigure}[b]{0.25\textwidth}
        \centering
        \includegraphics[width=0.95\linewidth]{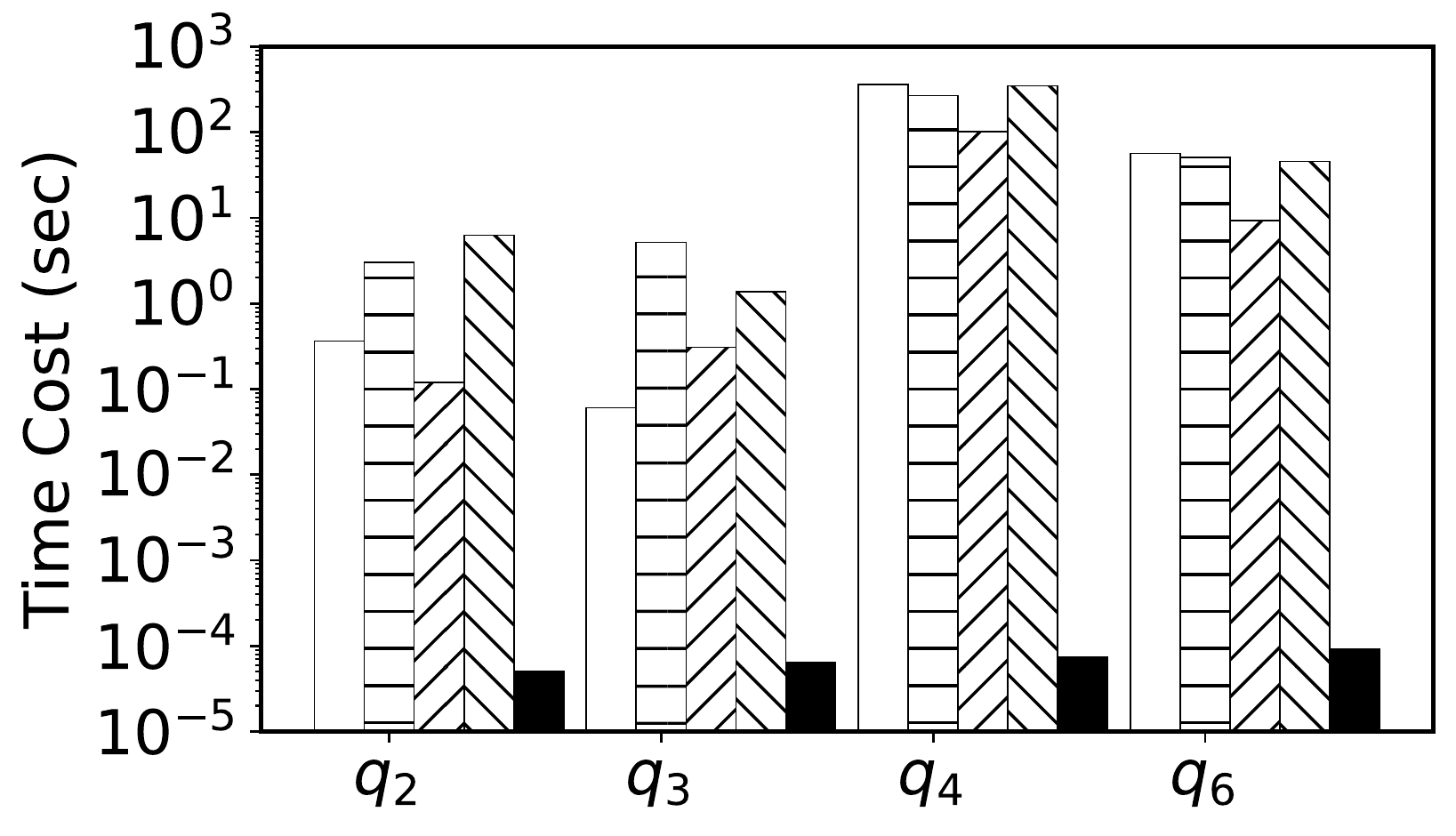}
        \caption{\HC}
    \end{subfigure}%
    ~
    \begin{subfigure}[b]{0.25\textwidth}
        \centering
        \includegraphics[width=0.95\linewidth]{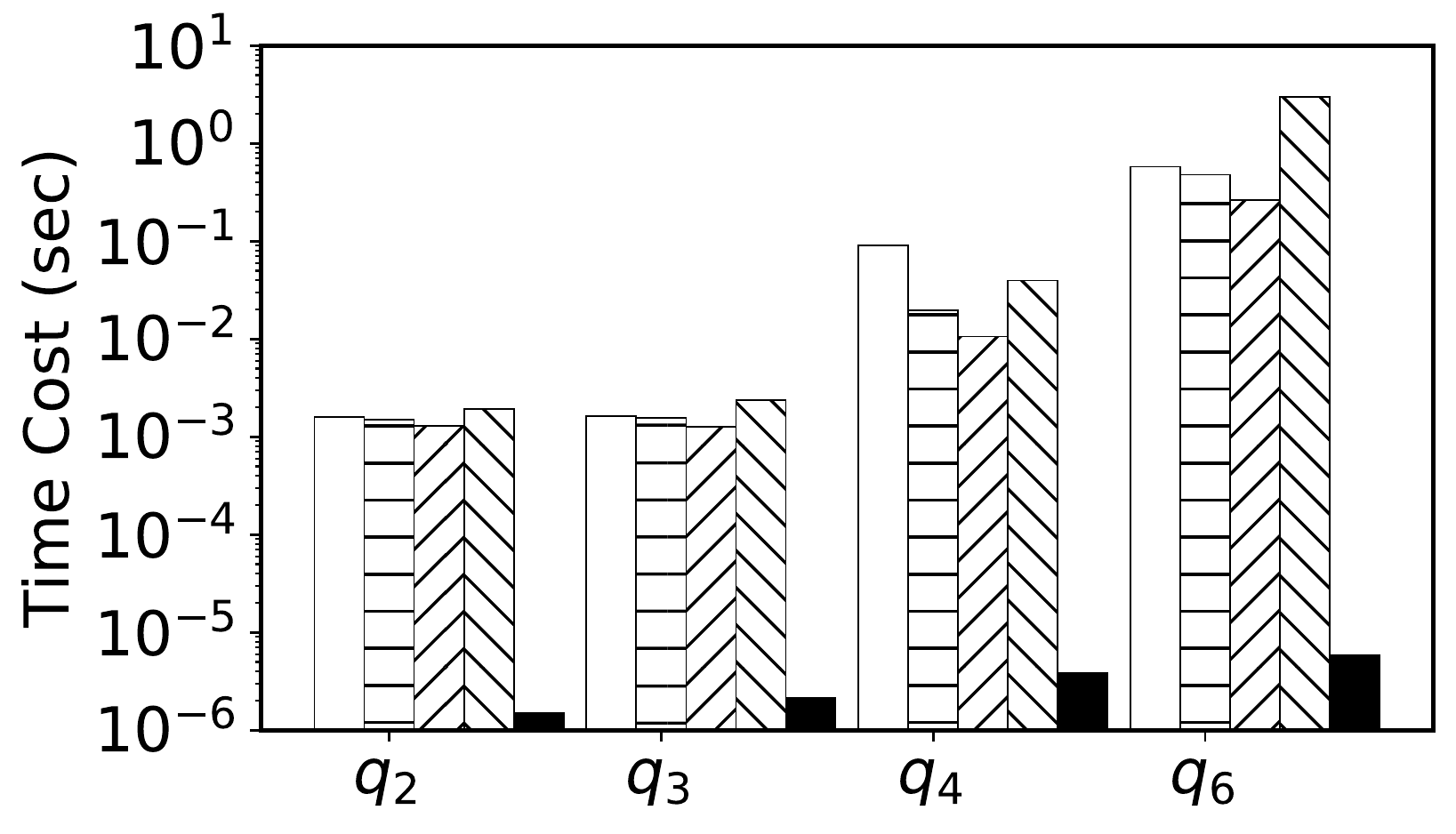}
        \caption{\MA}
    \end{subfigure}%
    ~
    \begin{subfigure}[b]{0.25\textwidth}
        \centering
        \includegraphics[width=0.95\linewidth]{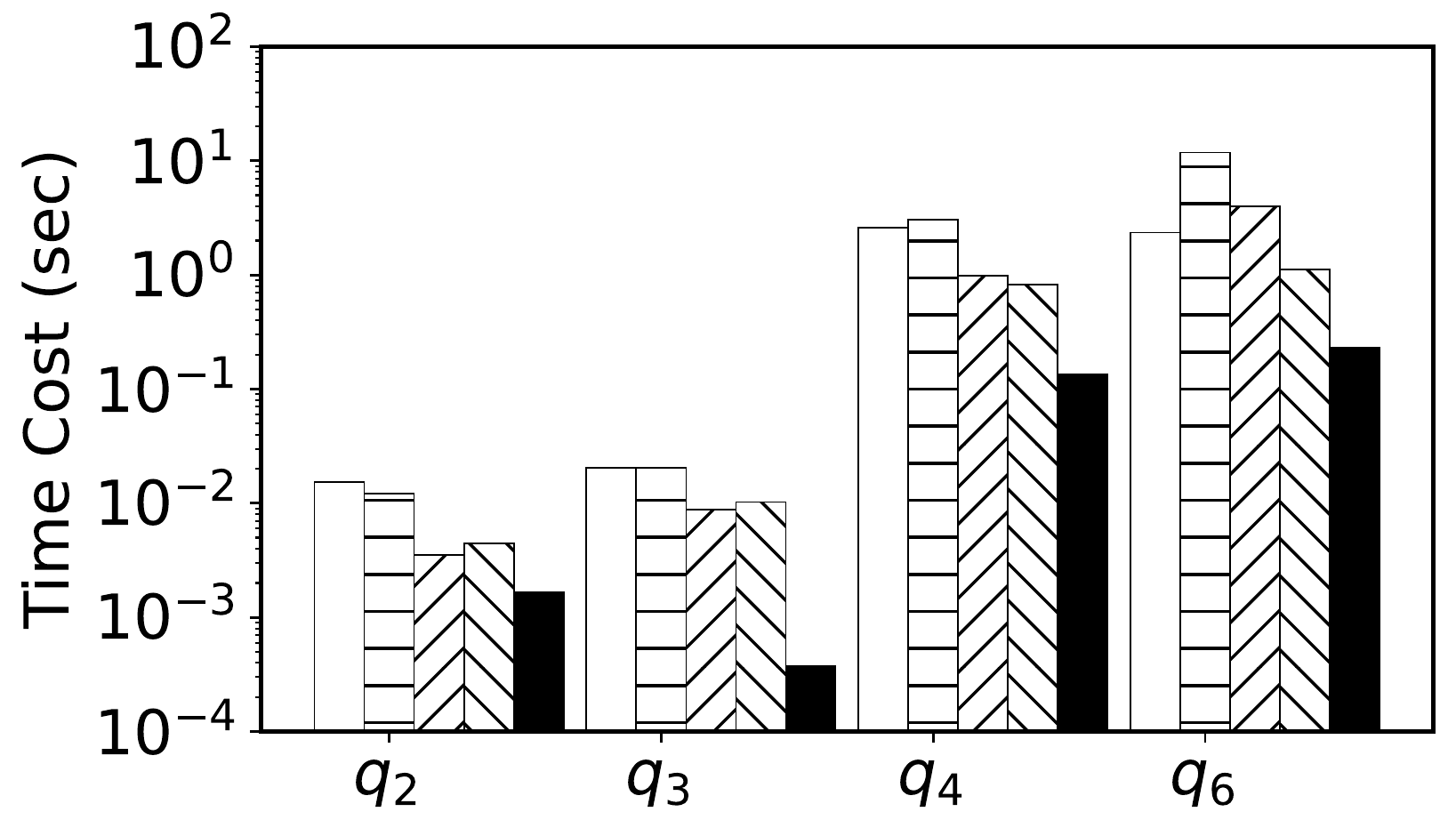}
        \caption{\CH}
    \end{subfigure}%
    \\
    \begin{subfigure}[b]{0.25\textwidth}
        \centering
        \includegraphics[width=0.95\linewidth]{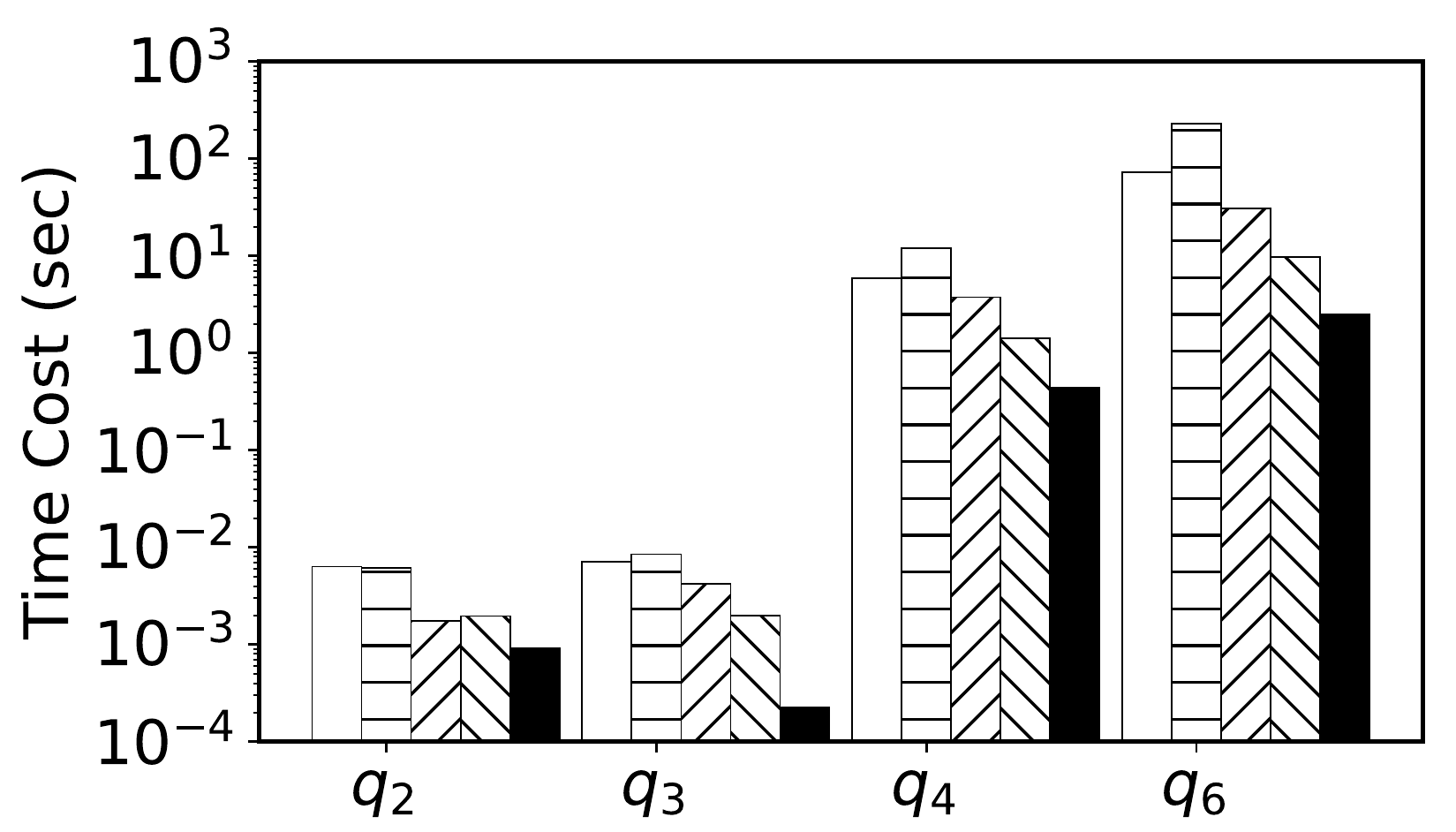}
        \caption{\CP}
    \end{subfigure}%
    ~
    \begin{subfigure}[b]{0.25\textwidth}
        \centering
        \includegraphics[width=0.95\linewidth]{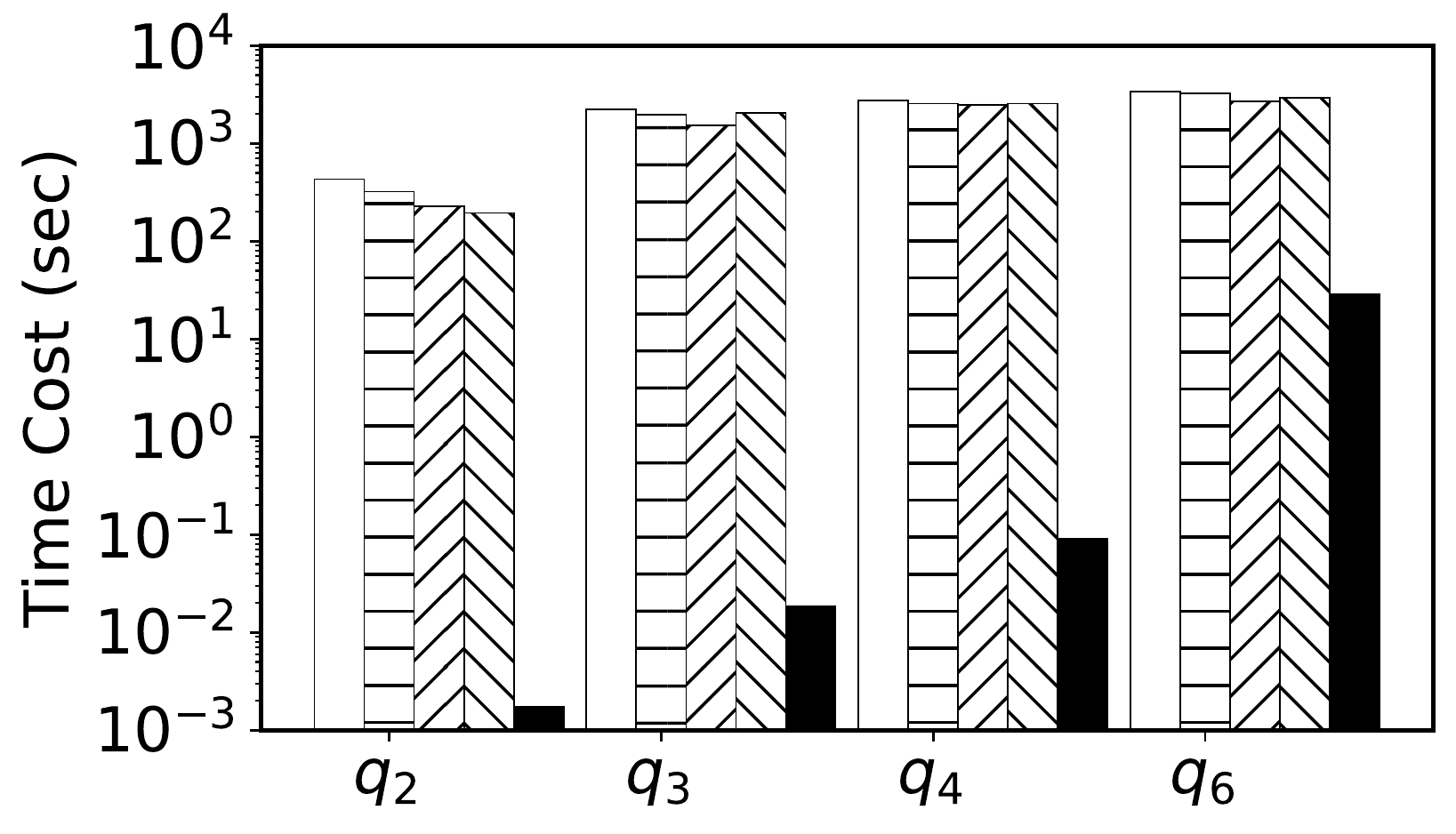}
        \caption{\SB}
    \end{subfigure}%
    ~
    \begin{subfigure}[b]{0.25\textwidth}
        \centering
        \includegraphics[width=0.95\linewidth]{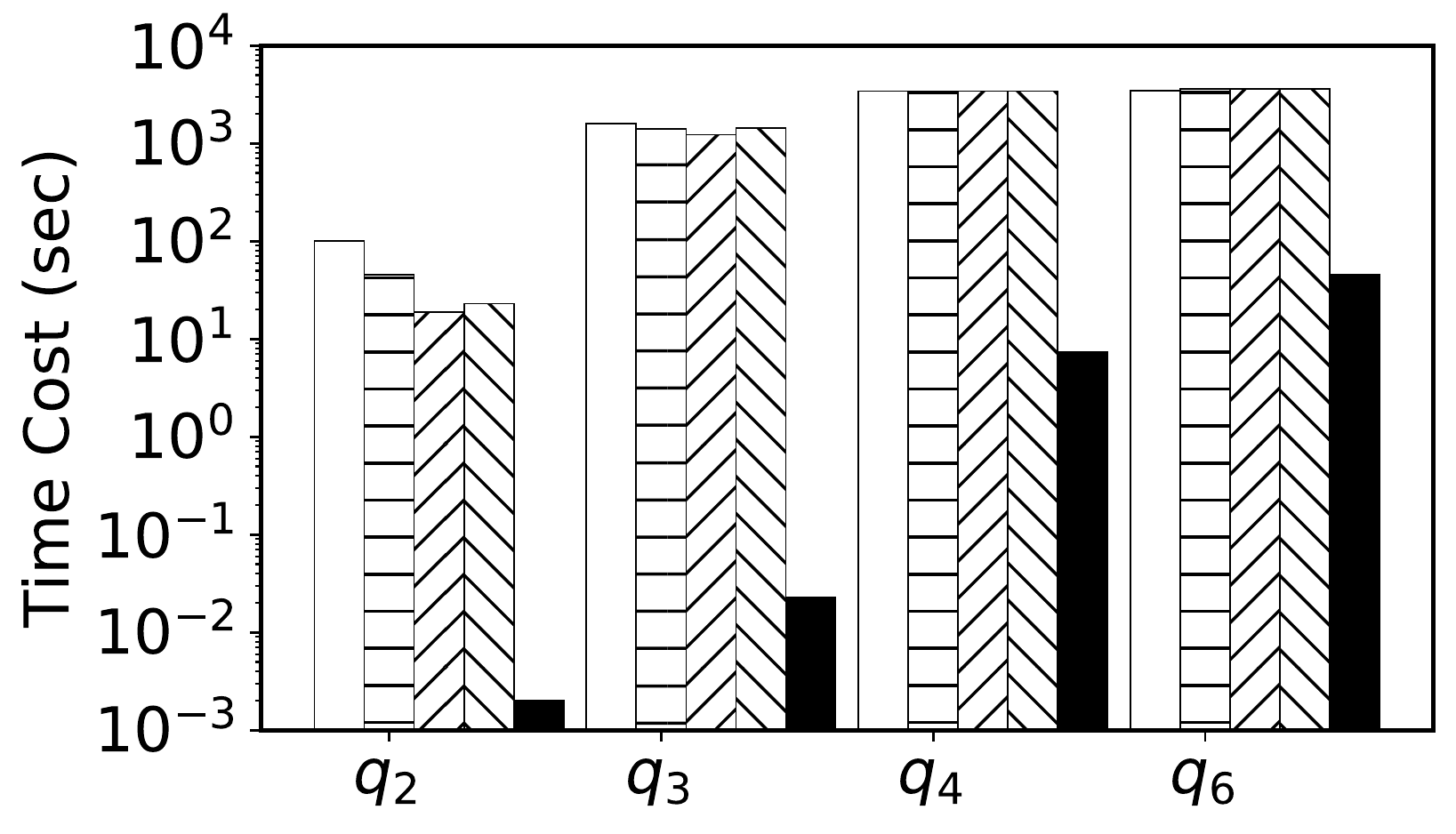}
        \caption{\HB}
    \end{subfigure}%
    \\
    \begin{subfigure}[b]{0.25\textwidth}
        \centering
        \includegraphics[width=0.95\linewidth]{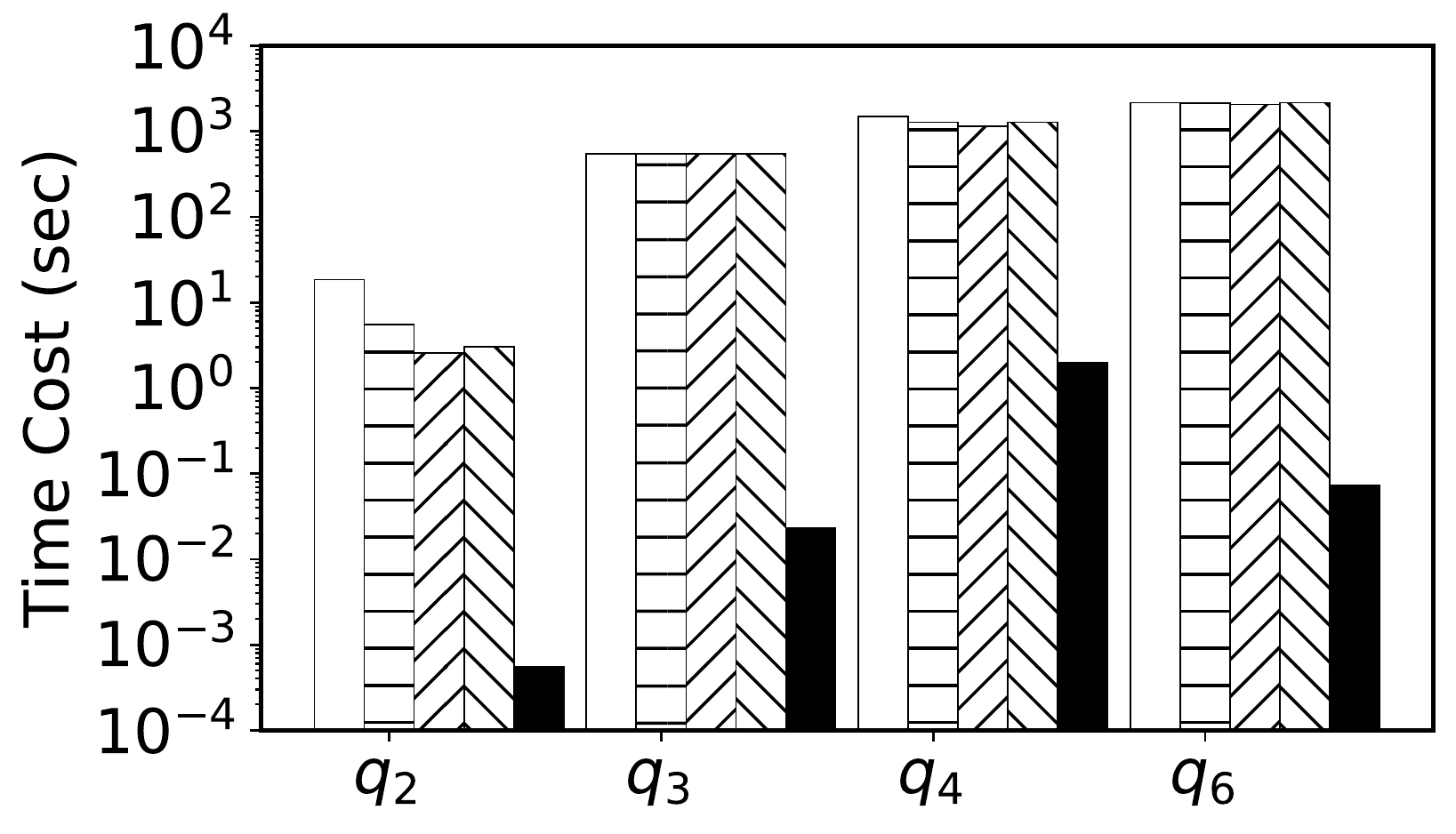}
        \caption{\WT}
    \end{subfigure}%
    ~
    \begin{subfigure}[b]{0.25\textwidth}
        \centering
        \includegraphics[width=0.95\linewidth]{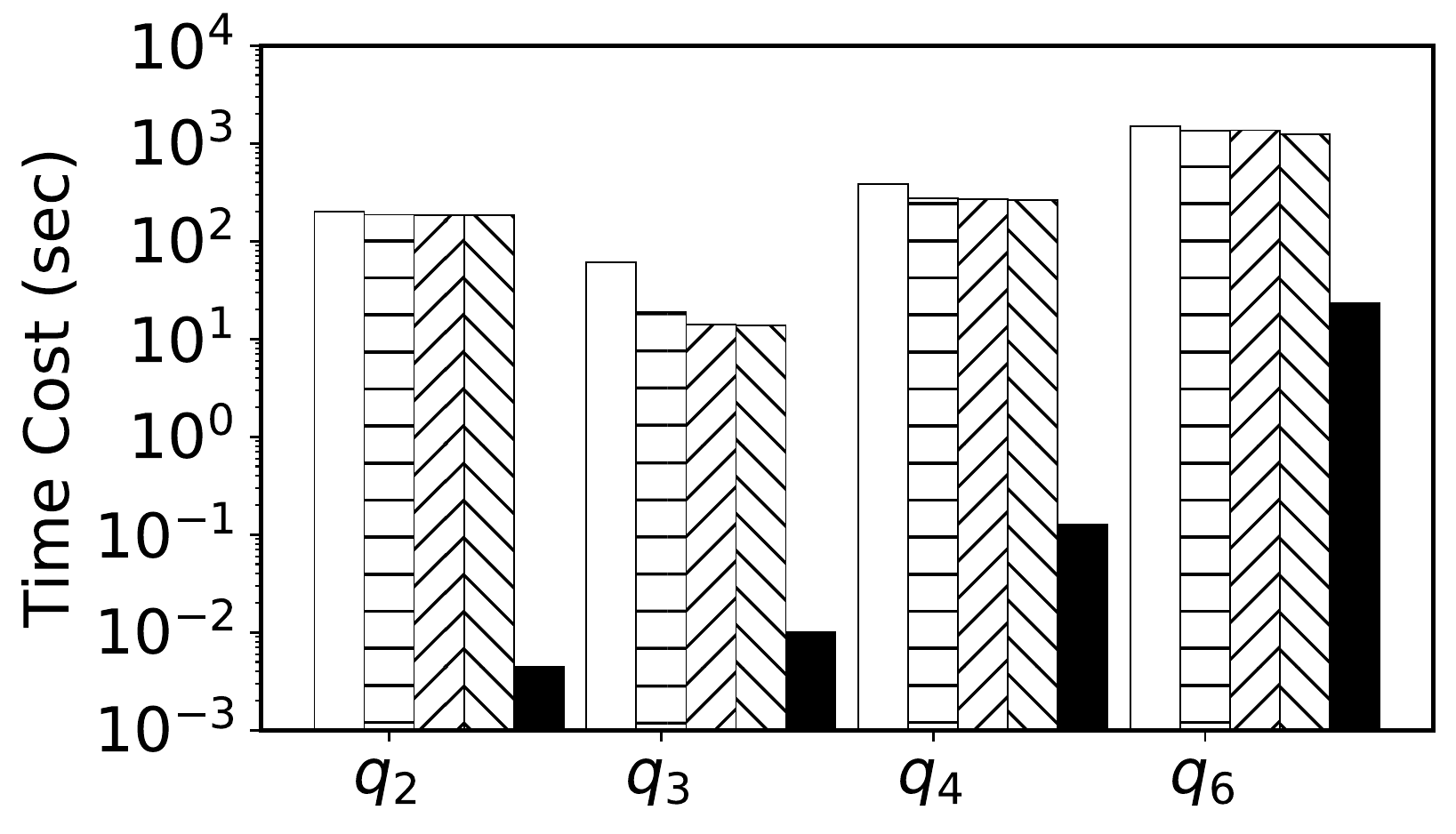}
        \caption{\TC}
    \end{subfigure}%
    ~
    \begin{subfigure}[b]{0.25\textwidth}
        \centering
        \includegraphics[width=0.95\linewidth]{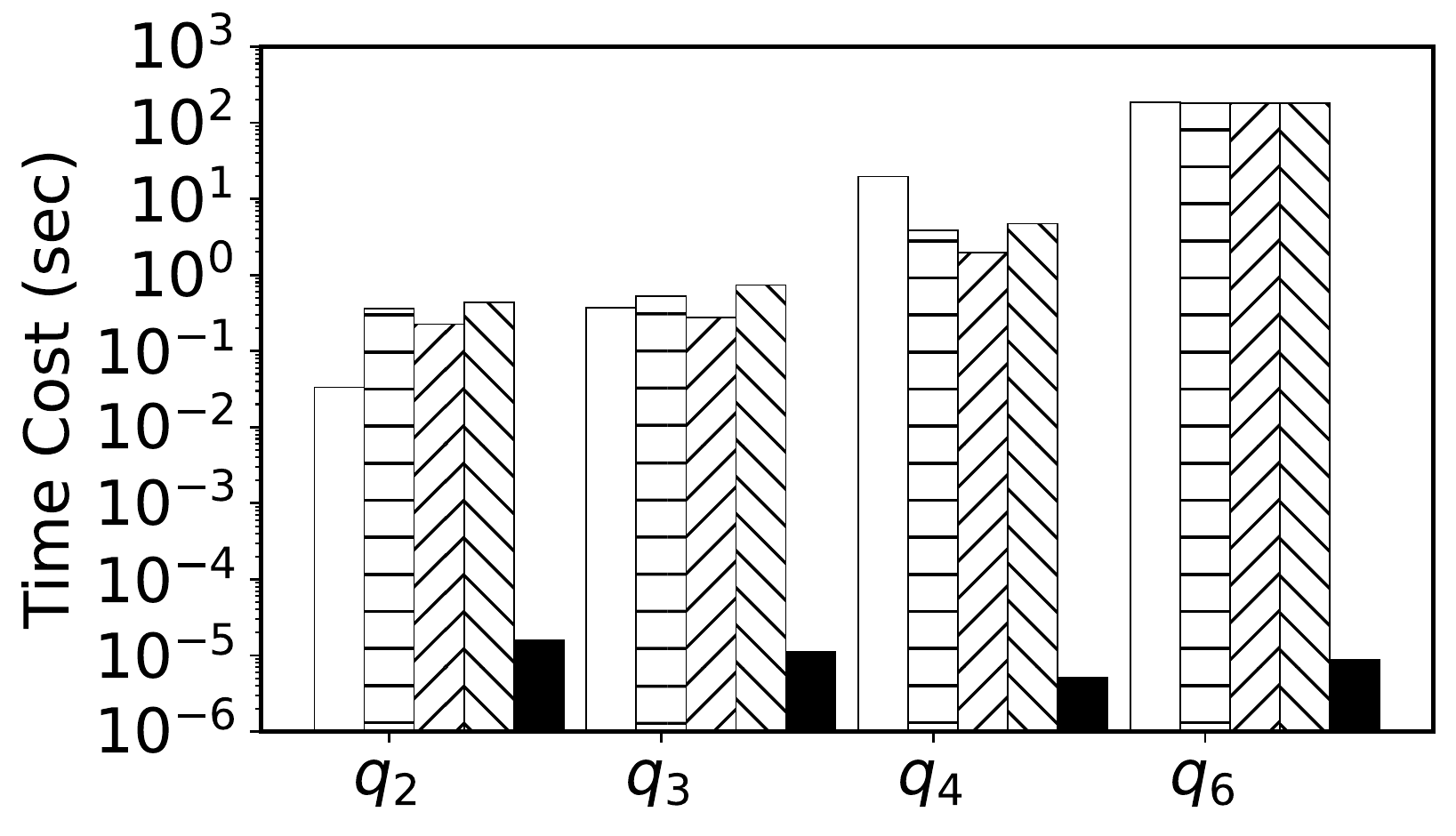}
        \caption{\SA}
    \end{subfigure}%
    \caption{Single-thread Comparisons.}
    \label{fig:all_cases}
\end{figure*}

\begin{figure*}[ht]
    \centering
    \begin{minipage}{0.35\linewidth}
        \centering
        \includegraphics[width=0.95\linewidth]{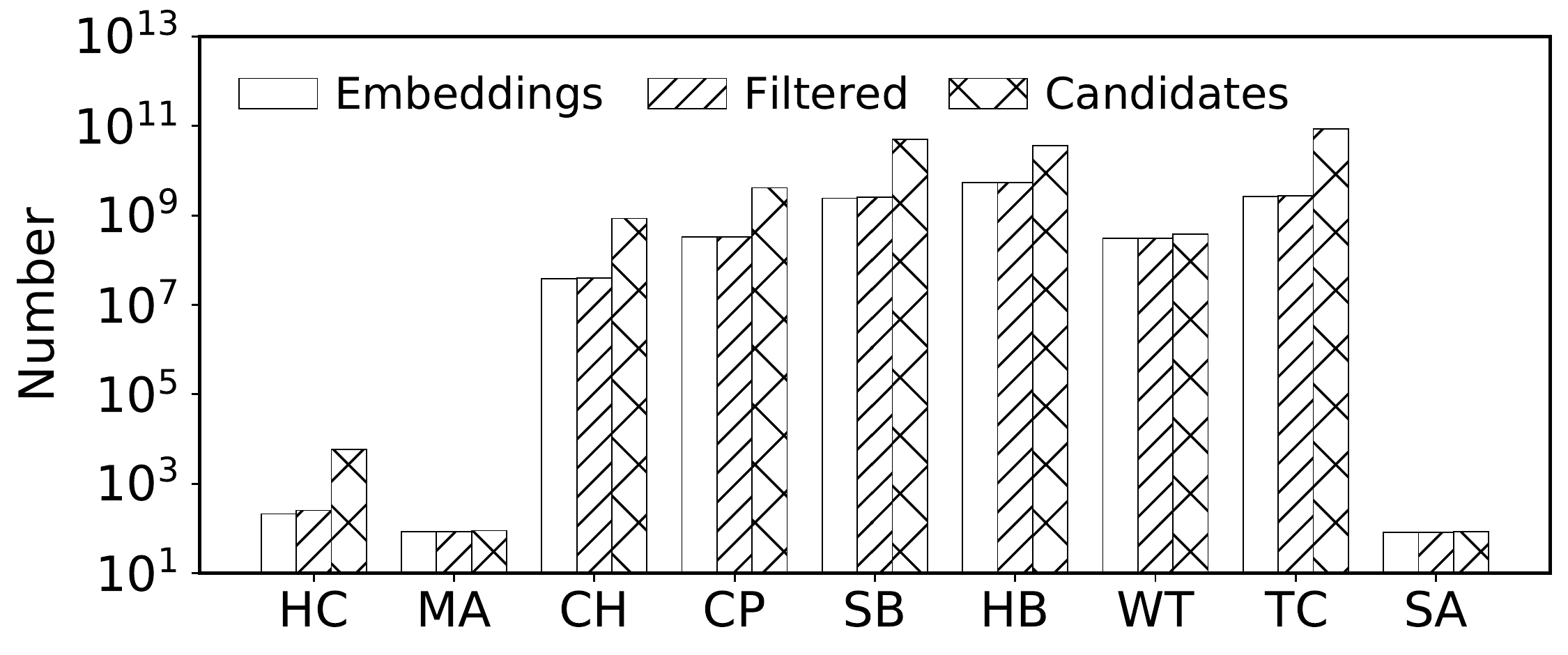}  
        \caption{Candidates Filtering.}
        \label{fig:filtering}
    \end{minipage}
    \begin{minipage}{0.35\linewidth}
         \centering
        \begin{subfigure}[b]{0.8\linewidth}
              \centering
              \includegraphics[height = 0.12in]{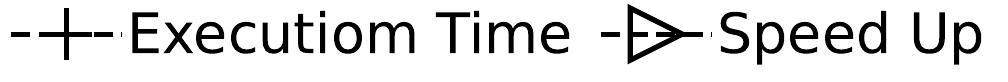}
        \end{subfigure}%
        \\
        \begin{subfigure}{.45\linewidth}
              \centering
              \includegraphics[width=\linewidth]{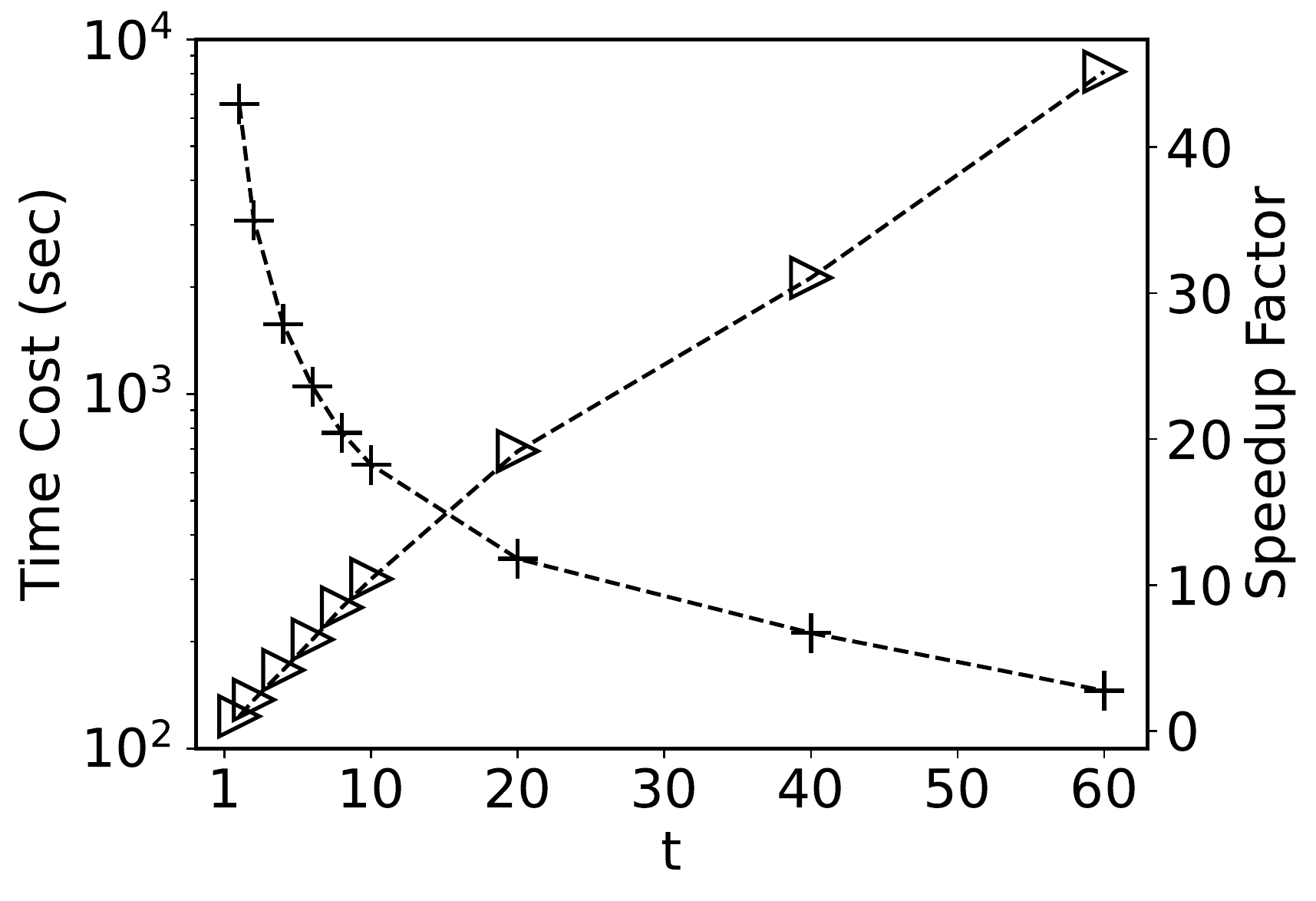}
              \caption{$q_{3}^{1}$}
        \end{subfigure} 
        ~
        \begin{subfigure}{.45\linewidth}
              \centering
              \includegraphics[width=\linewidth]{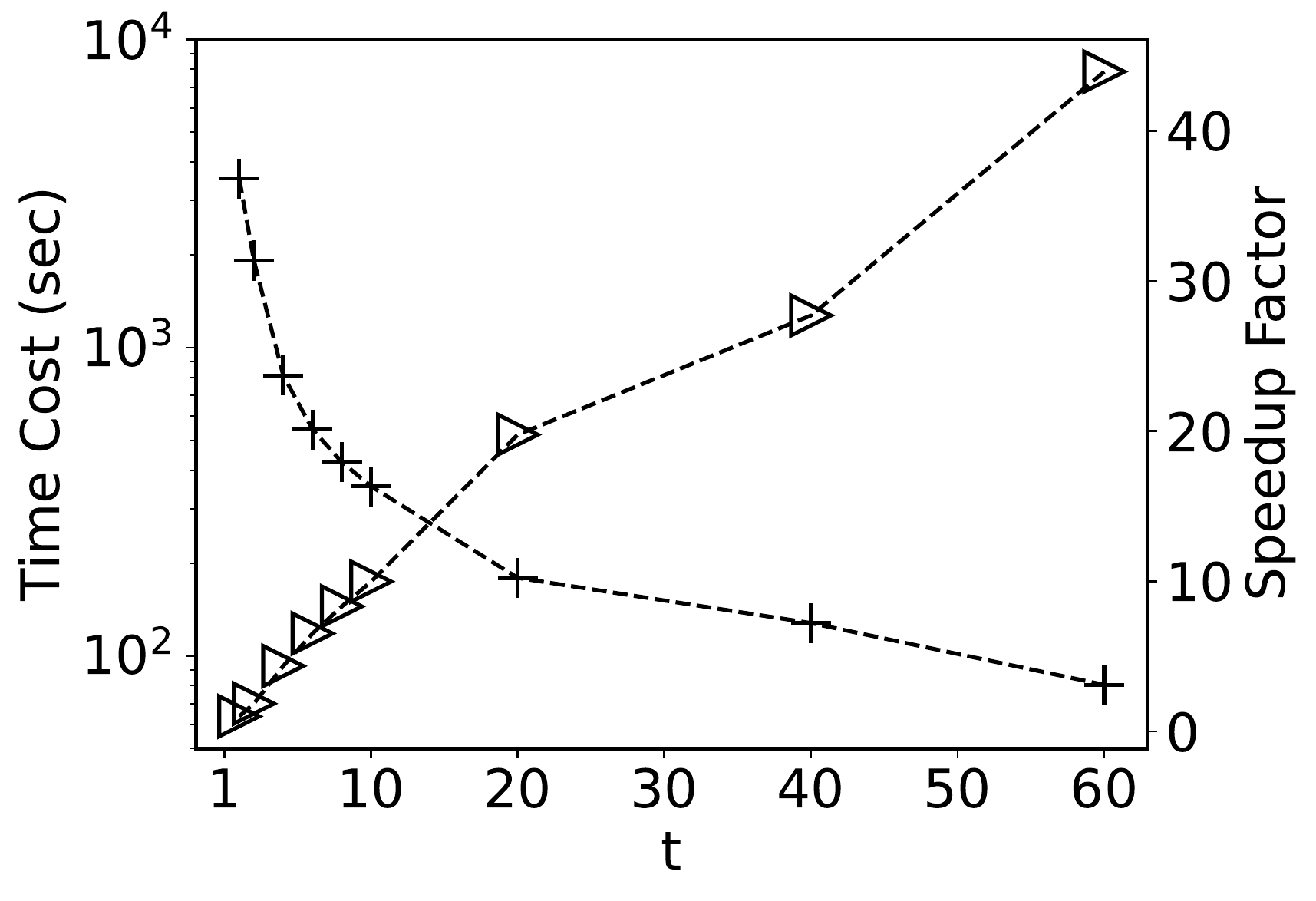}
              \caption{$q_{3}^{2}$}
        \end{subfigure}
        \caption{Vary Number of Threads.}
        \label{fig:scalability}
    \end{minipage}
\end{figure*}

\begin{figure*}[ht]
    \centering
    \begin{minipage}[b]{0.4\linewidth}
        \centering
        \includegraphics[width=0.8\linewidth]{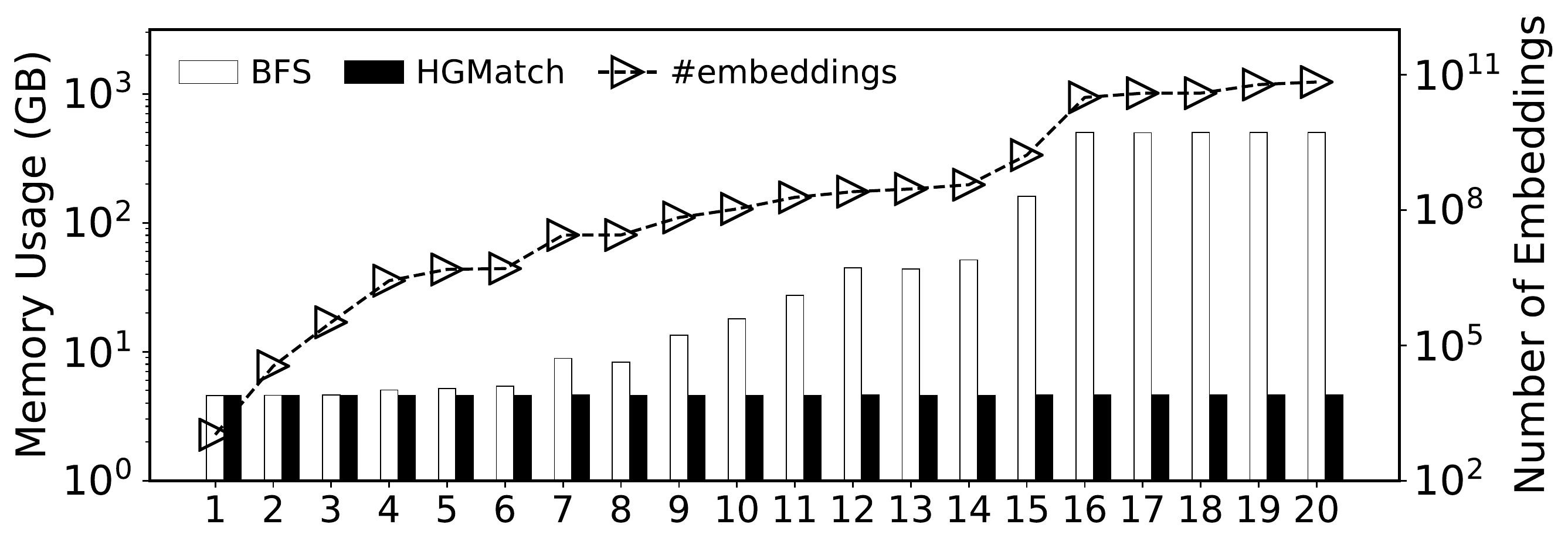}  
        \caption{Task-based Scheduling.}
        \label{fig:mem}
    \end{minipage}
    \begin{minipage}[b]{0.4\linewidth}
        \centering
        \includegraphics[width=0.8\linewidth]{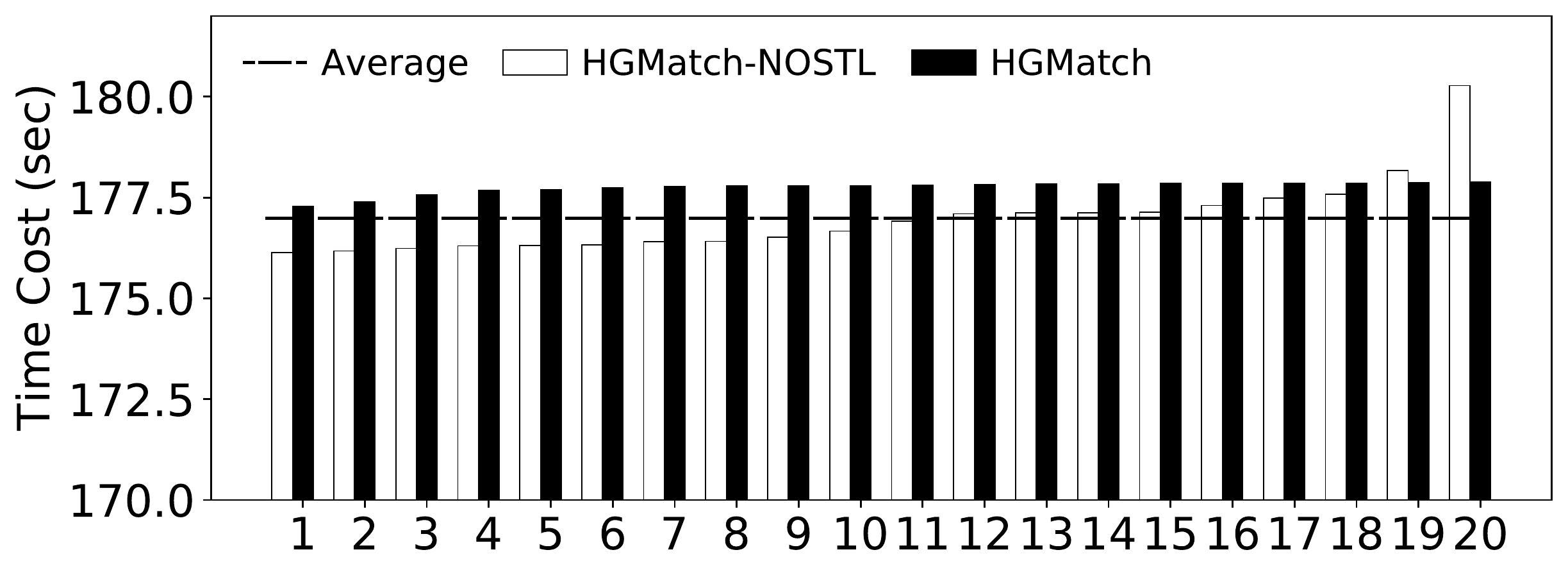}  
        \caption{Work Stealing.}
    \label{fig:work-stealing}
    \end{minipage}
    
\end{figure*}

\stitle{Exp-3: Candidates Filtering.}
In this experiment, we evaluate the pruning power of \hm's candidate generation (\refsec{cand_gen}) and embedding validation process (\refsec{emb_vali}). The results are given in \reffig{filtering}, where we count the number of true embeddings (denoted as `Embeddings'), the number of candidates after applying the vertex number check in Observation \ref{val_rule_1} (denoted as `Filtered'), and the number of candidates generation using \refalg{cand_gen} (denoted as `Candidates'). We draw the total number of all candidates of all queries for each data hypergraph. For data hypergraphs \MA and \SA, there are almost no false positive candidates in the candidate set due to a large number of labels. 
%
For other datasets with fewer labels, the candidate generation method can generate more false positive embeddings. However, only after a fast check of the number of vertices in the partial embedding, \hm are able to filter out the most majority of unpromising results. We observe that $97\%$ of the filtered results are true positive embeddings.
This is because, again, \hm can fully utilise the high-order information in hyperedges to prune candidates.
The results reveal the pruning power of \hm, which finally contributes to the significant speedup of existing algorithms.

\subsection{Parallel Comparisons} \label{sec:parallel_eval}

In this subsection, we evaluate \hm in the multi-thread environment. We use our largest dataset, \AR, as the data hypergraph as described and the queries in $q_3$ as the default queries.
%
Note that the original authors implementation of \daf\footnote{\url{https://github.com/SNUCSE-CTA/DAF}} and \ceci\footnote{\url{https://github.com/iHeartGraph/ceci-release}} support parallel execution. However, we do not compare them in our parallel experiments due to the errors in their code that causes segmentation faults and/or reports wrong numbers of embeddings. 


\stitle{Exp-4: Scalability.} 
We conduct a scalability test of \hm by varying the number of threads used for parallel execution.
We present the results of $2$ random queries from $q_3$ with a large number of embeddings.
We denote the two queries as $q_{3}^{1}$ and $q_{3}^{2}$.
Specifically, $q_{3}^{1}$ has about $3.86\times10^{10}$ results and $q_{3}^{2}$ has about $2.53\times10^{8}$ results.
We vary the number of threads from $1$ to $60$.
The results are shown in \reffig{scalability}.
\hm demonstrates almost perfect linear scalability when the number of threads is equal or below $20$ (i.e., $20\times$ speedup when using $20$ thread), thanks to the highly optimised parallel execution engine and dynamic load balancing mechanism.
When the number of threads is beyond $20$, the speedup factor slightly decreases due to non-uniform memory access (NUMA) and hyper-threading in the CPUs of our machine (i.e., $2$ physical CPUs with hyper-threading). In the future, we will investigate NUMA optimisations of \hm.

\stitle{Exp-5: Scheduling.} In this experiment, we evaluate the memory usage of \hm to test the effectiveness of its task-based scheduler (\refsec{scheduler}). 
We compare \hm's task-based scheduler with BFS-style scheduling using $20$ threads.
The memory usages and the number of embeddings of the $20$ random queries in $q_3$ are illustrated in \reffig{mem}.  
As the number of embeddings increases, memory usage grows rapidly for BFS-style scheduling.
The results indicate for queries with many results, the memory usage of BFS-style scheduling is significantly larger than \hm's task-based scheduler because of the materialisation of all intermediate results.
This can lead to out-of-memory errors in machines with smaller memory capacities or when querying complex queries. 
However, \hm's task-based scheduler keeps the memory usage bounded with stable memory consumption of around $4.8$GB for all $20$ queries while achieving almost linear scalability, as demonstrated in the previous experiment.


\stitle{Exp-6: Load Balancing.}
We further evaluate the effectiveness of \hm's dynamic work stealing mechanism. Due to the space limit, we present the results of $q_3^2$ in Exp-4 executed using $20$ threads. The running time of each worker (i.e.thread) is shown in \reffig{work-stealing}. Time is reordered to sort in ascending order for ease of illustration.
We compare \hm with the load balancing technique of assigning the load by the firstly matched hyperedges (denoted as `HGMatch-NOSTL' in the figure).
%
%
%
However, when dynamic work stealing is not applied, we observe load differences among different workers, especially for the last worker.
%
On the other hand, when dynamic work stealing is applied, \hm achieves a near-perfect load balancing (the dashed line) with little overhead.



\subsection{\red{Case Study}} \label{sec:case_study}

\begin{figure}[h]
    \centering
    \begin{subfigure}[b]{.4\linewidth}
        \centering
        \includegraphics[width=0.8\linewidth]{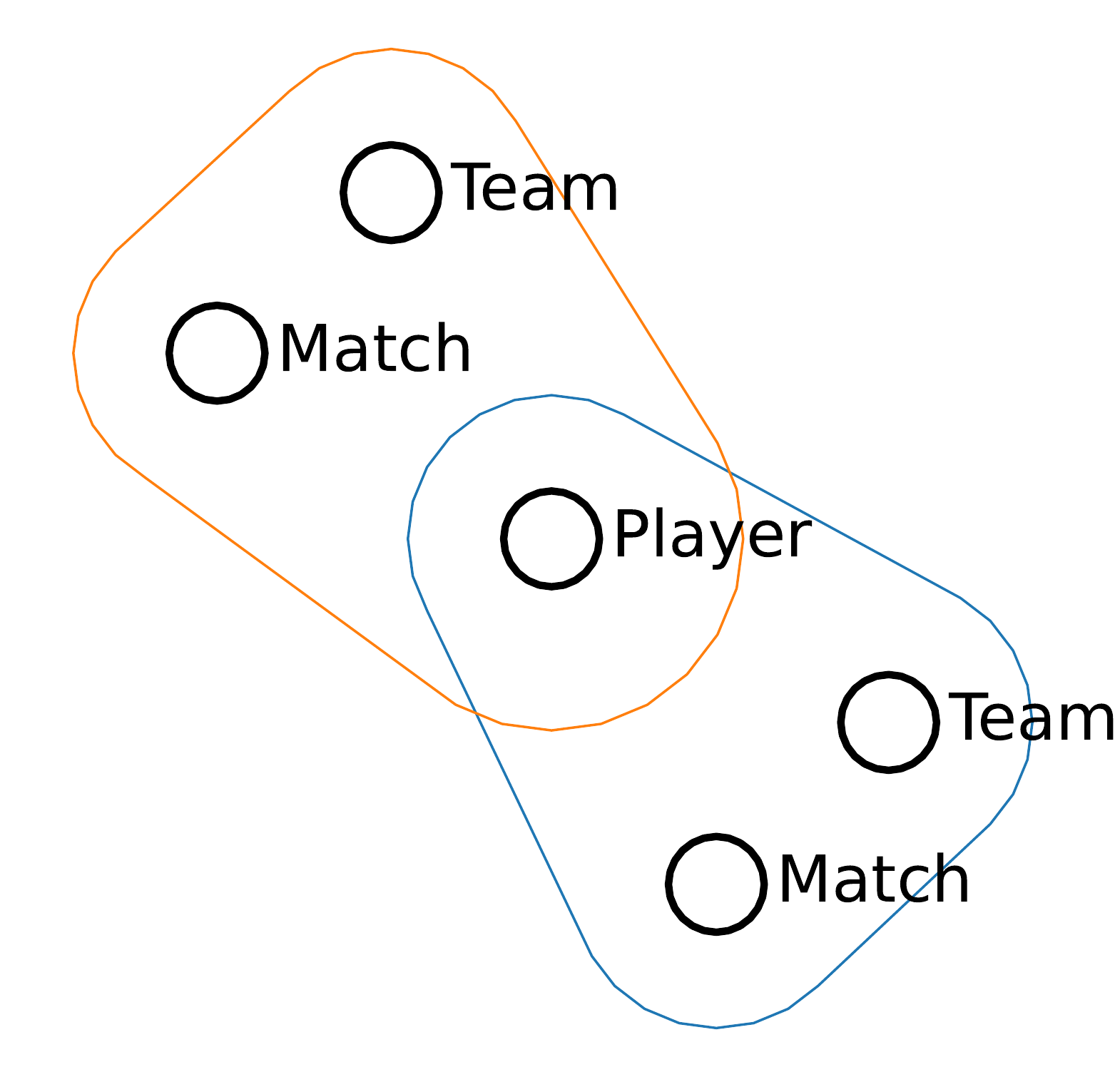}
        \caption{Example Query $1$}
        \label{fig:query1}
    \end{subfigure} %
    ~
    \begin{subfigure}[b]{.4\linewidth}
        \centering
        \includegraphics[width=0.8\linewidth]{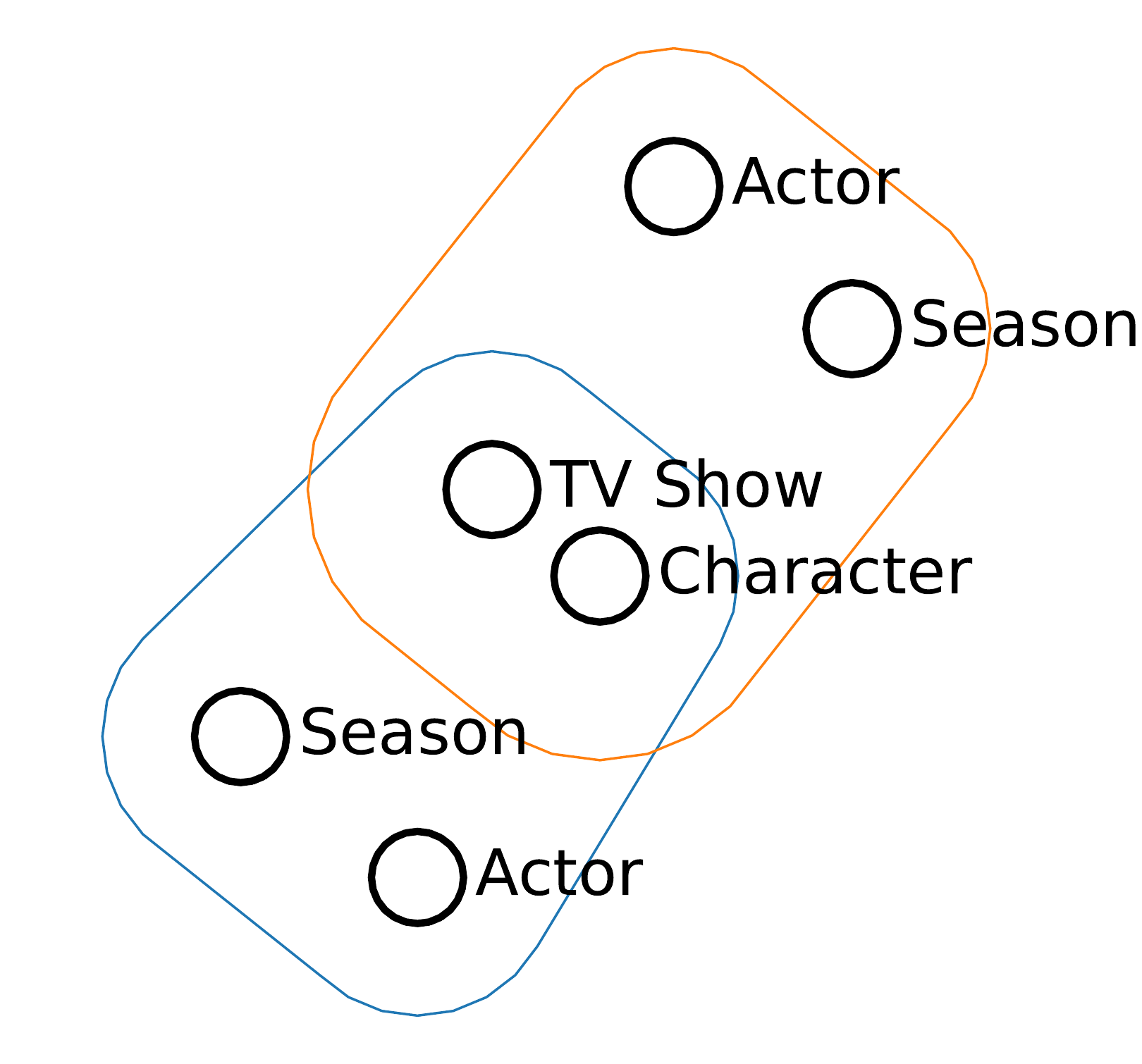}
        \caption{Example Query $2$}
        \label{fig:query2}
    \end{subfigure}
    \caption{\red{Example Queries in Hypergraph Knowledge Base.}}
    \label{fig:case_study}
\end{figure}

\red{
We demonstrate a case study of \sm on question answering over hypergraph knowledge base to help illustrate its applications.
We have conducted the case study on the \JF hypergraph knowledge base dataset \cite{Wen2016OnTR}, which is a small subset of non-binary relations extracted from the knowledge base Freebase \cite{Freebase}. The dataset is a hypergraph with the label for each vertex representing its type.
For example, the hyperedges with labels \textit{(Player, Team, Match)} indicate the fact that a football player played in a match representing a team. 
Another example of hyperedges with labels \textit{(Actor, Character, TV Show, Season)} means an actor played a character in the TV show on the season.
We present two example queries in \reffig{case_study} to answer real-world questions over the knowledge hypergraph by users with respect to the above example hyperedge types.
Query $1$ (\reffig{query1}) is to find the results of `Football players who represented different teams in different matches'. \hm finds $111$ embeddings of this query in the dataset. 
For instance, the football player Óscar Cardozo played for the Paraguay national football team in FIFA World Cup 2010, but played for the S.L. Benfica team in UEFA Europa League 2014.
Query $2$ (\reffig{query2}) is to find the results of `Actors who played the same character in a TV show on different seasons'. \hm finds $76$ embeddings of this query in the dataset. For instance, the actor Carlo Bonomi played the character Pingg in the TV show Pingu during seasons 1-4, and the same character was played by David Sant during seasons 5-6.
}

\section{Conclusion}\label{sec:conclusion}

We present \hm, an efficient parallel system for subhypergraph matching. 
Observed that the existing approaches with the match-by-vertex framework delay hyperedge verification and underutilise high-order information, we propose a novel match-by-hyperedge framework.
We store the data hypergraph in tables with a lightweight inverted hyperedge index and use set operations to efficiently generate hyperedge candidates.
%
%
A highly optimised parallel engine is developed, which adopts the dataflow model.
%
It features a task-based scheduler for bounded-memory execution and a dynamic work-stealing mechanism for load balancing. 
Experiments show that \hm significantly outperforms the baseline algorithms by orders of magnitude on average when using only a single thread and achieves almost linear scalability with near-perfect load balancing when the number of threads increases.

\iffullpaper 
\else

\section*{ACKNOWLEDGMENT}
Wenjie Zhang’s research is supported by the Australian Research Council Future Fellowship (FT210100303).

\fi

\newpage
\balance





%


\bibliographystyle{abbrv}
\bibliography{bib/references,bib/huge_sigmod}

\end{document}